\newlength\mylength
\newcolumntype{C}[1]{>{\centering\arraybackslash}p{#1}}
\newcolumntype{P}[1]{>{\centering\arraybackslash}p{#1}}
\newcommand{\zerodisplayskips}{%
  \setlength{\abovedisplayskip}{6pt}%
  \setlength{\belowdisplayskip}{6pt}%
  \setlength{\abovedisplayshortskip}{3pt}%
  \setlength{\belowdisplayshortskip}{3pt}}
\appto{\normalsize}{\zerodisplayskips}
\appto{\small}{\zerodisplayskips}
\appto{\footnotesize}{\zerodisplayskips}
\definecolor{darkblue}{rgb}{0.0, 0.0, 0.55}	
\definecolor{cobalt}{rgb}{0.0, 0.28, 0.67}
\renewcommand*{\backrefalt}[4]{%
    \ifcase #1 \footnotesize{(Not cited.)}%
    \or        \footnotesize{(Cited on page~#2)}%
    \else      \footnotesize{(Cited on pages~#2)}%
    \fi}
\newcommand{\eqname}[1]
\newcommand{%
	\scalebox{}{\input{}}
}[2]{%
	\scalebox{#1}{\input{#2}}
}
\newlength\FHoffset
\newlength\FHleft
\newlength\FHright
\newbox\FHline
\title{\textbf{\Large Minimax Optimal Quantization of Linear Models: Information-Theoretic Limits and Algorithms}}
\author{Rajarshi Saha, Mert Pilanci, and Andrea J. Goldsmith
\thanks{Rajarshi Saha and Mert Pilanci are with the Department of Electrical Engineering, Stanford University, Stanford, CA 94305, USA (Emails: rajsaha@stanford.edu, pilanci@stanford.edu).
Andrea J. Goldsmith is with the Department of Electrical and Computer Engineering, Princeton University, Princeton, NJ 08544, USA. (Email: goldsmith@princeton.edu).
}
\thanks{
This work was partially supported by the Office of Naval Research under grant ONR N00014-18-1-2191, the AFRL/AFOSR University Center of Excellence (COE), National Science
Foundation under grants DMS-2134248 and ECCS-2037304, Army Research Office Early Career Award W911NF-21-1-0242, Intel Research, Huawei Technologies, Facebook Research, Adobe Research, and Stanford SystemX Alliance.}}
\date{\today}
\begin{document}
\maketitle

\begin{abstract}
\vspace{4mm}

High-dimensional models often have a large memory footprint and must be quantized after training before being deployed on resource-constrained edge devices for inference tasks.
In this work, we develop an information-theoretic framework for the problem of quantizing a linear regressor learned from training data $(\Xv, \yv)$, for some underlying statistical relationship $\yv = \Xv\thetav + \vv$.
The learned model, which is an estimate of the latent parameter $\thetav \in \Real^d$, is constrained to be representable using only $Bd$ bits, where $B \in (0, \infty)$ is a pre-specified budget and $d$ is the dimension.
We derive an information-theoretic lower bound for the minimax risk under this setting and propose a matching upper bound using randomized embedding-based algorithms which is tight up to constant factors. 
The lower and upper bounds together characterize the minimum threshold bit-budget required to achieve a performance risk comparable to the unquantized setting.
We also propose randomized Hadamard embeddings that are computationally efficient and are optimal up to a mild logarithmic factor of the lower bound. 
Our model quantization strategy can be generalized and we show its efficacy by extending the method and upper-bounds to two-layer ReLU neural networks for non-linear regression.
Numerical simulations show the improved performance of our proposed scheme as well as its closeness to the lower bound.

\end{abstract}

\clearpage

\tableofcontents

\clearpage

\section{Introduction}
\label{sec:introduction}

The sizes of the trained models in high dimensional learning problems have witnessed a tremendous increase, easily consisting of at least $10^6$ parameters.
Deploying them on severely resource-constrained edge devices for inference tasks is becoming a significant challenge.
To mitigate this, quantization approaches aim to represent model parameters using low-precision (LP) formats, instead of the usual $32$ or $64$-bit precision floating point.
Apart from reducing the memory footprint, LP representations also facilitate low latency for real-time inference and low energy consumption \cite{gholami_2021}.
Even if memory limits on the edge device is not a constraint, in a different situation, models might be trained on one device, and the trained models need to be transmitted to a remote device.
For example, the Kepler space telescope collects flux data from stars and does onboard estimation of their light curves \cite{Jenkins_2010}.
The estimated parameters need to be transmitted to earth for further studies.
Here, the parameters need to be quantized owing to the bandwidth constraints imposed by the communication link.

In this work, we adopt an information-theoretic perspective for the problem of \textit{learning a quantized linear model} from \textit{training data}, $(\Xv, \yv) \in \Real^{n \times d} \times \Real^n$.
The rows of the \textit{feature matrix} $\Xv$, i.e. $\{\xv_i\}_{i=1}^{n} \in \Real^d$, constitute the \textit{input features}, and the entries of $\yv \in \Real^n$, i.e. $\{y_i\}_{i=1}^{n} \in \Real$ are the corresponding \textit{observations}.
In its most general form, our problem setup is presented in Fig. \ref{fig:learning_and_quantizing}.
We assume the existence of an \textit{oracle} that can provide us with $y_i \in \Real$ for $\xv_i \in \Real^d$, and the \textit{learner} observes a noisy version of this response.
This framework captures several problems like system identification \cite{system_identification}, data-driven control \cite{brunton_kutz_2019, HOU20133}, etc., where it becomes crucial to quantize the learned parameters to store them on (possibly remote) hardware for subsequent processing.
$\Xv$ may or may not be designed by the \textit{learner}.
We study methods to approximately solve the following \textbf{quantized least squares} (\textbf{QLS}) optimization problem subject to a finite cardinality constraint,

\begin{equation}
\label{eq:bit_constrained_least_squares}
\tag{\sc QLS}
    \thetatv := \argminimize_{\zetav \in \Svcal}\,\lVert \yv - \Xv\zetav \rVert_2^2,
\end{equation}
\begin{wrapfigure}{r}{0.55\linewidth}
    \centering
    \resizebox{1\linewidth}{!}{
        \begin{tikzpicture}[
    		node distance= 6em and 4em,
    		sloped]
            \tikzset{
    			box/.style = {
    				fill=blue!15,
    				shape=rectangle, 
    				rounded corners,
    				draw=blue!40, 
    				align=center,
    				text = black,
    				font=\fontsize{5}{5}\selectfont},
    			dummybox/.style = {
    				shape=circle,  
    				align=center, 
    				minimum size={width("rrrrrrrrrrr")+2pt}},
    			arrow/.style={
    				color=black,
    				draw=blue,
    				-latex,
    				font=\fontsize{4}{4}\selectfont},
    			state/.style={
    				rectangle split,
    				rectangle split parts=1,
    				%rectangle split part fill={red!30,blue!20},
    				rectangle split part fill={cyan!10},
    				rounded corners,
    				draw=blue, thick,
    				minimum height=4cm,
    				text width=2.4cm,
    				inner sep=4pt,
    				text centered,
    			}
	        }
	        
	        \node  (Oracle) [state] at (2.0,2.0) {\textbf{Oracle}};
	        \node  (Learner) [state] at (2.0,0.0) {\textbf{Learner}};
	        \draw[->, very thick] (0.67,0) node[] {} to[out=180,in=180] (0.67,2.0) node[]{};
	        \node (plus) [draw, circle] at (4.5,1.0) {$\mathbf{+}$};
	        \draw[->, very thick] (3.35,2.0) node[] {} to[out=0,in=90] (4.5,1.4) node[]{};
	        \draw[->, very thick] (4.5,0.6) node[] {} to[out=270,in=0] (3.35,0) node[]{};
	        \node (Noise) at (6.7,1.0) {$\textbf{Noise } \mathbf{\in \Real}$};
	        \draw[->, very thick] (5.7,1.0) node[] {} to[out=180,in=0] (4.9,1.0) node[]{};
	        \node (action) at (-1.45,1.0) {$\textbf{Action / Probe /}$};
	        \node (feature) at (-1.35,0.6) {$\textbf{ Features } \mathbf{\in \Real^d}$};
	        \node (Response) at (5.8,1.9) {$\textbf{Response } \mathbf{\in \Real}$};
	        \node (Observation) at (5.9,0) {$\textbf{Observation } \mathbf{\in \Real}$};
	        \node  (Edge_Device) [state] at (10.0,-0.85) {\textbf{Edge Device}};
	        \draw[-, very thick] (2.0,-0.29) node[] {} to[out=270,in=270] (2.0,-0.8) node[]{};
	        \draw[-, very thick] (2.0,-0.85) node[] {} to[out=0,in=180] (2.5,-0.85) node[]{};
	        \node (Deploy) at (5.3,-0.85) {$\textbf{Deploy } (\text{constrained to } dB \text{ bits})$};
	        \draw[->, very thick] (8.05,-0.85) node[] {} to[out=0,in=180] (8.66,-0.85) node[]{};
	        \node (Encoder) at (1.1, -0.65) {$(\text{Encoder})$};
	        \node (Decoder) at (10.0, -0.25) {$(\text{Decoder})$};
        \end{tikzpicture}
    }
    \vspace{0.5mm}
    \caption{Learning a Quantized Model}
    \vspace{2mm}
    \label{fig:learning_and_quantizing}
\end{wrapfigure}
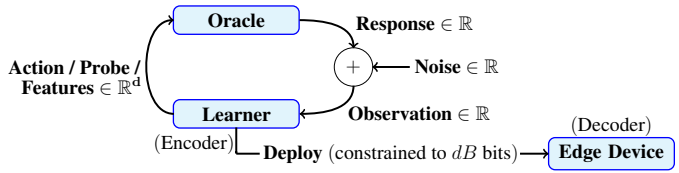

where $\Svcal \subseteq \Real^d$ is a finite set that satisfies $\log_2\lvert \Svcal \rvert \leq dB$ for some pre-specified $B \in (0, \infty)$.
The constraint $\zetav \in \Svcal$ denotes that we search for quantized models that can be represented using only $B$-bits per dimension.
Solving \eqref{eq:bit_constrained_least_squares} requires us to have a finite cardinality set $\Svcal$ (referred to as the \textbf{codebook}, or the \textbf{quantization lattice}), and subsequently find the best lattice point $\thetatv \in \Svcal$ that serves as our quantized model.
Although \eqref{eq:bit_constrained_least_squares} can be solved without any prior knowledge of the statistical relationship between $\Xv$ and $\yv$, for the purposes of deriving the tight lower and upper bounds, we assume that $\yv \in \Real^n$ is a linear observation obtained according to the \textit{noisy planted model} (\textbf{NPM}) as follows,
\begin{equation}
\label{eq:noisy_planted_model}
\tag{\sc NPM}
    \yv = \Xv\thetav + \vv; \hspace{2mm} \thetav \in \Thetav \subseteq \Real^d, \hspace{5mm} \vv \sim \Ncal(\mathbf{0}, \sigma^2\Iv_n),
\end{equation}

for some \textit{true model} (or \textit{latent parameter}) $\thetav \in \Real^d$ that we want to learn.
\eqref{eq:noisy_planted_model} is a reasonable model when $\yv$ is corrupted by several noise sources, whose superposition is approximated by a Gaussian noise due to the Central Limit Theorem.
To begin with, we derive an information-theoretic lower bound on the minimax $\ell_2$-risk of any estimate $\thetatv$ of $\thetav$ in \S \ref{sec:lower_bound_to_the_minimax_risk}, followed by several model quantization algorithms and their comparison in terms of $\ell_2$-error and computational complexity in \S \ref{sec:algorithms_for_model_quantization}.
Numerical evaluations are presented in \S \ref{sec:numerical_simulations}.
We also extend our model quantization scheme for general noise settings in App. \ref{app:general_noise_models}, as well as for obtaining an upper bound on the output error of a $2$-layer densely connected ReLU neural network for non-linear regression in App. \ref{app:NN_analysis}.

\subsection{Significance and Related Work}
\label{sec:related_work}

Model compression techniques employ lossless and lossy source coding schemes to quantize the model parameters subject to bit-budget constraints in order to make them deployable on memory-constrained devices \cite{transform_quant_young_2021, han2016deep}.
We develop an information-theoretic framework for the problem of quantizing linear regressors.
Related to our work, \cite{gao_Rate_distortion_model_compression} adopts a rate-distortion theoretic approach to answer the question: \textit{What is the minimum number of bits required to compress a model to achieve a target distortion?}
Their result for linear models \cite{gao_Rate_distortion_model_compression}[\S 4.1] assumes the model parameters follow a Gaussian distribution and the different input features are uncorrelated.
Similarly, \cite{isik_2022_rate_distortion} provides empirical evidence to approximate the distribution of model parameters as an i.i.d. Laplacian, and consequently proposes a weight pruning scheme to compress a model without deteriorating performance.
However, modeling parameters as an i.i.d. sequence also ignores correlations between them, leading to sub-optimal performance in worst-case scenarios.

In contrast to these works, we answer the question: \textit{What is the minimum achievable risk when a model is quantized according to a \textbf{pre-specified} bit budget of $B$ bits per dimension?}
Moreover, we adopt a minimax approach, and we do not assume any prior distributional knowledge about the latent parameter $\thetav$ or the feature matrix $\Xv$.
Consequently, our proposed model quantization strategies are minimax optimal, i.e. optimal in a worst-case sense.
Our work can be seen as a generalization of Gaussian sequence models \cite{zhu_neurips_2014} to general least-squares regression models.

We also propose randomized subspace embedding-based model quantization algorithms that have a near optimal performance (up to constant factors) with respect to our minimax lower bound.
We note that our lower bound expression is comprised of two terms that can be interpreted as the sum of the \textit{unavoidable learning risk} that is persistent even in the absence of any bit-budget constraint, and an \textit{excess quantization risk} that depends on the bit-budget.
This suggests that we can have tight upper bounds that also have this separation property, by first learning the model parameters agnostic to any bit-budget constraints, and then subsequently quantizing them. 
We use the idea of democratic (or Kashin) representations \cite{lyubarskii, studer2012, safaryan2021, DSC_early_access, chen_2020_breaking_trilemma} to show that this hypothesis is indeed true.

\section{Model Quantization: An Information-Theoretic Perspective}
\label{sec:model_quantization_framework}

As introduced in \S \ref{sec:introduction}, we consider the model \ref{eq:noisy_planted_model} specified by $\yv = \Xv\thetav + \vv$, where $\thetav \in \Real^d$ is the underlying (\textit{latent}) model parameter to be estimated.
We do not make any distributional assumptions on the entries of $\Xv$ or $\thetav$.
$\Xv$ can be completely arbitrary and each $\theta_i \in \Real$ of the fixed parameter $\thetav \equiv [\theta_1, \ldots, \theta_d]^\top$ can take arbitrary values.
Given \textit{training data} $(\Xv, \yv)$, our goal is to learn and quantize an estimate of $\thetav$ that can be represented using $B$-bits per dimension.
We consider the model parameter space to be a Euclidean ball, i.e. $\thetav \in \Thetav = \{\thetav : (1/d)\lVert \thetav \rVert_2^2 \leq c^2\}$ for some known constant $c \in (0,\infty)$.
Let $\thetatv \equiv \thetatv(\Xv, \yv) \in \Thetav$ be a quantized estimator of $\thetav$.
For any $\thetatv$, we define the expected $\ell_2$-risk be,
\begin{equation}
    \label{eq:l2_risk_definition}
    R(\thetatv, \thetav) = \mathbb{E}_{\yv}\left[\frac{1}{d}\lVert \thetatv - \thetav \rVert_2^2\right],
\end{equation}
where the expectation $\mathbb{E}[\cdot]$ is over the randomness in $\yv$ due to noise $\vv$.
Since $B \in (0, \infty)$, we consider both, when $B < 1$ (sub-linear regime) and $B > 1$, when at least $1$ bit is available per dimension.

\subsection{An Encoder-Decoder Framework}
\label{subsec:encoder_decoder_framework}

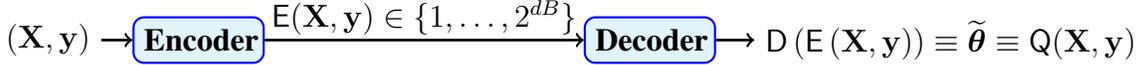
\begin{figure*}[t!]
	\centering
	%\scalebox{.8}
	{
		\begin{tikzpicture}[
		node distance= 6em and 4em,
		sloped]
		\tikzset{
			box/.style = {
				fill=blue!15,
				shape=rectangle, 
				rounded corners,
				draw=blue!40, 
				align=center,
				text = black,
				font=\fontsize{10}{10}\selectfont},
			dummybox/.style = {
				shape=circle,  
				align=center, 
				minimum size={width("rrrrrrrrrrr")+2pt}},
			arrow/.style={
				color=black,
				draw=blue,
				-latex,
				font=\fontsize{8}{8}\selectfont},
			state/.style={
				rectangle split,
				rectangle split parts=1,
				%rectangle split part fill={red!30,blue!20},
				rectangle split part fill={cyan!10},
				rounded corners,
				draw=blue, thick,
				minimum height=4cm,
				text width=1.45cm,
				inner sep=4pt,
				text centered,
			}
		}
	
    	\node  (Encoder) [state] at (2.0,0) {\textbf{Encoder}};
    	\node (input) at (0,0) {{$(\Xv, \yv)$}};
    	\draw[->,line width=1pt] (input) -- (Encoder);
    	\node (index) at (5.0,0.3) {{$\mathsf{E}(\Xv,\yv) \in \{1, \ldots, 2^{dB}\}$}};
    	\node  (Decoder) [state] at (8.0,0) {\textbf{Decoder}};
    	\draw[->,line width=1pt] (Encoder) -- (Decoder);
    	\node (Learned_model) at (12.0,0) {{$\mathsf{D}\left(\mathsf{E}\left(\Xv,\yv\right)\right) \equiv \thetatv \equiv \Qsf(\Xv, \yv)$}};
    	\draw[->,line width=1pt] (Decoder) -- (Learned_model);
		\end{tikzpicture}}
		\vspace{2mm}
\caption{An $(n,d,B)$-\textbf{learning code} represented as a composition of \textbf{Encoder} and \textbf{Decoder}}
\label{fig:encoder_decoder}
\end{figure*}

To view model quantization from an information-theoretic lens, we propose an \textbf{encoder-decoder} framework (Fig. \ref{fig:encoder_decoder}) to describe the class of algorithms that jointly learn and quantize the latent parameter $\thetav$ from training data $(\Xv, \yv)$.
Its efficacy becomes apparent when we use it in \S \ref{sec:lower_bound_to_the_minimax_risk} to derive the minimax lower bound over the class of algorithms described by it.
This general framework is applicable irrespective of whether $(\Xv, \yv)$ is generated according to \eqref{eq:noisy_planted_model}, or any other model.

In the absence of any quantization, any \textbf{\textit{learning algorithm}} is a mapping $\Real^{n \times d} \times \Real^n \to \Thetav$.
With quantization, we study algorithms that \textit{jointly learn and quantize} $\thetav$, and we refer to such algorithms as \textbf{\textit{learning codes}} (formally defined in Def. \ref{def:ndB_learning_code} below).
We use the term ``code" because, any algorithm that learns a quantized estimate of $\thetav$ can be perceived as a composition of an \textbf{encoder} and a \textbf{decoder} function.
In Fig. \ref{fig:encoder_decoder}, the \textit{encoder} $\Esf:\Real^{n \times d} \times \Real^n \to [2^{dB}] \triangleq \{1, \ldots, 2^{dB}\}$ maps the data $(\Xv, \yv)$ to an index $i \in [2^{dB}]$.
The \textit{decoder} $\Dsf:[2^{dB}] \to \Svcal \subseteq \Thetav$ maps the index $i \in [2^{dB}]$ to a quantized parameter in $\Svcal \subseteq \Thetav$, $|\Svcal| \leq 2^{dB}$.
The encoder is implemented at the learner, and the decoding happens on the edge device (cf. Fig. \ref{fig:learning_and_quantizing}).
The codebook $\Svcal$ forms a \textit{quantization lattice} in $\Real^d$.
It is designed beforehand, and is known (or can be constructed) at both the encoder and the decoder.
The goal of a \textit{learning code} is to find the lattice point in $\Svcal$ that best estimates $\thetav$.
\begin{definition}
\label{def:ndB_learning_code}
An $(n,d,B)$-\textbf{learning code} $\Qsf:\Real^{n \times d}\times \Real^n  \to \Thetav$ is defined to be the composition of \textit{encoder} and \textit{decoder} mappings $\Esf(\cdot, \cdot):\Real^{n \times d} \times \Real^n \to[2^{dB}]$ and $\Dsf:[2^{dB}] \to \Thetav$, such that for any given training data $(\Xv, \yv) \in \Real^{n \times d} \times \Real^n$, we have $\Qsf(\Xv,\yv) \equiv \Dsf(\Esf(\Xv, \yv)) \in \Thetav$.
\end{definition}
%
\iffalse
In \S \ref{subsec:minimax_framework} next, we develop a minimax framework to compare the performance of learning codes.
\fi

\subsection{A Minimax Risk Formulation for Comparing Learning Codes}
\label{subsec:minimax_framework}

Given training data $(\Xv, \yv)$, let $\thetatv \equiv \Qsf(\Xv, \yv)$ be the output of an $(n,d,B)$-learning code $\Qsf$.
The performance of any such learning code can be measured by its worst-case risk for all $\thetav$ over the parameter space $\Thetav$ as $\Rcal_{\Qsf} \triangleq \sup_{\thetav \in \Thetav}R(\Qsf(\Xv, \yv), \thetav)$, where $R(\cdot, \cdot)$ is the $\ell_2$-risk as defined in eq. \eqref{eq:l2_risk_definition}.
Let $\Qcal_{n,d,B}$ be the class of all $(n,d,B)$-learning codes.
In a minimax formulation, our goal is to find a learning code that achieves the best possible worst-case risk and, hence, an information-theoretic limit corresponds to a lower bound on
\begin{equation}
\label{eq:asymptotic_minimax_risk}
    \Rcal_{B, c} \triangleq \inf_{\Qsf \in \Qcal_{n,d,B}} \Rcal_{\Qsf} \equiv \inf_{\Qsf \in \Qcal_{n,d,B}}\sup_{\thetav : \lVert \thetav \rVert_2 \leq c\sqrt{d}} R(\Qsf(\Xv, \yv), \thetav).
\end{equation}
In \S \ref{sec:lower_bound_to_the_minimax_risk}, we derive a lower bound on \eqref{eq:asymptotic_minimax_risk} giving us a performance limit of the best possible learning code (in a global minimax sense) that we can hope to come up with under this setting.
Subsequently, in \S \ref{sec:algorithms_for_model_quantization}, we propose several model quantization schemes that achieve a performance matching the derived lower bound, establishing their optimality as well as the tightness of the lower bound.

\section{Lower Bound for the Minimax Risk}
\label{sec:lower_bound_to_the_minimax_risk}

We now present a lower bound for the minimax risk \eqref{eq:asymptotic_minimax_risk} in estimating $\thetav$ from the linear model \eqref{eq:noisy_planted_model} under quantization constraints.
The proof sketch in this section is outlined via Lemmas \ref{lem:mutual_information_bit_budget_inequality}, \ref{lem:inequality_supremum_bayes_error}, and \ref{lem:lower_bound_MI_Gaussian_prior}, building up to our lower bound in Thm. \ref{thm:minimax_error_lower_bound}.
Inspired from \cite{zhu_neurips_2014}, our proof technique relies on the following two key intuitive observations:
\begin{enumerate}[leftmargin=*]
    \item Given $\Xv$, for any learning code $\Qsf$, the mutual information between $\yv$ and $\Qsf(\Xv, \yv)$ is a reasonable proxy for the number of bits $B$ used to design the corresponding encoder-decoder pair $(\Esf, \Dsf)$; in other words, the number of bits required for representing the estimate of $\thetav$.
    The marginal distribution of $\yv$ is determined by $\Xv$, $\thetav$, and $\vv$ according to \eqref{eq:noisy_planted_model}, and, for a given $\Xv$, the distribution of the output of the learning code $\thetatv \equiv \Qsf(\Xv, \yv)$ is determined solely by the marginal of $\yv$ and the conditional distribution of $\thetatv$ given $\yv$.
    This idea follows from the \textit{information-rate distortion function} in rate distortion theory literature \cite{info_theory_book}[Ch. 10], and is formalized in Lemma \ref{lem:mutual_information_bit_budget_inequality}.
    \item The worst-case risk for any learning code is the Bayes risk for the least favorable choice of prior distribution over $\thetav$, i.e. given $\Qsf$, $\Rcal_{\Qsf}= \sup_{\thetav \in \Thetav}R(\Qsf(\cdot, \cdot), \thetav)$ is equal to the Bayes risk when the prior is chosen adversarially.
    In other words, $\Rcal_{\Qsf} $ is lower bounded by the Bayes risk \textit{for any prior}.
\end{enumerate}
We combine the two observations above as follows.
Since the worst-case risk is lower bounded by the Bayes risk, for any distribution $\thetav \sim P(\thetav)$ and any $(n,d,B)$-learning code $\Qsf$, we have,
\begin{equation}
\label{eq:worst_case_and_Bayes_error_inequality}
    \Rcal_{\Qsf} \equiv \sup_{\thetav \in \Thetav}R(\Qsf(\Xv,\yv), \thetav) \geq \int_{\thetav \in \Thetav}R(\Qsf(\Xv, \yv), \thetav)dP(\thetav)
\end{equation}
We then choose the prior $P(\thetav)$ in \eqref{eq:worst_case_and_Bayes_error_inequality} appropriately, so that the Bayes risk can be evaluated (or at least lower bounded) by an expression that demonstrates its dependence on various problem parameters.
We consider a lower bound on the asymptotic setting of $d \to \infty$.
However, we discuss methods for extending our lower bound to non-asymptotic settings\footnote{When $d$ is finite, the prior is chosen to be $\Ncal(\mathbf{0}, \sigma^2\delta^2\Iv_d)$ for some $\delta \in (0,1)$.
The optimal choice of $\delta$ in Lemmas \ref{lem:inequality_supremum_bayes_error} and \ref{lem:lower_bound_MI_Gaussian_prior} needs to be chosen numerically, which does not yield neat closed form expressions.} in App. \ref{app:finite_dimension_lower_bound}.
For our setting, the appropriate choice turns out to be the Gaussian prior $\thetav \sim \Ncal(\mathbf{0}, c^2\Iv_d)$, stated in Lemma \ref{lem:lower_bound_MI_Gaussian_prior}.
Lemma \ref{lem:inequality_supremum_bayes_error} formally states a stronger version of \eqref{eq:worst_case_and_Bayes_error_inequality}, in which the inequality between the worst-case risk and the Bayes risk holds true even when the latter is evaluated over the whole of $\Real^d$ instead of just $\Thetav$, due to concentration properties of the Gaussian prior.
We formalize the above sketch in Lemmas \ref{lem:mutual_information_bit_budget_inequality}, \ref{lem:inequality_supremum_bayes_error}, and \ref{lem:lower_bound_MI_Gaussian_prior} below.
Their detailed proofs are delegated to App. \ref{app:proof_mutual_information_bit_budget_inequality}, \ref{app:proof_inequality_supremum_bayes_error}, and \ref{app:proof_lower_bound_MI_Gaussian_prior} in that order.
\begin{lemma}
\label{lem:mutual_information_bit_budget_inequality}
Let the true model be drawn from some distribution $\thetav \sim P(\thetav)$, and given $\Xv$, let the output $\yv \in \Real^n$ be generated from $\thetav$ according to some conditional distribution $p(\yv \vert \thetav)$ specified by $\yv = \Xv \thetav + \vv$, $\vv \sim \Ncal(\mathbf{0}, \sigma^2\Iv_n)$.
Then for any $\Qsf \in \Qcal_{n,d,B}$, we have $I(\yv; \Qsf(\Xv, \yv)) \leq dB$.
\end{lemma}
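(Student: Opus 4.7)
The plan is to exploit the fact that $\Qsf(\Xv) = \Dsf(\Esf(\Xv))$ factors through the discrete index $\Esf(\Xv) \in [2^{dB}]$, so that the information in $\Qsf(\Xv)$ about $\Xv$ is at most the information carried by a single draw from an alphabet of size $2^{dB}$. Concretely, I would invoke three standard information-theoretic facts in sequence.

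First, since $\Qsf(\Xv)$ is a deterministic function of $\Esf(\Xv)$, the data processing inequality applied to the Markov chain $\Xv \to \Esf(\Xv) \to \Dsf(\Esf(\Xv)) = \Qsf(\Xv)$ yields
\[
I(\Xv; \Qsf(\Xv)) \;\leq\; I(\Xv; \Esf(\Xv)).
\]
Second, because $\Esf(\Xv)$ is a discrete random variable, mutual information is dominated by entropy: $I(\Xv; \Esf(\Xv)) \leq H(\Esf(\Xv))$. Third, $\Esf(\Xv)$ takes values in the finite alphabet $[2^{dB}]$ of cardinality $2^{dB}$, so the maximum-entropy bound for discrete distributions gives
\[
H(\Esf(\Xv)) \;\leq\; \log_2\lvert [2^{dB}] \rvert \;=\; dB,
\]
with all quantities measured in bits. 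Chaining these three inequalities yields $I(\Xv; \Qsf(\Xv)) \leq dB$, as claimed.

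There is no substantive obstacle here; the only subtlety is bookkeeping the base of the logarithm so that the entropy bound lines up with the bit-budget $dB$, and being explicit that the data processing step is justified by $\Dsf$ being deterministic (so that no extra randomness is injected between $\Esf(\Xv)$ and $\Qsf(\Xv)$). If one wanted to allow the decoder $\Dsf$ to be a randomized map, the same chain still holds by the general data processing inequality, so the statement is robust to that generalization.
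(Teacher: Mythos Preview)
Your proof is correct and essentially matches the paper's approach: both bound the mutual information by the entropy of a discrete random variable supported on at most $2^{dB}$ points. The only cosmetic difference is that the paper applies the entropy bound directly to $\Qsf(\Xv)$ (which itself takes at most $2^{dB}$ values), whereas you route through $\Esf(\Xv)$ via the data processing inequality; your extra step is harmless but unnecessary.
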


\begin{lemma}
\label{lem:inequality_supremum_bayes_error}
For $\Thetav = \{\thetav : (1/d)\lVert \thetav \rVert_2^2 \leq c^2\}$ and the Gaussian prior $F \equiv \Ncal(\mathbf{0}, c^2\delta^2\Iv_d)$ for some $\delta \in (0,1)$, in the asymptotic setting as $d \to \infty$, any $\Qsf \in \Qcal_{n,d,B}$ satisfies,
\begin{equation}
    \label{eq:inequality_supremum_bayes_error}
    \sup_{\thetav \in \Thetav}R(\Qsf(\Xv,\yv), \thetav) \geq \int_{\thetav \in \Real^d}R(\Qsf(\Xv, \yv), \thetav)dF(\thetav)
\end{equation}
\end{lemma}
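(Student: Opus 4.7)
The plan is to split the Bayes integral over $\Real^d$ into contributions from inside and outside $\Thetav$, and use Gaussian concentration to control the exterior piece. Writing
\begin{equation*}
\int_{\thetav\in\Real^d} R(\Qsf(\Xv),\thetav)\,dF(\thetav) \;=\; \int_{\Thetav} R\,dF + \int_{\Thetav^c} R\,dF,
\end{equation*}
the first integral is trivially at most $F(\Thetav)\cdot\sup_{\thetav\in\Thetav}R(\Qsf(\Xv),\thetav)$ because $R$ is averaged against a sub-probability measure supported on $\Thetav$. The remaining work is to show $\int_{\Thetav^c} R\,dF \le F(\Thetav^c)\cdot\sup_{\Thetav} R + o(1)$ as $d\to\infty$; summing then yields $\int_{\Real^d} R\,dF \le \sup_{\Thetav} R + o(1)$, which is the claim in the $d\to\infty$ limit.

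For the exterior piece, I would exploit concentration of $\|\thetav\|_2^2/d$ around $c^2$ under $F = \Ncal(\mathbf{0},c^2\Iv_d)$: chi-square tail bounds give $F(\|\thetav\|_2^2/d > c^2+\delta) \le \exp(-\Omega(d\delta^2))$ for bounded $\delta$. Choosing a sequence $\delta_d\to 0$ with $d\delta_d^2\to\infty$ (for instance $\delta_d = d^{-1/4}$), almost all $F$-mass of $\Thetav^c$ sits in the thin annulus $\mathcal{A}_d = \{\thetav : c^2 < \|\thetav\|_2^2/d \le c^2+\delta_d\}$. The residual mass of $\Thetav^c\setminus\mathcal{A}_d$ is absorbed into $o(1)$ using the crude pointwise bound $R(\Qsf(\Xv),\thetav) \le 2c^2 + 2\|\thetav\|_2^2/d$ (from $\Qsf(\Xv)\in\Thetav$) together with the sub-exponential tail of $\chi_d^2$. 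For $\thetav\in\mathcal{A}_d$, its radial projection $\tilde\thetav = (c\sqrt{d}/\|\thetav\|_2)\thetav \in \partial\Thetav\subset\Thetav$ satisfies $\|\thetav-\tilde\thetav\|_2 = O(\sqrt{d}\,\delta_d)$, and I would establish a uniform continuity estimate $R(\Qsf(\Xv),\thetav) \le R(\Qsf(\Xv),\tilde\thetav) + o(1) \le \sup_{\Thetav} R + o(1)$ on the annulus, closing the chain.

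The main obstacle is this uniform continuity estimate on $\mathcal{A}_d$. Expanding the squared loss at fixed data handles the deterministic part of the discrepancy cleanly: $\|\Qsf(\Xv)-\thetav\|_2^2 - \|\Qsf(\Xv)-\tilde\thetav\|_2^2 = -2\langle\Qsf(\Xv),\thetav-\tilde\thetav\rangle + \|\thetav\|_2^2 - \|\tilde\thetav\|_2^2$, and both contributions are $o(d)$ after normalization by Cauchy-Schwarz together with $\|\Qsf(\Xv)\|_2 \le c\sqrt{d}$. What is genuinely delicate is comparing the expectations under the shifted data laws $\Ncal(\Wv\thetav,\sigma^2\Iv_n)$ and $\Ncal(\Wv\tilde\thetav,\sigma^2\Iv_n)$, since $\Qsf$ can be a highly discontinuous function of $\Xv$; a naive Pinsker bound scales with $\|\Wv\|_2\cdot\|\thetav-\tilde\thetav\|_2$ and may be too crude. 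The cleanest way forward is probably to average out the direction of $\thetav-\tilde\thetav$ using the rotational symmetry of $F$, or to exploit boundedness of the loss $\ell(\Qsf(\Xv),\cdot) \le 4c^2$ so that only an effective TV comparison on the scale $\sqrt{d}\delta_d$ is needed. Everything else is standard Gaussian-concentration bookkeeping.
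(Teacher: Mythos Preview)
Your decomposition is sound up to the point you yourself flag as ``delicate'', but that point is a genuine gap, not a technicality. To compare $R(\Qsf(\Xv),\thetav)$ with $R(\Qsf(\Xv),\tilde\thetav)$ you must compare expectations under the two data laws $P_\thetav = \Ncal(\Wv\thetav,\sigma^2\Iv_n)$ and $P_{\tilde\thetav}$. On your annulus $\mathcal{A}_d$ one has $\|\thetav-\tilde\thetav\|_2 = \Theta(\sqrt{d}\,\delta_d) = \Theta(d^{1/4})$, so $\mathrm{KL}(P_\thetav\Vert P_{\tilde\thetav}) = \tfrac{1}{2\sigma^2}\|\Wv(\thetav-\tilde\thetav)\|_2^2 = \Theta(d^{1/2})\to\infty$, and the total variation between the two laws is generically $1-o(1)$. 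Combining this with the $4c^2$ loss bound yields an error term of order $c^2$, not $o(1)$. Neither proposed fix closes this: averaging over the direction of $\thetav$ under $F$ does not reduce the pairwise TV, and the bounded-loss route already uses the $4c^2$ cap and still fails. Since $\Qsf$ may be an arbitrary discontinuous map of $\Xv$, there is no reason $R(\Qsf(\Xv),\cdot)$ should vary by $o(1)$ over a displacement of order $d^{1/4}$.

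The paper sidesteps this entirely by never working with $F$ directly on $\Thetav^c$. It introduces a \emph{shrunken} prior $F_\delta = \Ncal(\mathbf{0},c^2\delta^2\Iv_d)$ with a fixed $\delta\in(0,1)$; then $\mathbb{E}_{F_\delta}[\|\thetav\|_2^2/d] = c^2\delta^2 < c^2$, so $F_\delta(\Thetav^c)$ is exponentially small in $d$. The exterior integral is crushed by the crude bound $R(\Qsf(\Xv),\thetav)\le 2c^2 + 2\|\thetav\|_2^2/d$ together with Cauchy--Schwarz and chi-square tails---no continuity of $R$ in $\thetav$ is ever invoked. This gives $\sup_{\Thetav}R \ge \int R\,dF_\delta - o(1)$ for every $\delta<1$, and one lets $\delta\to 1$ at the end (the downstream use via Lemma~\ref{lem:lower_bound_MI_Gaussian_prior} is continuous in the prior variance). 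The shrinking trick is the missing idea in your argument.
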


\begin{lemma}
\label{lem:lower_bound_MI_Gaussian_prior}
Suppose $\thetav \sim \Ncal(\mathbf{0}, c^2\delta^2\Iv_d)$ for some $\delta \in (0,1)$, $\yv = \Xv\thetav + \vv$, and $\vv \sim \Ncal(\mathbf{0}, \sigma^2\Iv_n)$.
Then given $\Xv$, for any $\zetav$ satisfying $p(\zetav \vert \yv, \thetav) = p(\zetav \vert \yv)$, we have,
\begin{equation*}
    \label{eq:lower_bound_MI_Gaussian_prior_expression}
     I(\yv;\zetav) \geq \frac{d}{2}\log\left(\frac{c^4\delta^4\sigma_{\mathrm{min}}^2}{\sigma^2 + c^2\delta^2\sigma_{\mathrm{min}}^2}\Paren{\int  R(\zetav, \thetav)dF(\thetav)
    - \frac{c^2\delta^2\sigma^2}{\sigma^2 + c^2\delta^2\sigma_{\mathrm{max}}^2}}^{-1}\right),
\end{equation*}
where $p(\cdot)$ denotes any distribution, $F(\thetav)$ is the Gaussian prior over $\thetav$, and $\sigma_{\mathrm{max}}$ and $\sigma_{\mathrm{min}}$ are the maximum and minimum singular values of $\Xv$, respectively.
\end{lemma}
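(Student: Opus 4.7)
The plan is to leverage the conjugate Gaussian structure induced by the prior $F = \Ncal(\mathbf{0}, c^2 \Iv_d)$ and the linear Gaussian likelihood. Let $\hat{\thetav} := \mathbb{E}[\thetav \mid \Xv]$ denote the posterior mean. Since $(\thetav, \Xv)$ are jointly Gaussian, $\hat{\thetav}$ is itself a centered Gaussian vector with covariance whose eigenvalues are $c^4 \sigma_i^2 / (c^2 \sigma_i^2 + \sigma^2)$, and the posterior error $\thetav - \hat{\thetav}$ is independent of $\Xv$ with covariance whose eigenvalues are $c^2 \sigma^2 / (c^2 \sigma_i^2 + \sigma^2)$, where $\sigma_i \in [\sigma_m, \sigma_M]$ are the singular values of $\Wv$.

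First I would invoke an orthogonality argument to relate the joint distortion $\int R(\zetav,\thetav)\,dF(\thetav)$ to the distortion between $\zetav$ and the posterior mean. The Markov hypothesis $p(\zetav \mid \Xv, \thetav) = p(\zetav \mid \Xv)$ makes $\zetav$ and $\thetav$ conditionally independent given $\Xv$. Combined with the $\Xv$-measurability of $\hat{\thetav}$ and $\mathbb{E}[\thetav - \hat{\thetav} \mid \Xv] = 0$, conditioning on $\Xv$ kills the cross term and yields the Pythagorean decomposition
\begin{equation*}
\mathbb{E}\lVert \zetav - \thetav \rVert_2^2 \;=\; \mathbb{E}\lVert \zetav - \hat{\thetav} \rVert_2^2 + \mathbb{E}\lVert \hat{\thetav} - \thetav \rVert_2^2.
\end{equation*}
Summing the posterior-error eigenvalues and using $\sigma_i \leq \sigma_M$ gives $\mathbb{E}\lVert \hat{\thetav} - \thetav \rVert_2^2 \geq dc^2 \sigma^2/(\sigma^2 + c^2 \sigma_M^2)$, and rearranging the identity upper bounds $\mathbb{E}\lVert \zetav - \hat{\thetav} \rVert_2^2$ by $d\int R(\zetav,\thetav)\,dF(\thetav) - dc^2\sigma^2/(\sigma^2 + c^2\sigma_M^2)$.

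Next I would apply the data processing inequality along the Markov chain $\hat{\thetav} \to \Xv \to \zetav$, which is valid because $\hat{\thetav}$ is a deterministic function of $\Xv$; this gives $I(\Xv; \zetav) \geq I(\hat{\thetav}; \zetav)$. It therefore suffices to lower bound $I(\hat{\thetav}; \zetav) = h(\hat{\thetav}) - h(\hat{\thetav} \mid \zetav)$. Gaussianity of $\hat{\thetav}$ gives $h(\hat{\thetav}) = \tfrac{1}{2}\sum_i \log\bigl(2\pi e \cdot c^4 \sigma_i^2/(c^2 \sigma_i^2 + \sigma^2)\bigr) \geq \tfrac{d}{2}\log\bigl(2\pi e \cdot c^4 \sigma_m^2/(c^2 \sigma_m^2 + \sigma^2)\bigr)$, where each log-eigenvalue is lower bounded by the one attained at $\sigma_m$. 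For the conditional entropy, translation invariance and the Gaussian maximum-entropy principle give $h(\hat{\thetav} \mid \zetav) \leq h(\hat{\thetav} - \zetav) \leq \tfrac{d}{2}\log\bigl(2\pi e \cdot \tfrac{1}{d}\mathbb{E}\lVert \hat{\thetav} - \zetav \rVert_2^2\bigr)$. Subtracting, substituting the distortion bound from the previous step, and simplifying the $\log$-terms produces exactly the inequality in the lemma statement.

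The main subtle step is choosing to lower bound $h(\hat{\thetav})$ by the minimum eigenvalue of its covariance rather than by its determinant. Using $\lambda_{\min}$ delivers the clean $\sigma_m^2$-dependence in the numerator of the stated bound; had we used the determinant, we would have obtained a tighter but structurally different geometric-mean expression involving all singular values of $\Wv$. Keeping the numerator and denominator expressed through $\sigma_m$ and $\sigma_M$ respectively is precisely what allows the result to combine transparently with Lemmas \ref{lem:mutual_information_bit_budget_inequality} and \ref{lem:inequality_supremum_bayes_error} when deriving Thm. \ref{thm:minimax_error_lower_bound}. A secondary point worth checking carefully is that the vanishing of the cross term in the Pythagorean decomposition genuinely requires only the weak Markov hypothesis $p(\zetav\mid\Xv,\thetav)=p(\zetav\mid\Xv)$ rather than $\zetav$ being a deterministic function of $\Xv$, which is why the lemma applies to arbitrary (possibly randomized) learning codes.
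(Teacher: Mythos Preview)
Your proposal is correct and follows essentially the same route as the paper: introduce the posterior mean $\hat{\thetav}=\mathbb{E}[\thetav\mid\Xv]$, use the Markov hypothesis together with $\mathbb{E}[\thetav-\hat{\thetav}\mid\Xv]=0$ to obtain the Pythagorean decomposition of the Bayes risk, apply DPI to pass from $I(\Xv;\zetav)$ to $I(\hat{\thetav};\zetav)$, compute $h(\hat{\thetav})$ from the Gaussian covariance eigenvalues, and bound $h(\hat{\thetav}\mid\zetav)$ via the Gaussian max-entropy/AM--GM step, finally replacing the eigenvalue sums and products by their $\sigma_m$/$\sigma_M$ extremal values. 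The paper records the entropy and trace computations as separate sub-lemmas, but the logical skeleton and every inequality match yours.
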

\vspace{-3.5mm}
For given $\Xv$, the requirement $p(\zetav \vert \yv, \thetav) = p(\zetav \vert \yv)$ ensures that the learned and quantized model $\thetatv$, that is an estimate of $\thetav$, depends only on the responses $\yv$ and no prior information about $\thetav$ is utilized.
\begin{theorem}
\label{thm:minimax_error_lower_bound}
For $B, \sigma$, $c$, the asymptotic minimax risk \eqref{eq:asymptotic_minimax_risk} for \eqref{eq:noisy_planted_model} can be lower bounded as\footnote{Since $\liminf_{d \to \infty}$, we can consider \textit{any} sequence of matrices of non-decreasing dimensions, $\{X_i\}_{i=1}^{\infty} \to \Xv$, where max. and min. singular values of $\Xv$ are upper and lower bounded by $\sigma_{\mathrm{max}}$ and $\sigma_{\mathrm{min}}$ respectively.},
\begin{equation}
    \Rcal_{B,c}^{(\infty)} \triangleq \liminf_{d \to \infty} \Rcal_{B, c} \geq \frac{c^2\sigma^2}{\sigma^2 + c^2\sigma_{\mathrm{max}}^2} + \frac{c^4\sigma_{\mathrm{min}}^2}{\sigma^2 + c^2\sigma_{\mathrm{min}}^2}\cdot 2^{-2B}.
\end{equation}
\end{theorem}

We prove Thm. \ref{thm:minimax_error_lower_bound} in App. \ref{app:proof_minimax_error_lower_bound}
Interestingly, the lower bound in Thm. \ref{thm:minimax_error_lower_bound} consists of two terms.
The first term is the unavoidable \textit{learning risk} (independent of $B$), and the second term is the excess \textit{quantization risk} that arises due to the bit-budget constraint.
This suggests the notion of a minimum threshold bit-budget ($B_{thr}$) that ensures that the quantization risk is of the same order of magnitude as the learning risk.
To get an expression for $B_{thr}$, we require,
\begin{align}
    \label{eq:threshold_bit_budget}
    \frac{c^2\sigma^2}{\sigma^2 + c^2\sigma_{\mathrm{max}}^2} \gtrsim \frac{c^4\sigma_{\mathrm{min}}^2}{\sigma^2 + c^2\sigma_{\mathrm{min}}^2}\cdot 2^{-2B}
    \implies B_{thr} = \Omega\Paren{\log\Paren{\omega^2\sigma_{\mathrm{min}}^2\frac{1 + \omega^2\sigma_{\mathrm{max}}^2}{1 + \omega^2\sigma_{\mathrm{min}}^2}}},
\end{align}
where $\omega^2 \triangleq c^2/\sigma^2$.
For well-conditioned matrices, $\sigma_{\mathrm{max}} = \Theta(\sigma_{\mathrm{min}})$ and $B_{thr} = \Omega\Paren{\log(\omega^2\sigma_{\mathrm{min}}^2)}$.
Note that $\omega^2$ can increase either by increasing $c^2$ or decreasing $\sigma^2$.
Increasing $c^2$ means $\Thetav$ gets larger implying the learning problem becomes more difficult.
Decreasing $\sigma^2$ means with lower noise variance, the unquantized learning risk is lower, and consequently, more bits are required for model quantization in order to ensure a lower excess quantization risk.
In \S \ref{sec:algorithms_for_model_quantization}, we propose several learning codes and compare the minimum budget required by each of them with the threshold budget in eq. \eqref{eq:threshold_bit_budget} to evaluate their performance.
Thm. \ref{thm:minimax_error_lower_bound} simplifies to the result of \cite{zhu_neurips_2014} when $\Xv = \Iv$, and that of \cite{nussbaum_1999} when $\Xv = \Iv$ with no budget constraint, i.e. $B \to \infty$, and hence can be seen as their generalization.

\vspace{-1mm}
\section{Algorithms for Learning and Quantizing Linear Models}
\label{sec:algorithms_for_model_quantization}
\vspace{-2mm}

We now propose several learning codes that achieve $\ell_2$-risks asymptotically matching the lower bound to the minimax risk, while requiring $B + o_d(1)$ bits per dimension, where $o_d(1) \to 0$ as $d \to \infty$.
These algorithms provide upper bounds on the minimax risk that together with Thm. \ref{thm:minimax_error_lower_bound}, prove the tightness of the lower bound (up to constant factors), completely characterizing the threshold budget (eq. \eqref{eq:threshold_bit_budget}) required to learn a model $\thetav$ from linear observations under bit-budget constraints.

We first discuss the $\ell_2$-risk of a straightforward strategy in \S \ref{subsec:naive_learning_and_quantization} that initially ignores any bit-budget constraints, estimates unconstrained $\thetav$ from training data $(\Xv, \yv)$, and subsequently does a uniform scalar quantization of the learned model.
Such a na\"ive strategy is highly sub-optimal as its $\ell_2$-risk grows linearly with dimension.
Consequently, we propose a randomized projection based learning code called \textbf{Random Correlation Maximization (RCM)} in \S \ref{subsec:random_correlation_maximization}, where the encoder and decoder agree on a random codebook whose elements are distributed uniformly at random on the unit sphere.
Although optimal, this scheme is not practically feasible and simply exhibits the \textit{existence of a good learning code}, much like Shannon's achievability proof for channel coding theorem \cite{info_theory_book}[Ch. 7].
We also propose: \textbf{Democratic Quantization (DQ)} and \textbf{Near-Democratic Quantization (NDQ)} in \S \ref{subsec:democratic_quantization} and \S \ref{subsec:near_democratic_quantization}.
Compared to \textbf{RCM}, \textbf{DQ} and \textbf{NDQ} introduce a slightly more structure into the codebooks generated, making them practically feasible.
\textbf{DQ} achieves a dimension-independent worst-case $\ell_2$-risk with a computational complexity of $O(d^2)$ multiplications.
On the other hand, \textbf{NDQ} has a complexity of $O(d \log d)$ additions, significantly less than \textbf{DQ}, while paying only a very modest price for its $\ell_2$-risk in terms of a weak logarithmic dependence on the dimension.
We summarize the performance and computational complexity of different learning codes in Table \ref{tab:learning_codes_comparison}.

\begin{remark}
In the absence of any bit-budget constraints, the least squares problem \eqref{eq:bit_constrained_least_squares} can be solved in closed form.
We analyze our proposed learning codes by assuming that the pseudoinverse $\Xinv = (\Xv^{\top}\Xv)^{-1}\Xv^{\top} \in \Real^{d \times n}$, is available and do not take into account the complexity of computing it.
However, in practice, unconstrained least-squares problems can be solved via one of several optimization algorithms such as gradient descent, conjugate gradient, etc, and all of them eventually converge to the estimate given by the closed form solution.
\end{remark}

\begin{table*}[t!]
\begin{center}
\begin{small}
\begin{sc}
\renewcommand{\arraystretch}{1.1}
\begin{tabular}{lcccr}
\toprule
Code & Performance guarantee (w.h.p.) & Threshold budget ($B_*$) & Complexity \\

\midrule
\textbf{Na\"ive} (\S \ref{subsec:naive_learning_and_quantization}) & $R(\thetatv, \thetav) \leq R_i + \frac{2\textcolor{red}{d}c^4\sigma_{\mathrm{max}}^2}{\sigma^2 + c^2\sigma_{\mathrm{max}}^2}2^{-2B}$ & $\Omega\Paren{\log\Paren{\textcolor{red}{d}\omega^2\sigma_{\mathrm{max}}^2}}$ & $O(d)$ \\

\textbf{RCM} (\S \ref{subsec:random_correlation_maximization}) & $R(\thetatv, \thetav) \leq R_i+ \frac{c^4\sigma_{\mathrm{max}}^2}{\sigma^2 + c^2\sigma_{\mathrm{max}}^2}2^{-2B}$ & $\Omega\Paren{\log\Paren{\omega^2\sigma_{\mathrm{max}}^2}}$ & $O(2^{dB})$ \\

\textbf{DQ} (\S \ref{subsec:democratic_quantization}) & $R(\thetatv, \thetav) \leq R_i + \frac{16K_u^2c^4\sigma_{\mathrm{max}}^2}{\sigma^2 + c^2\sigma_{\mathrm{max}}^2}2^{-\frac{2B}{\lambda}}$ & $\Omega\Paren{\log\Paren{K_u^2\omega^2\sigma_{\mathrm{max}}^2}}$ & $O(d^2)$\\

\textbf{NDQ} (\S \ref{subsec:near_democratic_quantization}) & $R(\thetatv,\thetav) \leq R_i + \frac{64c^4\sigma_{\mathrm{max}}^2\textcolor{red}{\log(2\lambda d)}}{\sigma^2 + c^2\sigma_{\mathrm{max}}^2}2^{-\frac{2B}{\lambda}}$ & $\Omega\Paren{\log\Paren{\omega^2\sigma_{\mathrm{max}}^2\textcolor{red}{\log(2\lambda d)}}}$ & $O(d \log d)$ \\

\midrule
\textbf{LB} (\S \ref{sec:lower_bound_to_the_minimax_risk}) & $R(\thetatv,\thetav) \geq R_i + \frac{c^4\sigma_{\mathrm{min}}^2}{\sigma^2 + c^2\sigma_{\mathrm{min}}^2}2^{-2B}$ & $B_{thr} = \Omega\Paren{\log\Paren{\omega^2\sigma_{\mathrm{min}}^2}}$ & $--$ \\
\bottomrule
\end{tabular}
\end{sc}
\end{small}
\end{center}
\vspace{-2mm}
\caption{\textsc{Comparison of various learning codes and Lower Bound (LB) of Thm. \ref{thm:minimax_error_lower_bound}}\\ ($\textcolor{blue}{d}$: dimension, $\textcolor{blue}{c} : \norm{\thetav}_2 \leq c\sqrt{d}$, $\textcolor{blue}{\sigma^2}$: noise, $\textcolor{blue}{\omega^2} \triangleq c^2/\sigma^2$, \textcolor{blue}{B}: bit-budget, $\textcolor{blue}{\sigma_{\mathrm{max}}}, \textcolor{blue}{\sigma_{\mathrm{min}}}$: max. \& min. singular values of $\Xv$, $\textcolor{blue}{K_u}, \textcolor{blue}{\lambda}$: constants, $\textcolor{blue}{R_i} = O(c^2\sigma^2/(\sigma^2 + c^2\sigma_{\mathrm{min}}^2)$: inherent risk)}
\vspace{2mm}
\label{tab:learning_codes_comparison}
\end{table*}

\vspace{-2mm}
\subsection{Naive Learning and Quantization}
\label{subsec:naive_learning_and_quantization}

We first look at a na\"ive strategy for learning and quantizing $\thetav$ from $(\Xv, \yv)$.
We do this in two steps: learning and encoding
\begin{enumerate*}[(i)]
    \item the magnitude, $\lVert \thetav \rVert_2$, and \item the direction, $\thetav/\lVert \thetav \rVert_2$.
\end{enumerate*}

\textbf{Learning and Encoding the Magnitude}.
For this, we make use of the fact that $(1/d)\lVert \thetav \rVert_2^2 \leq c^2$.
Let $\sigma_1 \geq \sigma_2 \geq \ldots \geq \sigma_d$ denote the singular values of $\Xv$.
Then an estimate of $\lVert \thetav \rVert_2^2$ is given by, $\bhat^2 \triangleq \frac{1}{d}\lVert\Xinv\yv\rVert_2^2 - \frac{\sigma^2}{d}\sum_{i=1}^{d}\sigma_i^{-2}$.
This choice of $\bhat^2$ becomes apparent in the proof of Lemma \ref{lem:error_encoded_magnitude}.
Now, we generate a codebook $\Bcal = \{1/\sqrt{d}, 2/\sqrt{d}, \ldots, \lceil c^2\sqrt{d}\rceil/\sqrt{d}\}$ for encoding the magnitude, and encode $\bhat^2$ as, $\varphi_{\Bcal}(\bhat^2) = \argminimize_{b}\{\lvert b - \bhat^2 \rvert: b \in \Bcal\}$.
$\varphi_{\Bcal}(\bhat^2)$ returns an index from $\{1, \ldots, \lceil c^2\sqrt{d} \rceil\}$.
To decode, let $\bt^2 = \psi_{\Bcal}(\varphi_{\Bcal}(\bhat^2))$, where $\psi_{\Bcal}(i)$ returns the $i^{th}$ element in the codebook $\Bcal$.
The strategy for learning and encoding $\lVert \thetav \rVert_2$ is the same for other learning codes proposed subsequently in \S \ref{subsec:random_correlation_maximization}, \ref{subsec:democratic_quantization}, and \ref{subsec:near_democratic_quantization}.
For this reason, we state Lemma \ref{lem:error_encoded_magnitude} in App. \ref{sec:error_encoded_magnitude} that gives a tail probability bound on the deviation of the estimate $\bhat^2$ from $\lVert \thetav \rVert_2^2$.
We show that as $d \to \infty$, $\bhat^2 \to \lVert \thetav \rVert_2^2$ with high probability (w.h.p.) and we use it in our analysis.

\textbf{Uniform scalar quantization to encode the direction}.
To do this, we uniformly quantize the \textit{shape} $\sv \triangleq \Xinv\yv/\lVert \Xinv\yv \rVert_2$ by allocating $B$-bits to each coordinate.
Let $\textbf{B}_{\infty}^d(1)$ denote the $\ell_{\infty}$-ball of radius $1$ in $\Real^d$.
Since $\lVert \sv \rVert_{\infty} \leq 1$, a \textbf{uniform quantizer}, denoted as $\text{Q}_{u,B}(\cdot)$, constructs the quantization lattice by choosing $M = 2^B$ points $\{u_i\}_{i=1}^{M}$ along every dimension, given by $u_i = -1 + (2i-1)\Delta/2$ for $i = 1, \ldots, M$, where $\Delta = 2/M$ is the \textbf{resolution}.
For $\zv \equiv [z_1 \ldots z_d] \in \textbf{B}_{\infty}^d(1)$, $\text{Q}_{u,B}$ is defined as,
\begin{equation}
    \text{Q}_{u,B}(\zv) \triangleq [z_1', \ldots, z_d']^\top; \hspace{1mm} z_j' \triangleq \argminimize_{u \in \{u_1, \ldots, u_M\}} \lvert u - z_j \rvert
\end{equation}
The worst-case quantization error of a uniform quantizer is $\Delta\sqrt{d}/2$.
We get the uniform scalar quantized direction as $\stv \triangleq \text{Q}_{u,B}(\Xinv\yv/\lVert \Xinv\yv \rVert_2) \in \Real^d$.

\textbf{Getting the quantized model}.
Using the quantized magnitude $\bt^2 \in \Real$ and the quantized direction $\stv \triangleq \text{Q}_{u,B}(\Xinv\yv/\lVert \Xinv\yv \rVert_2) \in \Real^d$, the $\thetatv$ is obtained by multiplying and appropriately scaling as,
\begin{equation}
\label{eq:final_quantized_model_naive}
\thetatv \triangleq \sqrt{\frac{d\bt^4}{\bt^2 + \sigma^2\xi/d}}\cdot\stv = \sqrt{\frac{d\bt^4}{\bt^2 + \sigma^2\xi/d}}\cdot \text{Q}_{u,B}\Paren{\frac{\Xinv\yv}{\lVert \Xinv\yv\rVert_2}},
\end{equation}
where $\xi = \sum_{i=1}^{d}\sigma_i^{-2}$.
Thm. \ref{thm:naive_quantizer_guarantee} below gives a guarantee for this na\"ive strategy while showing that it can be highly suboptimal for large $d$ due to the $O(d)$ scaling of second term of the risk.

\begin{theorem}
\label{thm:naive_quantizer_guarantee}
For $\yv = \Xv\thetav + \vv$, where $\lVert \thetav \rVert_2^2 \leq dc^2$, the output $\thetatv$ of the na\"ive strategy satisfies
\begin{equation*}
\mathbb{P}\Paren{R(\thetatv, \thetav) > \frac{2c^2\sigma^2}{\sigma^2 + c^2\sigma_{\mathrm{min}}^2} + \frac{2\textcolor{red}{d}c^4\sigma_{\mathrm{max}}^22^{-2B}}{\sigma^2 + c^2\sigma_{\mathrm{max}}^2} + \frac{C_n}{\sqrt{d}}} \to 0    
\end{equation*}
as $d \to \infty$ for some constant $C_n$ independent of $d$.
\end{theorem}
Its proof is given in App. \ref{app:naive_quantizer_guarantee}.
This (and all subsequent strategies) requires $\log c^2 + (1/2)\log d$ bits to quantize the magnitude and $dB$ bits to quantize the direction.
The number of bits required per dimension is then $B + (\log c^2)/d + (\log d)/(2d) \to B$ as $d \to \infty$.
To ensure that the \textit{excess risk due to quantization is of the same order of magnitude as the unavoidable learning risk}, we require,
\begin{equation}
\label{eq:threshold_bit_budget_naive}
    \left(\frac{2c^2\sigma^2}{\sigma^2 + c^2\sigma_{\mathrm{min}}^2} - \frac{c^2\sigma^2}{\sigma^2 + c^2\sigma_{\mathrm{max}}^2}\right) + \frac{2dc^4\sigma_{\mathrm{max}}^22^{-2B}}{\sigma^2 + c^2\sigma_{\mathrm{max}}^2} \lesssim \frac{2K_r c^2\sigma^2}{\sigma^2 + c^2\sigma_{\mathrm{max}}^2},
\end{equation}
for some constant $K_r$.
This implies a minimum bit-budget requirement for the \textbf{na\"ive strategy} to be $B_* = \Omega\Paren{\log\Paren{d\omega^2\sigma_{\mathrm{max}}^2}}$, that scales logarithmically with $d$ compared $B_{thr} = \Omega(1)$ in eq. \eqref{eq:threshold_bit_budget}.

\subsection{Random Correlation Maximization (RCM)}
\label{subsec:random_correlation_maximization}

We now propose and analyze \textbf{RCM}, a random source coding scheme that achieves the lower bound (with an additive $\sqrt{(\log d)/d}$ term) of Thm. \ref{thm:minimax_error_lower_bound} for well-conditioned matrices $\Xv$.
\textbf{RCM} estimates $\lVert \thetav \rVert_2$ exactly as in \S \ref{subsec:naive_learning_and_quantization}.
The difference lies in the learning strategy of the direction of $\thetav$.

\textbf{Maximize random correlations to learn the direction}.
We do this by first generating a codebook $\Ycal$ that consists of $2^{dB}$ i.i.d. vectors drawn uniformly at random on the $d$-dimensional unit sphere $\Sbb^{d-1}$.
We learn and encode the direction of $\thetav \in \Real^d$ by $\varphi_{\Ycal}\Paren{\Xv,\yv} = \argmaximize_{\yvo \in \Ycal} \inprod{\Xinv\yv, \yvo}$, and decode it as $\stv = \psi_{\Ycal}\Paren{\varphi_{\Ycal}\Paren{\Xv,\yv}} \in \Real^d$, where $\psi_{\Ycal}\Paren{i}$ returns the $i^{th}$ element in the codebook $\Ycal$.

\textbf{Getting the quantized model}.
Similar to \S \ref{subsec:naive_learning_and_quantization}, the model $\thetatv$ is obtained (after appropriate scaling) as,
\begin{equation}
    \thetatv = \sqrt{\frac{d\bt^4(1 - 2^{-2B})}{\bt^2 + \sigma^2\xi/d}}\cdot\stv.
\end{equation}
The following Thm. \ref{thm:RCM_guarantee} shows that as $d \to \infty$, with high probability over the construction of the random codebook $\Ycal$, the $\ell_2$-risk of \textbf{RCM} is very close to the lower bound.
\begin{theorem}
\label{thm:RCM_guarantee}
For $\yv = \Xv\thetav + \vv$, where $\lVert \thetav \rVert_2^2 \leq dc^2$, the output $\thetatv$ of \textbf{RCM} satisfies,
\begin{equation*}
    \mathbb{P}\left(R(\thetatv, \thetav)  > \frac{c^2\sigma^2}{\Paren{\sigma^2 + c^2\sigma_{\mathrm{min}}^2}}+ \frac{c^4\sigma_{\mathrm{max}}^22^{-2B}}{\Paren{\sigma^2 + c^2\sigma_{\mathrm{max}}^2}} + C\sqrt{\frac{\log d}{d}} \right) \to 0
\end{equation*}
as $d \to \infty$ for some constant $C$ independent of $d$.
\end{theorem}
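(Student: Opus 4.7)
The plan is to expand the squared error as $\|\thetatv-\thetav\|_2^2 = \|\thetatv\|_2^2 - 2\langle\thetatv,\thetav\rangle + \|\thetav\|_2^2$ and control each piece via two high-probability events: (i) the quantized magnitude satisfies $|\bt^2 - b^2| = O(1/\sqrt{d})$ with $b^2 := \|\thetav\|_2^2/d$, and (ii) the random correlation maximizer $\Xtv$ achieves $\gamma := \langle \zv/\|\zv\|_2, \Xtv\rangle$ close to $\sqrt{1-2^{-2B}}$, where $\zv := \Winv\Xv = \thetav + \Winv\vv$ is the ordinary least-squares estimate. Event (i) follows directly from Lemma \ref{lem:error_encoded_magnitude}. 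Standard Gaussian concentration gives $\|\zv\|_2^2/d = b^2 + \sigma^2\xi/d + O(\sqrt{\log d/d})$ with $\xi = \mathrm{tr}((\Wv^\top\Wv)^{-1})$, and the cross-term $\langle \Winv\vv, \thetav\rangle$ has mean zero and standard deviation $O(\sqrt{d})$, which is negligible compared to $\|\thetav\|_2^2 = db^2$.

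For event (ii), decompose $\Xtv = \gamma\, \zv/\|\zv\|_2 + \sqrt{1-\gamma^2}\,\mathbf{w}$ with $\mathbf{w}$ a unit vector orthogonal to $\zv$. By rotational invariance of the spherical codebook, conditionally on $\zv$ the residual direction $\mathbf{w}$ is uniform on $\Sbb^{d-1}\cap \zv^\perp$, so $|\langle \mathbf{w}, \thetav\rangle| = O(c)$ with high probability, making the off-axis contribution to $\langle \Xtv,\thetav\rangle$ negligible compared to the leading $\sqrt{d}$ term. The crux is the concentration of $\gamma$ itself, which is the maximum over $N = 2^{dB}$ i.i.d.\ copies of $\langle \zv/\|\zv\|_2, \mathbf{x}\rangle$ for $\mathbf{x}\sim\mathrm{Unif}(\Sbb^{d-1})$. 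The density of this inner product is proportional to $(1-t^2)^{(d-3)/2}$, so its upper tail behaves like $\exp\{-\tfrac{d}{2}\log\tfrac{1}{1-t^2}(1+o(1))\}$. Equating this to $1/N = 2^{-dB}$ yields $t_\star \approx \sqrt{1-2^{-2B}}$; an extreme-value analysis of the maximum of $N$ i.i.d.\ samples together with a one-sided union bound gives $|\gamma - \sqrt{1-2^{-2B}}| = O(\sqrt{\log d/d})$ with probability $1-o(1)$. This is precisely the achievability step of Shannon's Gaussian rate-distortion theorem specialized to spherical codes.

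On the intersection of the two good events, the scaling factor defining $\thetatv$ is calibrated so that $\langle\thetatv,\thetav\rangle = \|\thetatv\|_2^2 + o(d)$ (the nominal MMSE alignment), using $\|\zv\|_2 \approx \sqrt{db^2+\sigma^2\xi}$ and $\langle \zv/\|\zv\|_2, \thetav\rangle \approx db^2/\sqrt{db^2+\sigma^2\xi}$. Hence $\|\thetatv-\thetav\|_2^2 = \|\thetav\|_2^2 - \|\thetatv\|_2^2 + o(d)$, which after algebra simplifies to
\[
\frac{1}{d}\|\thetatv-\thetav\|_2^2 = \frac{b^2\sigma^2\xi/d}{b^2+\sigma^2\xi/d} + \frac{b^4\,2^{-2B}}{b^2+\sigma^2\xi/d} + O(\sqrt{\log d/d}).
\]
Monotonicity then delivers the stated bound: the first (learning) term is increasing in both $b^2$ and $\xi/d$, so using $b^2 \leq c^2$ and $\xi/d \leq 1/\sigma_m^2$ gives $\leq c^2\sigma^2/(\sigma^2+c^2\sigma_m^2)$; the second (quantization) term is increasing in $b^2$ and decreasing in $\xi/d$, so using $b^2 \leq c^2$ and $\xi/d \geq 1/\sigma_M^2$ gives $\leq c^4\sigma_M^2\,2^{-2B}/(\sigma^2+c^2\sigma_M^2)$.

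The main obstacle will be step (ii): establishing $\gamma \geq \sqrt{1-2^{-2B}} - O(\sqrt{\log d/d})$ at the claimed rate. The leading-order statement $\gamma \to \sqrt{1-2^{-2B}}$ is classical, but obtaining the $\sqrt{\log d/d}$ correction requires careful handling of both the polynomial prefactor in the spherical-cap measure and a Gumbel-type concentration of the $N$-th order statistic; a naive union bound loses a logarithmic factor that must be absorbed into the residual. A secondary subtlety is that $\gamma$, $\mathbf{w}$, and $\vv$ are coupled through the codebook, which I resolve by conditioning on $\zv$ (so that the codebook is independent of it) and exploiting the isotropic distribution of the uniform spherical codebook.
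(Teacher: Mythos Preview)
Your proposal is correct and uses the same three technical ingredients as the paper: Lemma~\ref{lem:error_encoded_magnitude} for the magnitude, the concentration of the maximum spherical correlation at $\sqrt{1-2^{-2B}}$ (the paper imports this as Lemma~\ref{lem:concentration_of_inner_products} from \cite{zhu_neurips_2014}, which actually gives the sharper rate $O_P(\log d/d)$ rather than your $O(\sqrt{\log d/d})$), and near-orthogonality of a uniform spherical vector to a fixed direction (the paper's Lemma~\ref{lem:angle_distribution_unit_sphere}). The organizational difference is that the paper inserts the intermediate linear estimator $\gammah\Winv\Xv$ with $\gammah=\bhat^2/(\bhat^2+\sigma^2\xi/d)$ and splits the risk as $A_1+A_2+A_3$ (quantization error, unavoidable learning error, cross term), then shows $A_3=O_P(\sqrt{\log d/d})$ by projecting $\Xtv$ onto $(\Winv\Xv)^\perp$; you instead perform the orthogonal decomposition $\Xtv=\gamma\,\zv/\|\zv\|+\sqrt{1-\gamma^2}\,\mathbf{w}$ up front and observe that the scaling is calibrated so that $\langle\thetatv,\thetav\rangle\approx\|\thetatv\|_2^2$, collapsing the risk to $\|\thetav\|_2^2-\|\thetatv\|_2^2+o(d)$. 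Your route is a bit more direct and makes the MMSE-style calibration of the scaling factor transparent; the paper's route has the pedagogical advantage of isolating the learning term $A_2$ (which reappears verbatim in the analyses of the na\"ive, DQ, and NDQ schemes) from the quantization term $A_1$. Both lead to the identical expression $\frac{b^2\sigma^2\xi/d}{b^2+\sigma^2\xi/d}+\frac{b^4\,2^{-2B}}{b^2+\sigma^2\xi/d}$ before the final monotonicity step, which you carry out exactly as the paper does.
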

Its proof is delegated to App. \ref{app:proof_of_thm_RCM_guarantee}.
As in \eqref{eq:threshold_bit_budget_naive}, we can calculate $B_*$ for \textbf{RCM}, given in Table \ref{tab:learning_codes_comparison}.
Thm. \ref{thm:RCM_guarantee} guarantees the \textit{existence of a good quantizer} with performance close to the lower bound of Thm. \ref{thm:minimax_error_lower_bound}.
However, implementing it requires exponential complexity since the codebook $\Ycal$ comprises of $2^{dB}$ random directions, and solving $\argmaximize_{\yvo \in \Ycal} \inprod{\Xinv\yv, \yvo}$ requires us to search over all of them.

\vspace{-1mm}
\subsection{Democratic Quantization (DQ)}
\label{subsec:democratic_quantization}
\vspace{-1mm}

Since \textbf{RCM} is practically infeasible, we now propose \textbf{DQ} that gives us a close to optimal polynomial-time scheme.
In \textbf{DQ}, the scheme for estimating and quantizing the magnitude remains the same as in \S \ref{subsec:naive_learning_and_quantization}.
But instead of uniformly quantizing $\sv = \Xinv\yv/\lVert \Xinv\yv \rVert_2$, we quantize the solution of:
\begin{equation}
\label{eq:l_inf_minimization_problem}
    \sv_d \triangleq \argminimize_{\xv \in \Real^D} \norm{\xv}_{\infty} \hspace{2mm} \text{subject to} \hspace{2mm} \sv = \Sv\xv.
\end{equation}
Here, $D \geq d$ and $\Sv \in \Real^{d \times D}$ is a \textit{wide matrix}.
The solution of \eqref{eq:l_inf_minimization_problem} is called the \textbf{democratic embedding} of $\sv$ w.r.t. $\Sv$.
It can be shown that w.h.p., $\lVert \sv_d \rVert_{\infty} \leq K_u/\sqrt{d}$ for some constant $K_u$ that depends on the choice of $\Sv$.
Compared to the na\"ive strategy which uniformly quantized $\sv$ to get $\stv$, \textbf{DQ} uniformly quantizes the random embedding $\sv_d$ instead.
Previously, $\lVert \sv \rVert_{\infty} \leq 1$ required the \textit{na\"ive} quantizer to have a dynamic range of $[-1,+1]$, but $\lVert \sv_d \rVert_{\infty} \leq K_u/\sqrt{d}$ lets us choose the dynamic range of the uniform quantizer as $[-K_u/\sqrt{d}, +K_u/\sqrt{d}]$ instead.
This reduced dynamic range significantly improves the resolution.
From the quantized embedding, an estimate of $\sv$ can be obtained by taking the inverse embedding.
We denote this by $\text{Q}_{d,B}(\cdot)$ and the decoded direction is,
\begin{equation}
\label{eq:democratic_quantization_shape}
    \stv = \text{Q}_{d,B}(\Xinv\yv/\lVert \Xinv\yv\rVert_2) \triangleq \Sv\cdot \text{Q}_{u,B}(\sv_d).
\end{equation}
We discuss democratic embeddings in detail and relevant choices of $\Sv$ in App. \ref{app:democratic_embeddings}.
In particular, we show that the worst case quantization error is $\lVert \stv - \text{Q}_{d,B}(\sv) \rVert_2 \leq O(1)$, a constant that does not scale with dimension, unlike $O(\sqrt{d})$ for na\"ive uniform quantizers.

\textbf{Getting the quantized model}.
As before in \S \ref{subsec:naive_learning_and_quantization}, \textbf{DQ} obtains the final quantized model according to eq. \eqref{eq:final_quantized_model_naive}, where $\stv$ is given by \eqref{eq:democratic_quantization_shape}.
Thm. \ref{thm:DQ_guarantee} below shows that the $\ell_2$-risk of this strategy is close to the lower bound of Thm. \ref{thm:minimax_error_lower_bound} up to constant factors.
\begin{theorem}
\label{thm:DQ_guarantee}
For $\yv = \Xv\thetav + \vv$, where $\lVert \thetav \rVert_2^2 \leq dc^2$, the output $\thetatv$ of the \textbf{DQ} satisfies,
\begin{equation*}
    \mathbb{P}\Paren{R(\thetatv, \thetav) > \frac{2c^2\sigma^2}{\sigma^2 + c^2\sigma_{\mathrm{min}}^2} + \frac{16K_u^2c^4\sigma_{\mathrm{max}}^22^{-2B/\lambda}}{\sigma^2 + c^2\sigma_{\mathrm{max}}^2} + \frac{C_{dq}}{\sqrt{d}}} \to 0,
\end{equation*}
as $d \to \infty$ for some constant $C_{dq}$ independent of $d$.
\end{theorem}
We prove this in App. \ref{app:DQ_guarantee_proof}.
Here, $K_u$ is a constant (precisely, the upper Kashin constant of $\Sv$ as defined in App. \ref{app:democratic_embeddings}) and $\lambda = D/d$, that depends on the choice of user, and is usually chosen very close to $1$.
%
\iffalse
Note that to implement \textbf{DQ}, the decoder needs to know $\Sv$.
%
Furthermore, the elements of the codebook for \textbf{DQ} comprises of $\{\Sv^{-1}\zv_i\}$, where $\zv_i$ is a uniform quantization lattice point in $\Real^D$.
\fi

\subsection{Near-Democratic Quantization (NDQ)}
\label{subsec:near_democratic_quantization}

Although \textbf{DQ} requires $O(d^2)$ multiplications which is much better than \textbf{RCM}, it can still be computationally prohibitive for large $d$, especially when compared to the $O(d)$ complexity of the na\"ive strategy.
To get a \textbf{near-democratic embedding}, an $\ell_2$ relaxed version of \eqref{eq:l_inf_minimization_problem} is solved as follows,
\begin{equation}
\label{eq:l_2_minimization_problem}
    \sv_{nd} \triangleq \argminimize_{\xv \in \Real^D} \norm{\xv}_2 \hspace{2mm} \text{subject to} \hspace{2mm} \sv = \Sv\xv.
\end{equation}
The solution of \eqref{eq:l_2_minimization_problem} can be found in a closed form (App. \ref{app:near_democratic_embeddings}) as $\sv_{nd} = \Sv^\top \sv$.
Although $\Sv$ can be one of several random matrices as discussed in App. \ref{app:democratic_embeddings}, we specifically consider the construction $\Sv = \Pv\Dv\Hv \in \Real^{d \times D}$.
Here, $\Pv \in \Real^{d \times D}$ is the sampling matrix obtained by randomly selecting $d$ rows of $\Iv_D$, $\Dv \in \Real^{D \times D}$ is a diagonal matrix whose entries are randomly chosen to be $\pm 1$ with equal probability, and $\Hv$ is the $D \times D$ Hadamard matrix with normalized entries, i.e. $H_{ij} = \pm 1/\sqrt{D}$.
We let $D = 2^{\lceil \log d \rceil}$ so that $\Hv$ can be constructed.
Using the fast Walsh-Hadamard transform, $\sv_{nd}$ can be computed with just $O(d \log d)$ additions and sign flips, which is near-linear complexity.

We show in App. \ref{app:near_democratic_embeddings} that $\lVert \sv_{nd} \rVert_{\infty} \leq 2\sqrt{(\log D) / D}$, and we again reduce the dynamic range.
\textbf{NDQ} uniformly quantizes $\sv_{nd}$ using $B$ bits per dimension, and takes an inverse embedding.
Precisely,
\begin{equation}
\label{eq:near_democratic_quantization_shape}
    \stv = \text{Q}_{nd,B}\left(\Xinv\yv/\lVert \Xinv \yv \rVert_2\right) \triangleq \Sv\cdot \text{Q}_{u,B}(\sv_{nd}).
\end{equation}

The worst-case quantization error can be shown to be $\lVert \stv - \text{Q}_{nd,B}(\sv) \rVert_2 \leq O(\sqrt{\log d})$, where $\text{Q}_{nd,B}$ denotes \textbf{NDQ}.
Compared to \textbf{DQ}, the logarithmic factor is a mild price to pay considering the significant computational savings.
The final quantized model is obtained in the same way as before according to eq. \eqref{eq:final_quantized_model_naive}, where $\stv$ is given by \eqref{eq:near_democratic_quantization_shape}.
Thm. \ref{thm:NDQ_guarantee} below gives an upper bound to the risk.
\begin{theorem}
\label{thm:NDQ_guarantee}
For data $\yv = \Xv\thetav + \vv$, where $\lVert \thetav \rVert_2^2 \leq dc^2$, the output $\thetatv$ of \textbf{NDQ} satisfies,
\begin{equation*}
\mathbb{P}\Paren{R(\thetatv, \thetav) > \frac{2c^2\sigma^2}{\sigma^2 + c^2\sigma_{\mathrm{min}}^2} + \frac{64c^4\sigma_{\mathrm{max}}^2\textcolor{red}{\log(2\lambda d)}2^{-\frac{2B}{\lambda}}}{\sigma^2 + c^2\sigma_{\mathrm{max}}^2} + \frac{C_{nd}}{\sqrt{d}}} \to 0     
\end{equation*}
as $d \to \infty$ for some constant $C_{nd}$ independent of $d$.
\end{theorem}
The proof of \ref{thm:NDQ_guarantee} is provided in App. \ref{app:NDQ_guarantee_proof}.
It also implies that the minimum budget requirement for \textbf{NDQ} is $B_* = \Omega\left(\log\log d\right)$, very close to the lower bound of $B_{thr} = \Omega(1)$ for practical values of $d$.

\begin{remark}
We can assume that both the encoder and decoder know $\Sv$.
If quantization constraints are imposed by a communication channel \cite{Jenkins_2010}, $\Sv$ need not be communicated between them; it can be generated with shared randomness using initially agreed upon seeds.
Hence, it does not count towards the total bit-requirement of the quantized model.
Moreover, computing $\Pv\Dv\Hv(\cdot)$ requires Hadamard transform and sign-flips which can be done efficiently on hardware.
Although specifying $\Pv$ and $\Dv$ for sampling \& sign-flips requires $d\log(D)$ bits, it can be done using in-place computation or circuits and does not require memory hardware like RAM.
\end{remark}

\vspace{-2mm}
\section{Numerical Simulations}
\label{sec:numerical_simulations}
\vspace{-2mm}

To numerically validate our results, we compare learning codes for several different settings.

\textbf{Linear Regression}.
In Fig. \ref{fig:synthetic_simulations}, we consider learning and subsequently quantizing an estimate $\thetatv$ of $\thetav$ from the training data $(\Xv, \yv)$, where the response $\yv$ is generated according to $\yv = \Xv\thetav + \vv$.
We consider $n = d = 128, c = 1, \sigma = 1$, and all the plots have been averaged over $10$ realizations.
In Figs. \ref{fig:X_Identity_Theta_Student-t}, \ref{fig:X_PerturbedOrthonormal_Theta_Student-t}, and \ref{fig:X_PerturbedOrthonormal_Theta_Gaussian3}, $\Xv$ is either an identity matrix, or a perturbed orthonormal matrix.
When $\Xv = \Iv_n$, $\sigma_{\mathrm{max}} = \sigma_{\mathrm{min}}$, and for perturbed orthonormal, we have $\sigma_{\mathrm{max}} \approx \sigma_{\mathrm{min}}$.
The coordinates of the latent parameter $\thetav$ are drawn i.i.d., either from Student-t, or from $\Ncal(0,1)$ and then cubed, i.e. $\Ncal(0,1)^3$.

We plot the results of na\"ive uniform quantization and near-democratic model quantization.
For \textbf{NDQ}, we distinguish between the cases when $\Sv$ used to obtain the near-democratic embedding in \eqref{eq:l_2_minimization_problem} is
\begin{enumerate*}[(i)]
    \item Random orthonormal, and
    \item Randomized Hadamard ($\Pv\Dv\Hv$).
\end{enumerate*}
A random Haar orthonormal matrix is constructed by first obtaining an $n \times d$ matrix with i.i.d. $\Ncal(0,1)$ entries, computing its singular value decomposition, and then multiplying the matrices of its left and right singular vectors.
For comparison, we also plot the \textbf{lower bound} of Thm. \ref{thm:minimax_error_lower_bound}.
There is a noticeable difference in performance between the \textit{na\"ive} and \textit{randomized embedding-based} strategies, especially at low bit-budgets $B$.
When $B < 1$, i.e. less than $1$-bit is available to quantize each coordinate, some coordinates are randomly set to $0$, so that the average bit requirement of the learning code is still $B$.

\begin{remark}
We consider a minimax framework for our lower bound.
In our setup, for some small $\varepsilon > 0$, consider {$\thetav = [\Paren{\varepsilon/(d-1)}^{1/2}, \ldots, (c^2d - \varepsilon)^{1/2}, \ldots, \Paren{\varepsilon/(d-1)}^{1/2}]$} where just one arbitrary entry is large and $\norm{\thetav}_{\infty} \approx c\sqrt{d}$.
The worst-case $\thetav$ satisfies $\norm{\thetav}_2 \approx \norm{\thetav}_{\infty} c\sqrt{d}$, as na\"ive quantization yields the worst case error $\norm{Q(\thetav) - \thetav}_2 \approx 2^{(1-B)}c\sqrt{d}$.
\textit{Student-t} and $\Ncal(0,1)^3$ are heavy-tailed distributions.
When $\thetav$ is sampled from either of them, its coordinate magnitudes are distributed very non-uniformly, i.e. close to worst-case, and we compare this case with the minimax lower bound.
\end{remark}

\textbf{Neural-Network}.
Our proposed randomized embedding-based quantization schemes can be extended beyond linear regression models, to other general settings where no prior distributional information about the model weights is known.
To show this, in Fig. \ref{fig:TwoLayerNN_FuelEfficiency}, we train a $2$-layer densely connected ReLU neural network ($32$ neurons) with scalar output for non-linear regression to predict the fuel efficiency of cars from parameters like \textit{horsepower, weight}, etc for the \textit{auto\_mpg} dataset from UCI Machine Learning repository \cite{Dua:2019}.
We do layer-wise quantization and treat the weights and biases of each layer separately.
To quantize the model parameters, we first vectorize the weight matrix of any specific layer, and then use randomized Hadamard-based \textbf{NDQ}.
Since the total number of parameters is not necessarily a power of $2$, in order to be able to construct Hadamard matrices, we cluster the weight vector into sub-vectors, each of which is a largest possible power of $2$.
\textbf{NDQ} can be seen to perform better than the na\"ive uniform scalar quantization strategy.
The upper bound for this neural network setting is analytically obtained in App. \ref{app:NN_analysis}.

\textbf{Comparison between \textit{Na\"ive} vs. \textit{Hadamard} quantizers}.
While \textbf{NDQ} requires $O(d \log d)$ complexity compared to $O(d)$ of the na\"ive strategy, the additional $\log(\cdot)$ factor in the runtime for \textbf{NDQ} is not prohibitively large, especially when there are other slower steps in a larger task pipeline, eg. when models need to be transmitted from one device to another over bandwidth-constrained communication links.
Here, quantization error is more of a concern than encoding/decoding complexity.
In Figs. \ref{fig:error_vs_dim} and \ref{fig:time_vs_dim}, we compare the quantization error and elapsed system time, respectively, with increasing dimension.
We use \textit{Python SymPy} \cite{sympy_citation} to compute the fast Walsh-Hadamard transform.

\begin{figure}[t!]
    \centering
    \begin{minipage}{.33\textwidth}
        \centering
        \begin{subfigure}[b]{\textwidth}
           \includegraphics[width=1\linewidth]{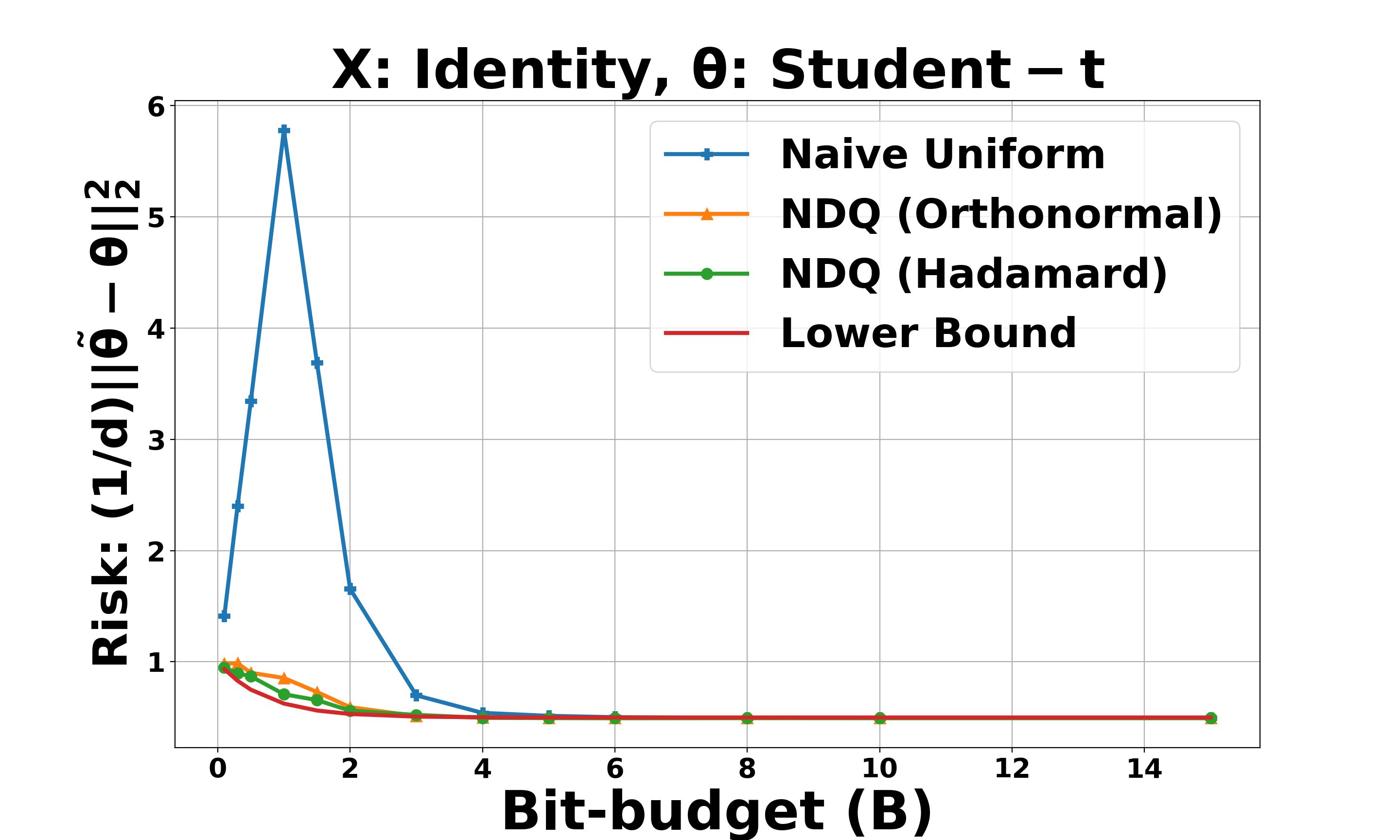}
           \caption{}
           \label{fig:X_Identity_Theta_Student-t} 
        \end{subfigure}
    \end{minipage}%
    \begin{minipage}{0.33\textwidth}
        \centering
        \begin{subfigure}[b]{\textwidth}
           \includegraphics[width=1\linewidth]{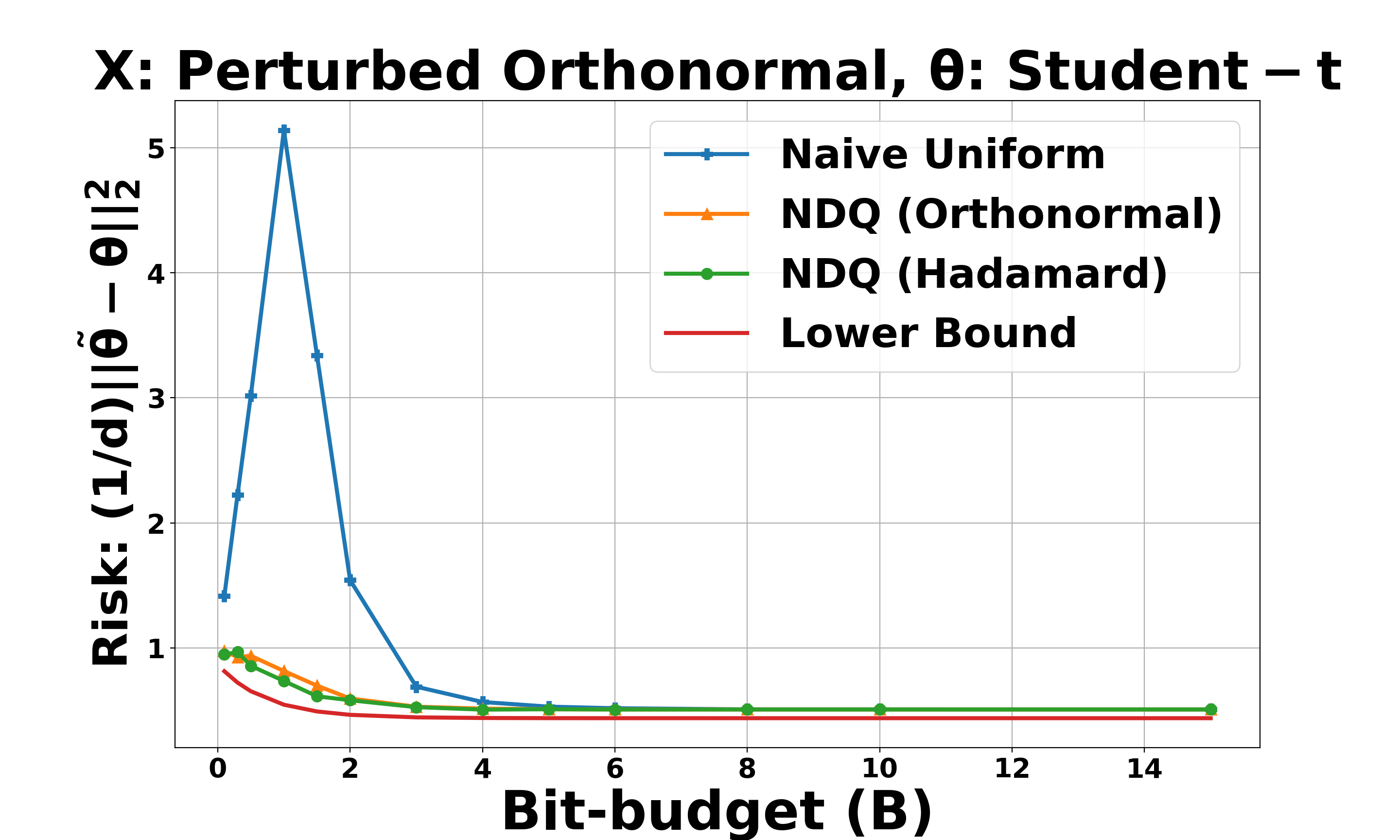}
           \caption{}
           \label{fig:X_PerturbedOrthonormal_Theta_Student-t} 
        \end{subfigure}
    \end{minipage}
    \begin{minipage}{0.33\textwidth}
        \centering
        \begin{subfigure}[b]{\textwidth}
           \includegraphics[width=1\linewidth]{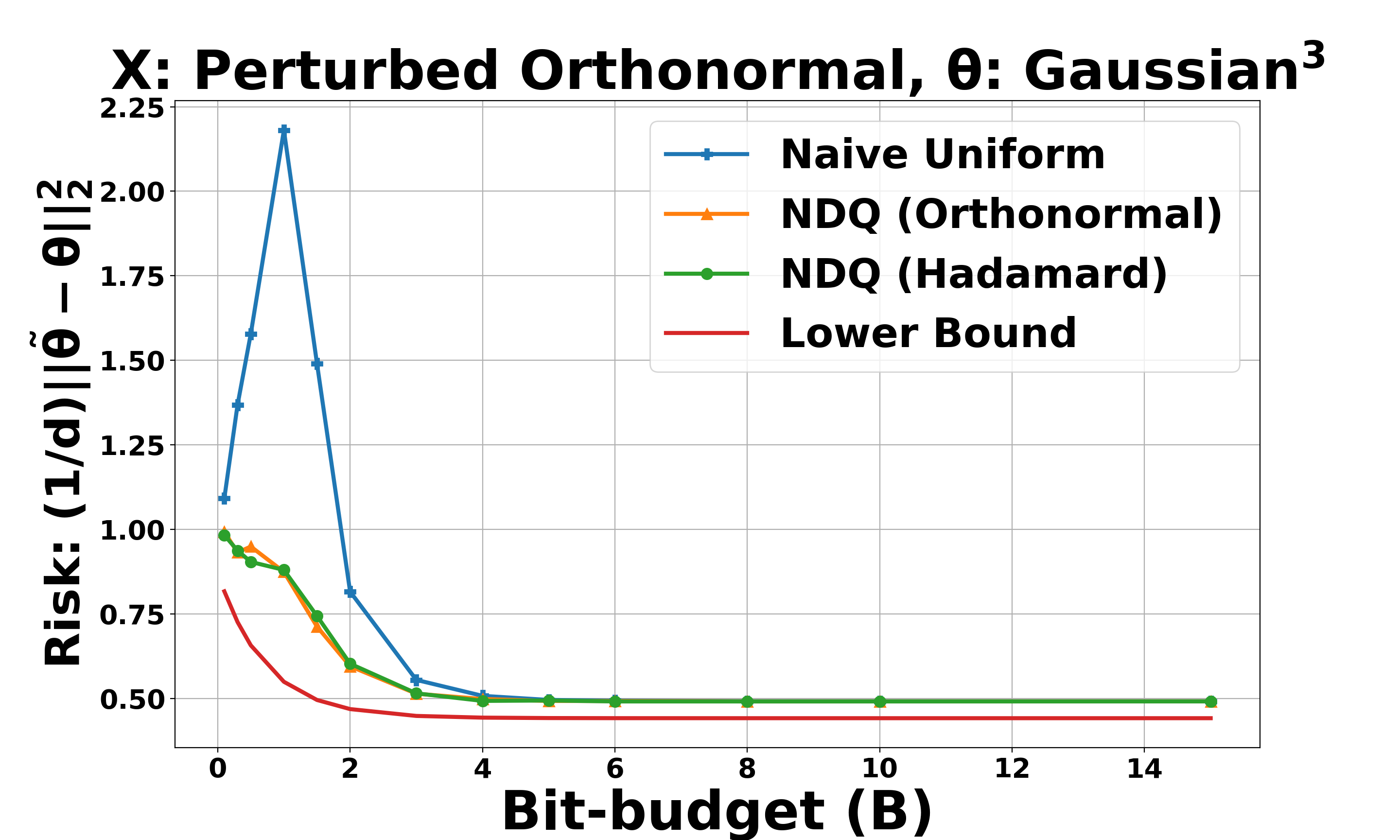}
           \caption{}
           \label{fig:X_PerturbedOrthonormal_Theta_Gaussian3} 
        \end{subfigure}
    \end{minipage}
    \vspace{5mm}
    \caption{Learning and Quantizing Linear Regressor $\thetav$ from $(\Xv, \yv)$, related by $\yv = \Xv\thetav + \vv$.}
    \vspace{2mm}
    \label{fig:synthetic_simulations}
\end{figure}
\begin{figure}[t!]
    \centering
    \begin{minipage}{.33\textwidth}
        \centering
        \begin{subfigure}[b]{\textwidth}
           \includegraphics[width=1\linewidth]{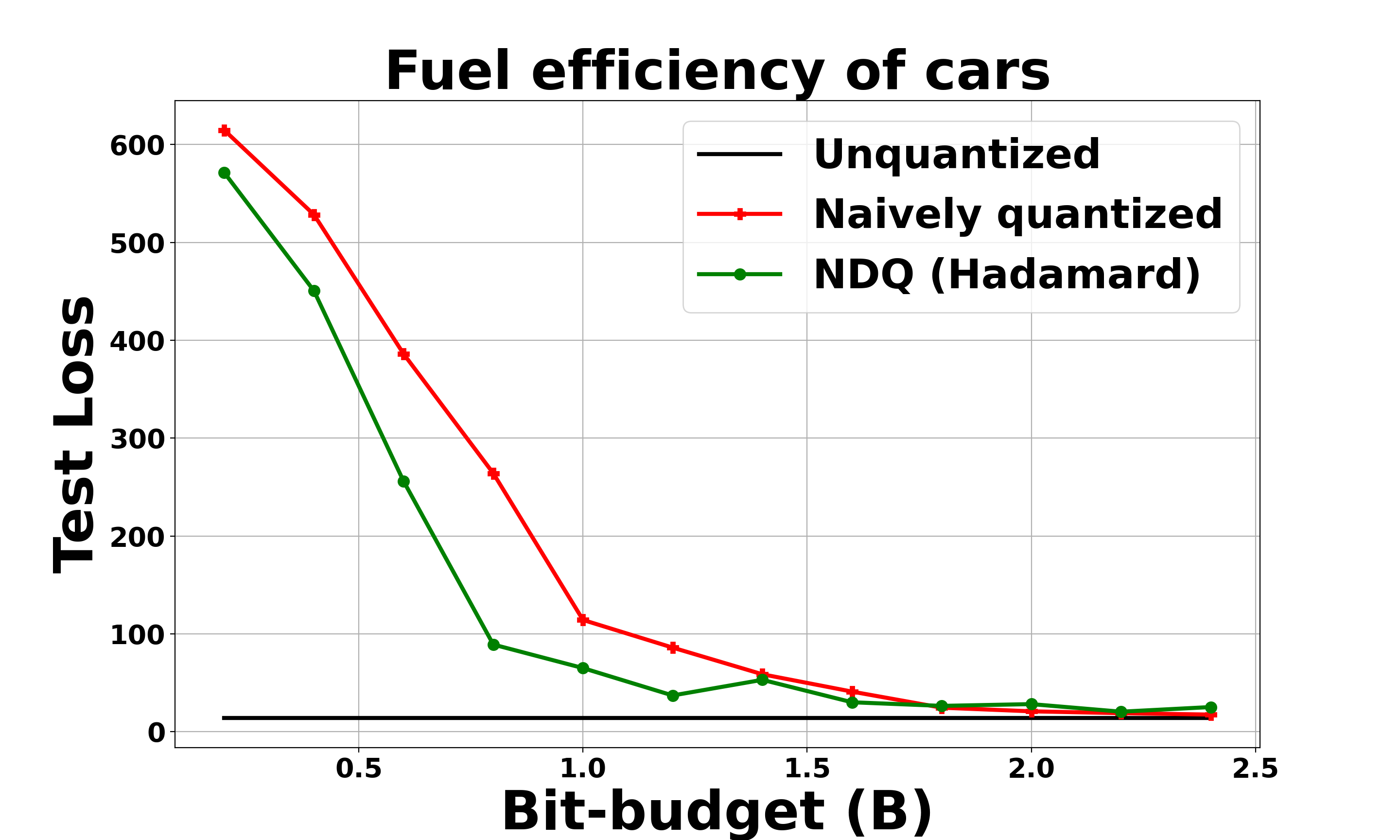}
           \caption{Non-Linear Regression with NN}
           \label{fig:TwoLayerNN_FuelEfficiency} 
        \end{subfigure}
    \end{minipage}%
    \begin{minipage}{0.33\textwidth}
        \centering
        \begin{subfigure}[b]{\textwidth}
           \includegraphics[width=1\linewidth]{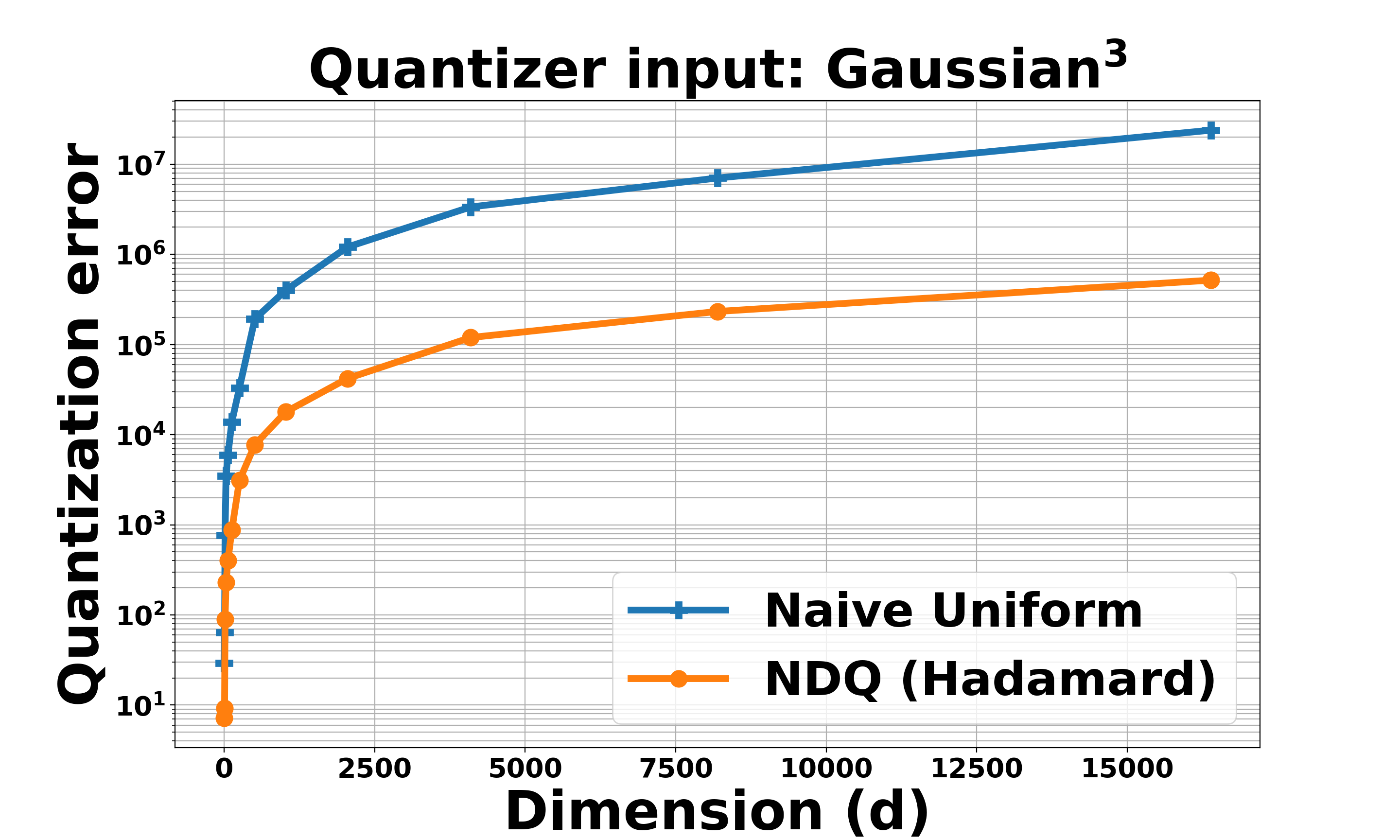}
           \caption{Quantization error vs. dim.}
           \label{fig:error_vs_dim} 
        \end{subfigure}
    \end{minipage}
    \begin{minipage}{0.33\textwidth}
        \centering
        \begin{subfigure}[b]{\textwidth}
           \includegraphics[width=1\linewidth]{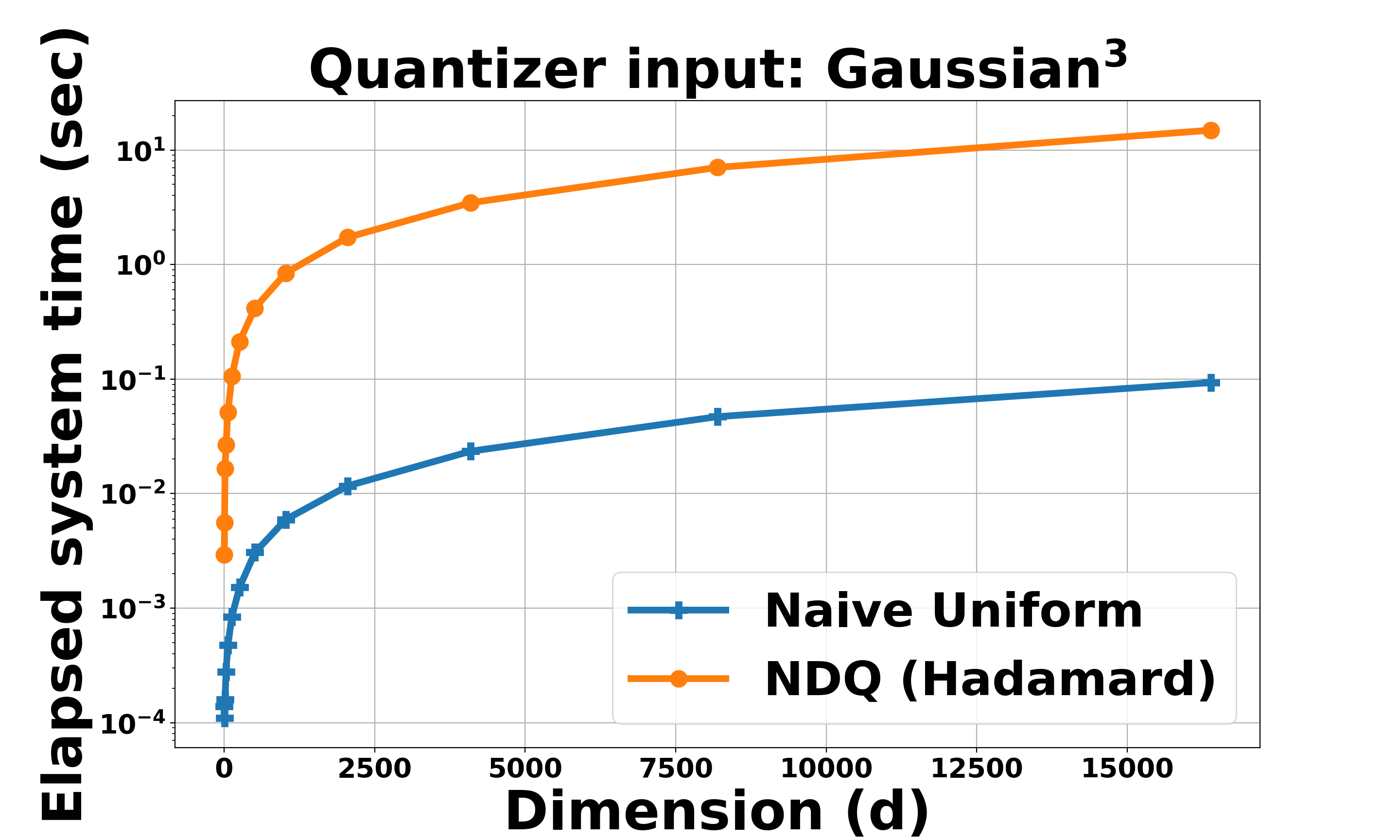}
           \caption{Elapsed time vs. dim.}
           \label{fig:time_vs_dim} 
        \end{subfigure}
    \end{minipage}
    \vspace{5mm}
    \caption{}
\end{figure}

\vspace{-2mm}
\section{Conclusions}
\label{sec:conclusions}
\vspace{-2mm}

In this work, we considered model quantization for least-squares regression, where the solution is constrained to be representable using a pre-specified budget of $B$-bits per dimension.
For the setting when the observations are noisy linear measurements of some true regressor, we obtained an information-theoretic lower bound for the minimax $\ell_2$-risk of learning under quantization constraints.
We saw that the $\ell_2$-risk can be separated as the sum of two terms: the \textit{unavoidable learning risk} and the \textit{excess quantization risk}.
We subsequently proposed several algorithms for model quantization and compared their performance in terms of the worst-case $\ell_2$-risk and computational requirements.
Our results showed the tightness of our lower bound as well as the minimax optimality of our proposed algorithms.
Although na\"ive quantization is computationally simple, its error is sub-optimal for large $d$.
Contrastingly, \textbf{RCM} and \textbf{DQ} yield small quantization errors, but require exponential and quadratic computations, respectively.
We proposed \textit{randomized Hadamard}-based \textbf{NDQ} as a practically feasible strategy with near-optimal error.
Finally, we showed that our method and upper bounds can be extended for non-linear regression using $2$-layer densely connected ReLU neural networks.

\bibliographystyle{IEEEtran}
\bibliography{refs}

\clearpage

\appendices

\section{Summary of abbreviations and notations}
\label{sec:summary_notations_and_acronyms}

\begin{longtblr}[
  caption = {\textsc{\textcolor{darkblue}{Summary of General Abbreviations}}},
  label = {tab:summary_of_acronyms},
]{
  colspec = {|p{0.14\linewidth} | p{0.47\linewidth} | p{0.28\linewidth}|},
  rowhead = 1,
  hlines,
  vlines,
  row{even} = {gray9},
  row{1} = {olive9},
}
\centering \textsc{\textbf{Acronym}} & \centering \textsc{\textbf{Meaning}} & \centering \textsc{\textbf{Remarks}}\\
\centering QLS & Quantized Least Squares & Def. in \eqref{eq:bit_constrained_least_squares}\\
\centering NPM & Noisy Planted Model & Def. in \eqref{eq:noisy_planted_model}\\
\centering RCM & Random Correlation Maximization & \S \ref{subsec:random_correlation_maximization}\\
\centering DQ & Democratic Quantization & \S \ref{subsec:democratic_quantization}\\
\centering NDQ & Near-Democratic Quantization & \S \ref{subsec:near_democratic_quantization}\\
\end{longtblr}

\begin{longtblr}[
  caption = {\textsc{\textcolor{darkblue}{Summary of General Notations}}},
  label = {tab:summary_of_general_notations},
]{
  colspec = {|p{0.14\linewidth} | p{0.47\linewidth} | p{0.28\linewidth}|},
  rowhead = 1,
  hlines,
  vlines,
  row{even} = {gray9},
  row{1} = {olive9},
}
\centering \textsc{\textbf{Notation}} & \centering \textsc{\textbf{Meaning}} & \textsc{\textbf{Remarks}} \\
\centering $d$ & Dimension of model & --- \\
\centering $B$ & (Pre-specified) Bit budget & $B \in (0, \infty)$\\
\centering $n$ & Number of linear measurements & $n \geq d$\\
\centering $\thetav$ & Latent (true) parameter & \centering $\thetav \in \Thetav \subseteq \Real^d$ \\
\centering $\Thetav$ & Parameter space & $\Thetav \triangleq \{\zv : \frac{1}{d}\lVert \zv \rVert_2^2 \leq c^2\}$\\
\centering $c$ & Radius of the (Euclidean ball) parameter space & ---\\
\centering $\yv$ & Observations / Response & $\yv \in \Real^n$\\
\centering $\Xv$ & Feature matrix & $\Xv \in \Real^{n \times d}$\\
\centering $\vv$ & Noise & $\vv \sim \Ncal(\mathbf{0}, \sigma^2\Iv_n)$\\
\centering $\sigma$ & Gaussian noise variance & ---\\
\centering $\thetatv$ & Estimate of $\thetav$ representable using $dB$ bits & (Approximate) solution of \eqref{eq:bit_constrained_least_squares} in $\Svcal \subseteq \Thetav$\\
\centering $\Svcal$ & Codebook (of finite cardinality) & $\thetatv \in \Svcal$, $\lvert \Svcal \rvert \leq 2^{dB}$\\
\centering $\Esf(\cdot, \cdot)$ & Encoder & $\Esf:\Real^{n \times d} \times \Real^n \to \{1, \ldots, 2^{dB}\}$\\
\centering $\Dsf(\cdot)$ & Decoder & $\Dsf:\{1, \ldots, 2^{dB}\} \to \Thetav$\\
\centering $\Qsf(\cdot, \cdot)$ & $(n,d,B)$-Learning code & $\Qsf:\Real^{n \times d} \times \Real^n \to \Thetav$, $\Qsf(\cdot, \cdot) \triangleq \Dsf(\Esf(\cdot, \cdot))$\\
\centering $\Qcal_{n,d,B}$ & Class of all $(n,d,B)$-learning codes & $\Qsf(\cdot, \cdot) \in \Qcal_{n,d,B}$\\
\centering $R(\thetatv, \thetav)$ & Expected $\ell_2$-risk & Def. in eq. \eqref{eq:l2_risk_definition}\\
\centering $\Rcal_{\Qsf}$ & Worst-case risk over $\Thetav$ & $\sup_{\thetav \in \Thetav} R(\Qsf(\cdot,\cdot), \thetav)$\\
\centering $\Rcal_{B,c}$ & Minimax risk over $\Qcal_{n,d,B}$ & Def. in eq. \eqref{eq:asymptotic_minimax_risk}\\
\centering $I(\cdot, \cdot)$ & Mutual information between two random variables & Ref. to \cite[Ch. 2]{info_theory_book}\\
\centering $\sigma_{\rm max}, \sigma_{\rm min}$ & Upper \& Lower bound to max. and min. singular values of $\Xv$ & ---\\
\centering $\Rcal_{B,c}^{(\infty)}$ & Asymptotic minimax risk & $\liminf_{d \to \infty} \Rcal_{B,c}$\\
\centering $\omega^2$ & Problem Parameter-to-Noise ratio & $\omega^2 \triangleq c^2/\sigma^2$ (Def. in eq. \eqref{eq:threshold_bit_budget})\\
\centering $\Xinv$ & Moore-Penrose pseudoinverse & $\Xinv \in \Real^{d \times n}$\\
\end{longtblr}

\begin{longtblr}[
  caption = {\textsc{\textcolor{darkblue}{Summary of Notations for Derivation of Lower Bound}}},
  label = {tab:summary_of_lower_bound},
]{
  colspec = {|p{0.14\linewidth} | p{0.47\linewidth} | p{0.28\linewidth}|},
  rowhead = 1,
  hlines,
  vlines,
  row{even} = {gray9},
  row{1} = {olive9},
}
\centering\textsc{\textbf{Notation}} & \centering\textsc{\textbf{Meaning}} & \centering\textsc{\textbf{Remarks}} \\
\centering$H(\cdot)$ & Entropy of a discrete random variable & Ref. to App. \ref{app:proof_mutual_information_bit_budget_inequality}\\
\centering$F_{\delta}$ & Gaussian distribution parameterized by $\delta \in (0,1)$ & $F_{\delta} \equiv \Ncal(\mathbf{0}, c^2\delta^2\Iv_d)$\\
\centering$h(\cdot)$ & Differential entropy of a continuous random variable & Ref. of eq. \eqref{eq:splitting_bayes error}\\
\centering$\muov, \Sigmaov$ & Conditional expectation and covariance of $\thetav$ given $y$ in \eqref{eq:noisy_planted_model} & Ref. to \eqref{eq:conditional_mean_and_variance}\\
\centering$\Tr{\cdot}$ & Trace of a matrix & ---\\
\centering $\Cov{\cdot}$ & Covariance matrix of a random vector & ---\\
\centering $\thetahv$ & Maximum likelihood estimator of $\thetav$ for a Gaussian prior & $\thetahv = \muov$\\
\centering $\Uv, \Sigmav, \Vv$ & SVD matrices of $\Xv$ & $\Xv = \Uv \Sigmav \Vv^{\top}$\\
\centering $\Sigmav_1$ & Eigenvalue matrix of $\Xv^{\top}\Xv \in \Real^{d \times d}$ & $\Sigmav_1 \triangleq \Sigmav^{\top}\Sigmav$ \\
\centering $\lvert \cdot \rvert$ & Determinant of a matrix & ---\\
\end{longtblr}

\begin{longtblr}[
  caption = {\textsc{\textcolor{darkblue}{Summary of Notations for Naive Learning and Quantization}}},
  label = {tab:summary_of_naive_quantization_notations},
]{
  colspec = {|p{0.14\linewidth} | p{0.47\linewidth} | p{0.28\linewidth}|},
  rowhead = 1,
  hlines,
  vlines,
  row{even} = {gray9},
  row{1} = {olive9},
}
\centering\textsc{\textbf{Notation}} & \centering\textsc{\textbf{Meaning}} & \centering\textsc{\textbf{Remarks}} \\
\centering $b$ & (True) magnitude of latent parameter & $b^2 \triangleq \frac{1}{d}\lVert \thetav \rVert_2^2$, Def. in App. \ref{app:naive_quantizer_guarantee}\\
\centering $\bhat$ & Unquantized estimate of $b$ & Ref. to \S \ref{subsec:naive_learning_and_quantization}\\
\centering $\Bcal$ & Codebook for quantizing magnitude & Ref. to \S \ref{subsec:naive_learning_and_quantization}\\
\centering $\varphi_{\Bcal}(\cdot, \cdot)$ & Encoder function for $\bhat^2$ & Ref. to \S \ref{subsec:naive_learning_and_quantization}\\
\centering $\psi_{\Bcal}(\cdot)$ & Decoder function for $\bhat^2$ & Ref. to \S \ref{subsec:naive_learning_and_quantization}\\
\centering $\bt$ & Quantized estimate of the magnitude & Ref. to \S \ref{subsec:naive_learning_and_quantization}\\
\centering $\{\sigma\}_{i=1}^{d}$ & Singular values of $\Xv \in \Real^{n \times d}$, $n \geq d$ & ---\\
\centering $\sv$ & Shape (Direction) of the quantizer input & $\sv \in \Real^d$, Ref. to \S \ref{subsec:naive_learning_and_quantization}\\
\centering $\mathbf{B}_{\infty}^d(1)$ & $\ell_{\infty}$-ball of radius $1$ & ---\\
\centering $Q_{u,B}(\cdot)$ & Uniform quantizer with $B$-bits per dimension & Quantizer input in $\Real^d$\\
\centering $\stv$ & Quantized direction / shape & ---\\
\centering $M$ & No. of quantization points along each dimension & $M = 2^B$\\
\centering $\Delta$ & Quantization resolution per dimension & $\Delta = 2/M$\\
\centering $\xi$ & $\xi = \sum_{i=1}^{d}\sigma_i^{-2}$ & ---\\
\centering $\gammat$ & Scaled and quantized estimate of magnitude & Ref. to eq. \eqref{eq:gamma_dash_scaling}\\
\centering $\gammah$ & Auxiliary variable used for splitting the $\ell_2$-risk expression & Ref. to eq. \eqref{eq:risk_decomposition_naive}\\
\centering $\Av$ & Orthonormal transform that aligns $\thetav$ with first canonical basis vector in $\Real^d$ & Def. in App. \ref{app:naive_quantizer_guarantee}. $\Av\thetav = \tauv \triangleq [\sqrt{d}b, 0, \ldots, 0]$\\
\centering $\ytv$ & Auxiliary variable & $\ytv \triangleq \Av\Xinv\vv$\\
\end{longtblr}

\begin{longtblr}[
  caption = {\textsc{\textcolor{darkblue}{Summary of Notations for Random Correlation Maximization}}},
  label = {tab:summary_of_random_correlation_maximization},
]{
  colspec = {|p{0.14\linewidth} | p{0.47\linewidth} | p{0.28\linewidth}|},
  rowhead = 1,
  hlines,
  vlines,
  row{even} = {gray9},
  row{1} = {olive9},
}
\centering\textsc{\textbf{Notation}} & \centering\textsc{\textbf{Meaning}} & \centering\textsc{\textbf{Remarks}} \\
\centering $\Ycal$ & Codebook for quantizing direction & Def. in \S \ref{subsec:random_correlation_maximization}\\
\centering $\varphi_{\Ycal}(\cdot, \cdot)$ & Encoder function for RCM & Ref. to \S \ref{subsec:random_correlation_maximization}\\
\centering $\psi_{\Ycal}(\cdot)$ & Decoder function for RCM & Ref. to \S \ref{subsec:random_correlation_maximization}\\
\centering $\thetaov$ & Auxiliary variable & Def. in eq. \eqref{eq:auxiliary_variable_1_RCM}\\
\centering $\gamma'$ & Projection of $\thetav$ on to $\Xinv\yv$ & $\gamma' \triangleq \frac{\inprod{\thetav, \Xinv\yv}}{\lVert \Xinv\yv \rVert_2}$, cf. App. \ref{app:proof_lemma_RCM_fixed_model_magnitude}\\
\centering $\Xinv\yv^{\perp}$ & Orthogonal subspace of $\Xinv\yv$ & Ref. to App. \ref{app:proof_lemma_RCM_fixed_model_magnitude}\\
\end{longtblr}

\begin{longtblr}[
  caption = {\textsc{\textcolor{darkblue}{Summary of Notations for Democratic and Near-Democratic Quantization}}},
  label = {tab:summary_of_democratic_and_near_democratic},
]{
  colspec = {|p{0.14\linewidth} | p{0.47\linewidth} | p{0.28\linewidth}|},
  rowhead = 1,
  hlines,
  vlines,
  row{even} = {gray9},
  row{1} = {olive9},
}
\centering\textsc{\textbf{Notation}} & \centering\textsc{\textbf{Meaning}} & \centering\textsc{\textbf{Remarks}} \\
\centering $D$ & Embedding dimension & $D \geq d$\\
\centering $\Sv$ & Frame that satisfies Uncertainty Principle (UP) & $\Sv \in \Real^{d \times D}$\\
\centering $\lambda$ & Aspect ratio of the frame $\Sv$ & $\lambda \triangleq D/d$\\
\centering $\sv_d$ & Democratic embedding & Ref. to eq. \eqref{eq:democratic_embedding_definition}\\
\centering $K_u$ & Upper Kashin constant of frame $\Sv$ & Ref. to App. \ref{app:democratic_embeddings}\\
\centering $(\eta, \delta)$ & Uncertainty principle parameters & Ref. to Def. \ref{def:uncertainty_principle}\\
\centering $Q_{d,B}(\cdot)$ & Democratic quantizer subject to $B$ bits per dimension & Ref. to eq. \eqref{eq:democratic_quantization_shape}\\
\centering $\sv_{nd}$ & Near-democratic embedding & Ref. to eq. \eqref{eq:near_democratic_embedding_definition}\\
\centering $\Pv$ & Sampling matrix obtained by randomly selecting $d$ rows from $\Iv_D$ & $\Pv \in \Real^{d \times D}$, Ref. to \S \ref{subsec:near_democratic_quantization}\\
\centering $\Dv$ & Diagonal matrix with Rademacher entries & $\Dv \in \Real^{D \times D}$\\
\centering $\Hv$ & Hadamard matrix with normalized entries & $H_{ij} = \pm \frac{1}{\sqrt{D}}$\\
\end{longtblr}

\begin{longtblr}[
  caption = {\textsc{\textcolor{darkblue}{Summary of Notations for Extension to Neural Networks (App. \ref{app:NN_analysis})}}},
  label = {tab:summary_of_neural_networks},
]{
  colspec = {|p{0.14\linewidth} | p{0.47\linewidth} | p{0.28\linewidth}|},
  rowhead = 1,
  hlines,
  vlines,
  row{even} = {gray9},
  row{1} = {olive9},
}
\centering\textsc{\textbf{Notation}} & \centering\textsc{\textbf{Meaning}} & \centering\textsc{\textbf{Remarks}} \\
\centering $d$ & Dimension of the NN input & ---\\
\centering $\xv$ & Input to neural network & $\xv \in \Real^d$\\
\centering $f(\cdot)$ & NN output with unquantized weights & $f(\cdot) \in \Real$\\
\centering $\ft(\cdot)$ & NN output with quantized weights & $\ft(\cdot) \in \Real$\\
\centering $m$ & No. of (hidden) neurons in second layer & ---\\
\centering $\Wv$ & NN weights for first layer & $\Wv \in \Real^{m \times d}$\\
\centering $\wv$ & NN weights for the second layer & $\wv \in \Real^m$\\
\centering $\Whv$ & First \& Second layer weights multiplied & Ref. to eq. \eqref{eq:scalar_output_NN_reparametrized}\\
\centering $\Wtv$ & Quantized NN weights & $\Whv \triangleq Q(\Whv)$\\
\centering $B^*(\epsilon)$ & Bit-budget required to attain $\epsilon$-error in NN output & ---\\
\end{longtblr}

\section{Additional simulations}
\label{sec:additional_simulations}

\subsection{Linear regression on \texorpdfstring{$\mathrm{ash331}$}{ash331}}
\label{subsec:linear_regression}

We do additional simulations for quantizing the solution of the least squares regression problem
\begin{equation}
    \zetav^* \triangleq \argminimize_{\zetav \in \Real^d}\frac{1}{2}\lVert \yv - \Xv\zetav \rVert_2^2,
\end{equation}
for a real dataset.
Here, $\yv \in \Real^n$ and $\Xv \in \Real^{n \times d}$.
We consider our feature matrix $\Xv$ to be the sparse matrix \textit{ash331} from \cite{suitesparse}.
Curated by Askenazi as part of Scotland's survey, \textit{ash331} is a $331 \times 104$ full rank binary matrix comprising of $0/1$ entries mostly concentrated around the diagonal, and has $\sigma_{\rm max} = 4.15$ and $\sigma_{\rm min} = 1.34$.
Such sparse graphs often depicts the network connectivity and are useful in analyzing local neighborhood interactions \cite{gsp_survey}.
The \textit{ash331} matrix is visualized as a network in Fig. \ref{fig:ash331_network} using the network visualization tool of \cite{nr}.
The network has $331$ nodes.
In its most general form, the adjacency matrix of a graph is a square matrix.
$\Xv \in \Real^{331 \times 104}$ implies that the columns indexed from $105 \ldots 331$ in the corresponding square adjacency matrix, are zero-valued.

\begin{wrapfigure}{r}{0.54\linewidth}
    \centering
    \includegraphics[width=1\linewidth]{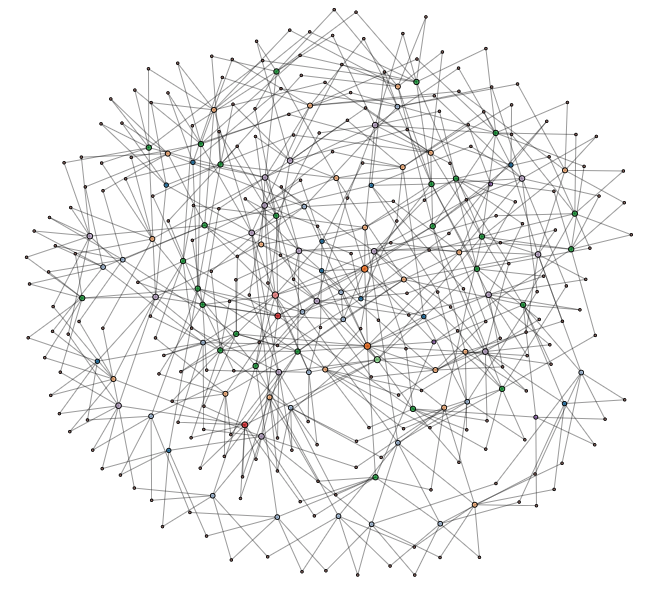}
    \caption{$\mathrm{ash331}$ visualized as a network of nodes}
    \label{fig:ash331_network}
\end{wrapfigure}

Suppose the latent parameter $\thetav \in \Real^{104}$ corresponds to some initial scalar values at nodes labelled $1 \ldots 104$, and suppose the nodes $105 \ldots 331$ have an initial value of zero.
Then, the transformation $\Xv\thetav \in \Real^{331}$ gives the new values after each node has computed a local consensus of its neighbors.
Furthermore, the values after $\Xv\thetav + \vv \in \Real^{331}$ correspond to a noisy consensus.
The goal here is to recover a quantized estimate of the initial values $\thetav$ from the (observed) noisy consensus values $\yv = \Xv\thetav + \vv$. 
We consider two heavy-tailed distributions for the latent parameter $\thetav \stackrel{iid}{\sim} \mathrm{Student-t}$ and $\thetav \stackrel{iid}{\sim} \mathrm{Gaussian}^3$, where $\mathrm{Gaussian}^3$ denotes the i.i.d. entries drawn from $\Ncal(0,1)$ and subsequently cubed.
These settings correspond to Figs. \ref{fig:ash331_student-t} and \ref{fig:ash331_gaussian3} respectively.
Here, $n = 330$ and $d = 104$.
The plots are averaged over $10$ realizations, and $\thetav$ is sampled from either $\mathrm{Student-t}$ or $\mathrm{Gaussian}^3$ independently for each realization.
Moreover, the corresponding observations $\yv \in \Real^n$ are generated according to the model \eqref{eq:noisy_planted_model}.
We plot the variation of the mean squared risk, $R(\thetatv, \thetav) \triangleq \frac{1}{d}\lVert \thetatv - \thetav\rVert_2^2$ with respect to the bit-budget $B$ used for quantizing the learned model.
As was the trend before, \textbf{NDQ} with either Hadamard or random orthonormal frame, outperforms the na\"ive quantization strategy.

\begin{figure}[h!]
    \centering
    \begin{minipage}{0.5\textwidth}
        \centering
        \begin{subfigure}[b]{\textwidth}
           \includegraphics[width=1\linewidth]{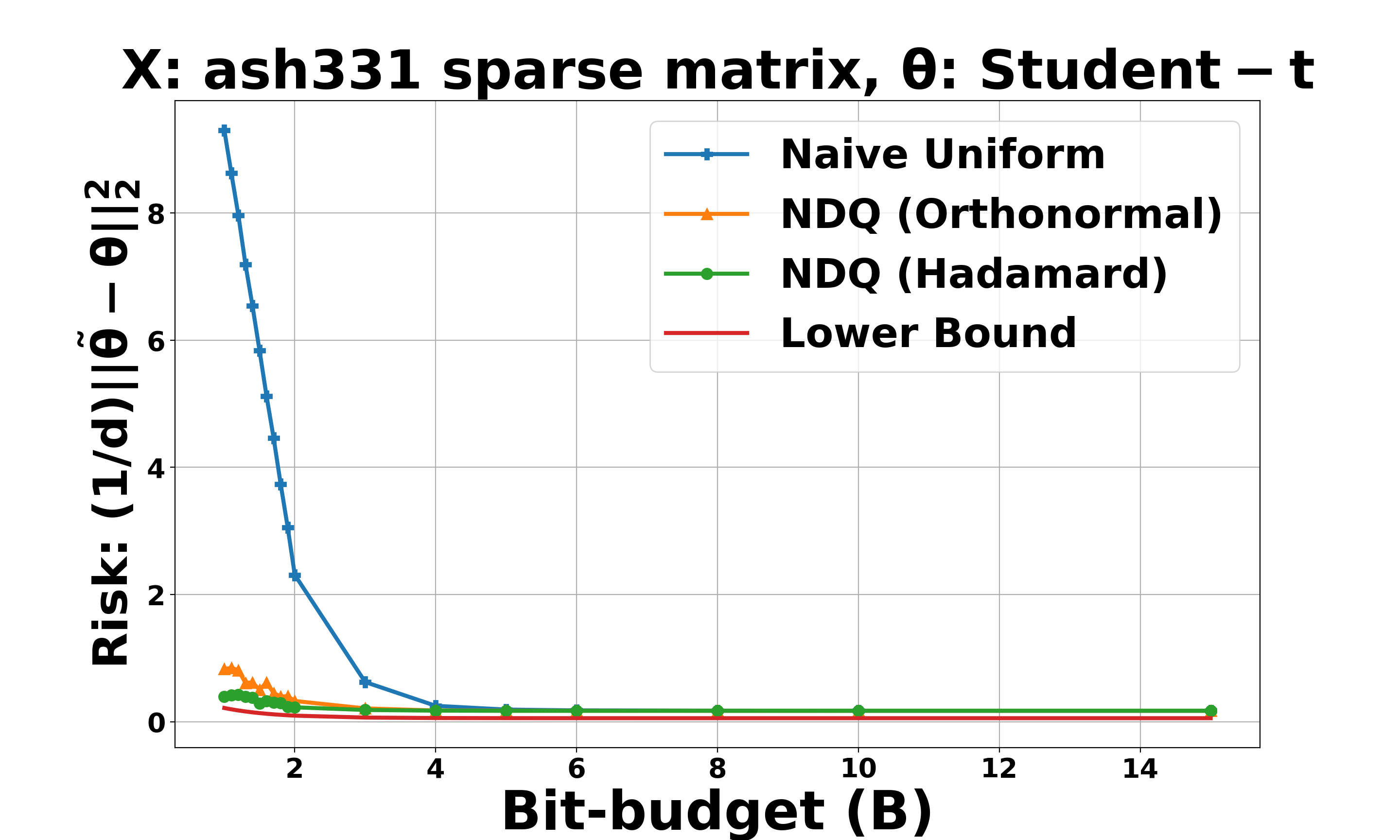}
           \caption{}
           \label{fig:ash331_student-t} 
        \end{subfigure}
    \end{minipage}%
    \begin{minipage}{0.5\textwidth}
        \centering
        \begin{subfigure}[b]{\textwidth}
           \includegraphics[width=1\linewidth]{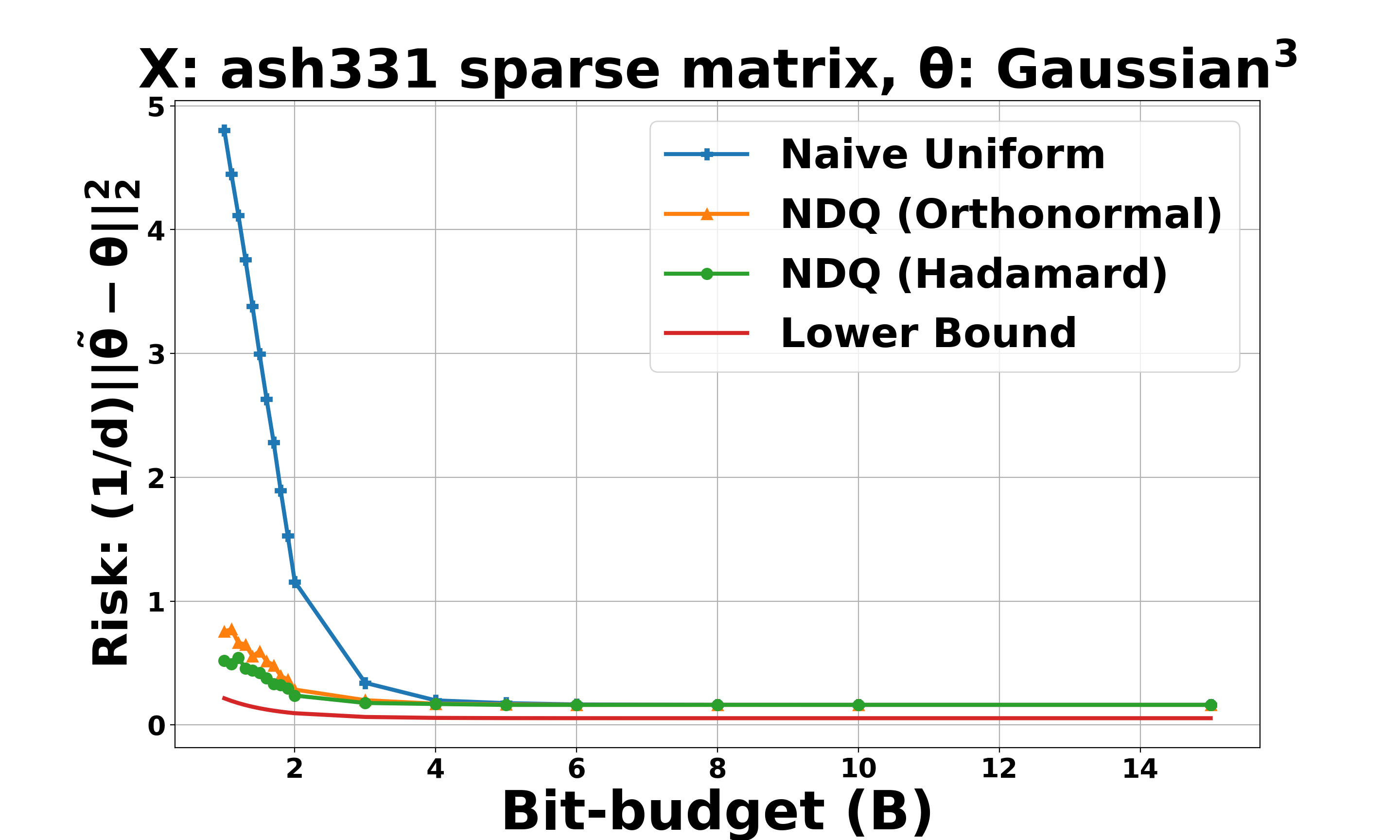}
           \caption{}
           \label{fig:ash331_gaussian3} 
        \end{subfigure}
    \end{minipage}
    \vspace{6mm}
    \caption{MSE risk vs. bit-budget plots for $\mathrm{ash331}$ dataset}
\end{figure}

\subsection{Non-linear regression on yacht hydrodynamics dataset}
\label{subsec:nonlinear_regression}

Although the theoretical analysis in our main paper is primarily for learning and quantization linear regression models, we also perform experiments on non-linear regression models using neural networks.
We consider the yacht hydrodynamics dataset from \cite{yacht_hydrodynamics_dataset}, \cite{yacht_NN} from the UCI machine learning repository \cite{Dua:2019}.
This dataset predicts the residuary resistance of sailing yachts from input features representing basic hull dimensions and the boat velocity, such as such as beam-draught ratio, length-displacement ratio, prismatic coefficient, etc.
More information about this dataset can be found at the \href{http://archive.ics.uci.edu/ml/datasets/Yacht+Hydrodynamics}{UCI ML repository link}.

We use a $4$-layer neural network from \cite{hse} for this non-linear regression task.
The first, second and third hidden layers consist of, respectively, $128$, $32$, and $8$ neurons with ReLU activation function.
The output layer has a single output neuron with linear activation.
The total number of neural network weights (parameters) being quantized is $5297$.
We do a layer-wise quantization and treat the weights and biases of each layer separately.
To quantize the weights, we first vectorize the weight matrix of any specific layer, and then use randomized Hadamard-based \textbf{NDQ}.
Since the total number of parameters is not necessarily a power of $2$, in order to be able to construct Hadamard matrices, we cluster the weight vector into sub-vectors, each of which is a largest possible power of $2$.

We compare the performance of different quantization schemes using \textit{coefficient of determination}, denoted as $r^2$ as our evaluation metric.
The $r^2$-score is the proportion of variation in the response that is predictable from the features.
It is defined as follows.
For a given feature matrix $\Xv \in \Real^{n \times d}$ with rows $\{\xv_1, \ldots, \xv_n\}$, suppose a given model $f$ predicts values $\{f_1, \ldots, f_n\}$, where $f_i \triangleq f(\xv_i)$.
Given ground-truth responses $\{y_1, \ldots, y_n\}$, the \textit{residual sum of squares} is defined as $\mathrm{SS}_{\rm res} = \sum_{i = 1}^{n}(y_i - f_i)^2$, and the \textit{total sum of squares} is defined as, $\mathrm{SS}_{\rm tot} = \sum_{i=1}^{n}(y_i - \yo)^2$, where $\yo = \frac{1}{n}\sum_{i=1}^{n}y_i$ is the mean response.
The $r^2$-score is then defined as,

\begin{equation}
    r^2 \triangleq 1 - \frac{\rm SS_{res}}{\rm SS_{tot}}
\end{equation}

Note that in the best case, the response as predicted by the model $f$ may exactly match the ground-truth responses, in which case, $\rm SS_{res} = 0$, and $r^2 = 1$.
On the other hand, a trivial baseline model, which always predicts $\yo$, will have $r^2 = 0$.
Regression models that have worse predictions than this baseline will have negative $r^2$-score.

In Figs. \ref{fig:r2_training} and \ref{fig:r2_validation} below, we plot the the $r^2$-score versus the bit-budget used for quantizing the weights of the neural network on \textit{training} and \textit{validation} sets, respectively.
We note that our proposed randomized Hadamard-based \textbf{NDQ} strategy outperforms the na\"ive quantization scheme.

\begin{figure}[h!]
    \centering
    \begin{minipage}{0.5\textwidth}
        \centering
        \begin{subfigure}[b]{\textwidth}
           \includegraphics[width=1\linewidth]{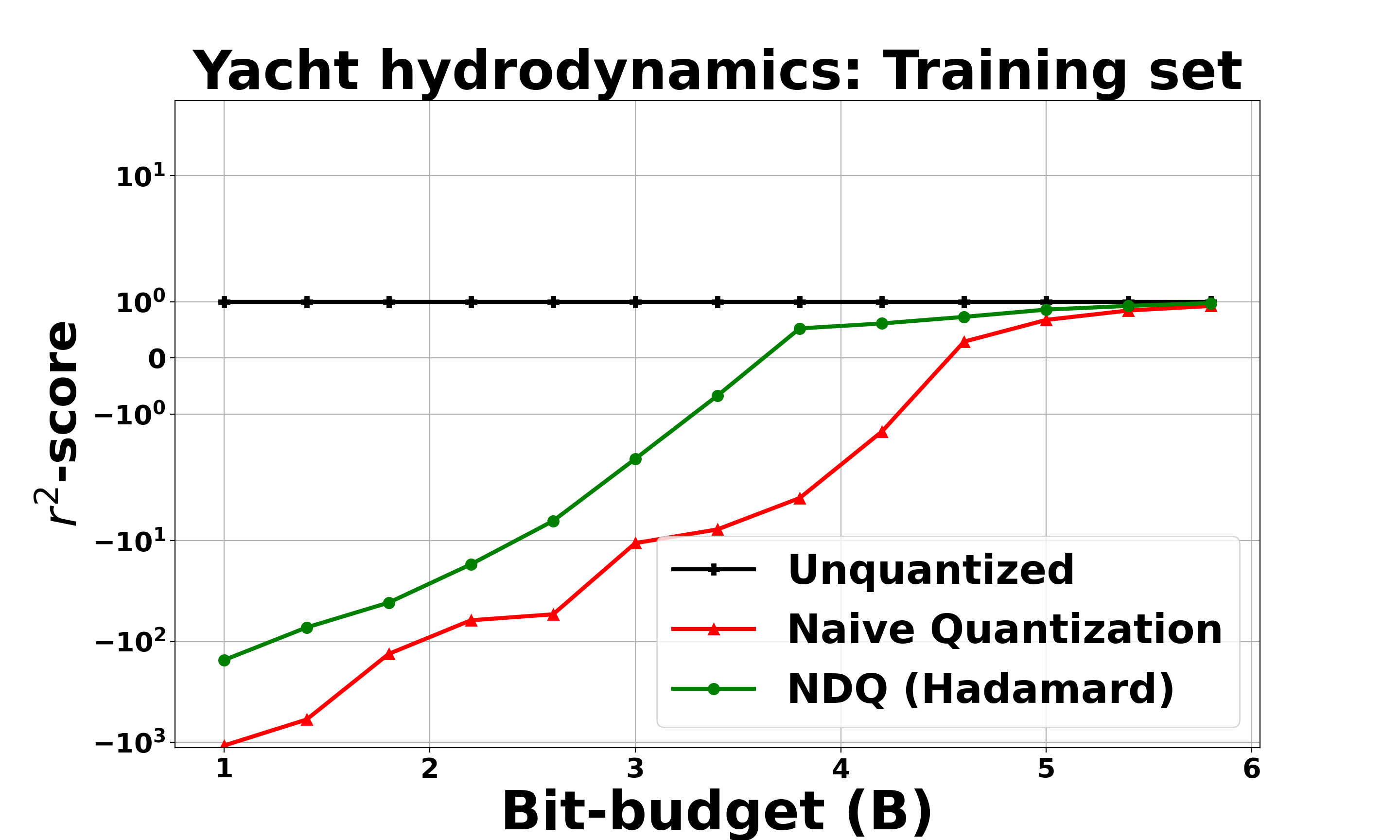}
           \caption{}
           \label{fig:r2_training} 
        \end{subfigure}
    \end{minipage}%
    \begin{minipage}{0.5\textwidth}
        \centering
        \begin{subfigure}[b]{\textwidth}
           \includegraphics[width=1\linewidth]{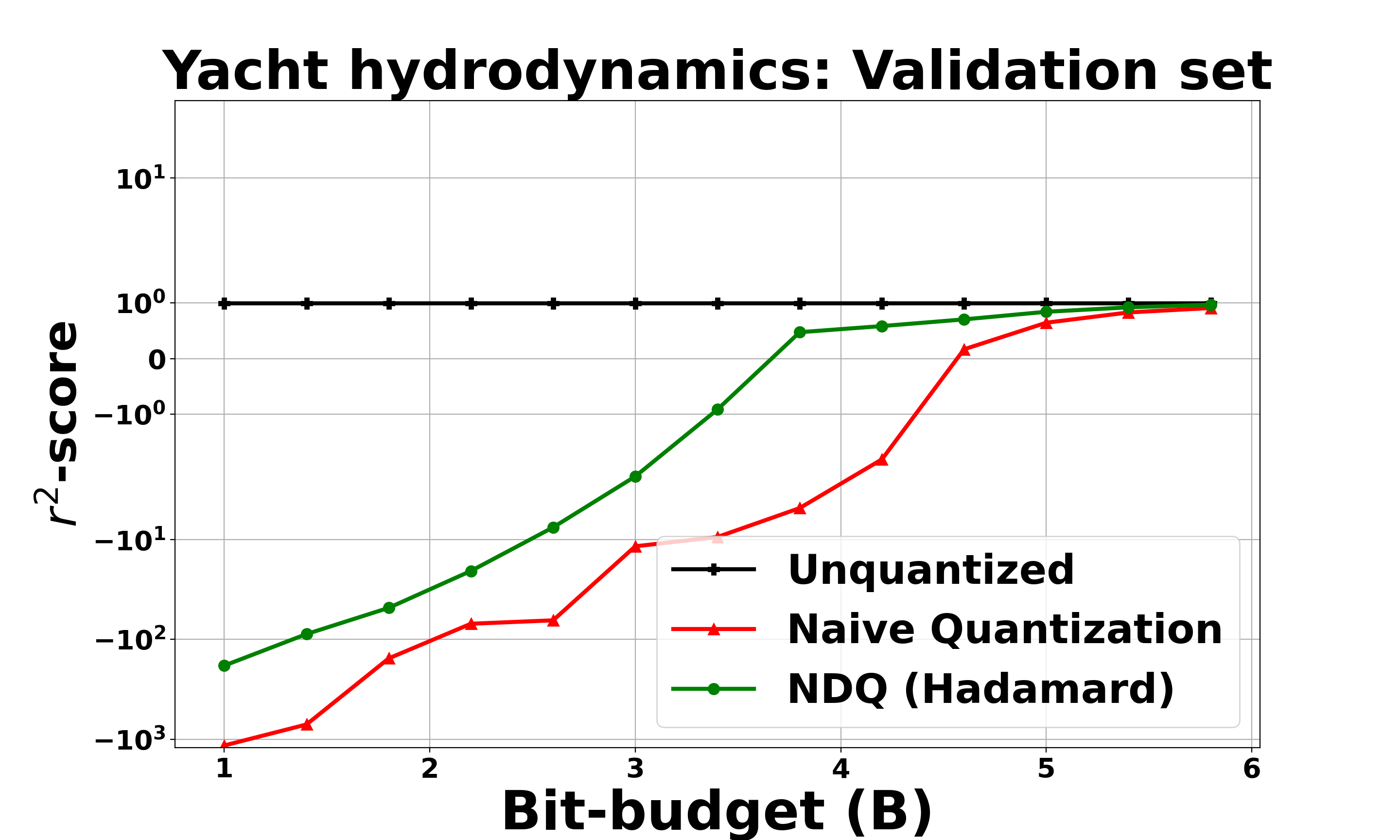}
           \caption{}
           \label{fig:r2_validation} 
        \end{subfigure}
    \end{minipage}
    \vspace{6mm}
    \caption{$r^2$-score vs. bit-budget plots for yacht hydrodynamics dataset}
    \vspace{4mm}
\end{figure}

We also visualize the effect of model quantization by obtaining the scatter plots of the predicted versus ground-truth values in Figs. \ref{fig:hydrodynamics_scatter_plots_training} and \ref{fig:hydrodynamics_scatter_plots_validation}.
The Y-axis is the \textit{predicted response}, while the X-axis is the \textit{ground-truth response}.
Ideally, for a good model, the scatter plot should be concentrated on the line $Y = X$.
We obtain these plots for three different bit budgets $B = 4$, $B = 4.5$, and $B = 5$.
In contrast to an apparent prediction trend for \textbf{NDQ}, the points in the scatter plot for na\"ive quantization are all over the plot.
As we increase $B$, we can see that, \textbf{NDQ}-Hadamard concentrates more prominently about $Y=X$ than the na\"ive strategy.

\begin{figure}[h!]
\vspace{2mm}
\begin{tabular}{ccc}
    \includegraphics[width=0.315\linewidth]{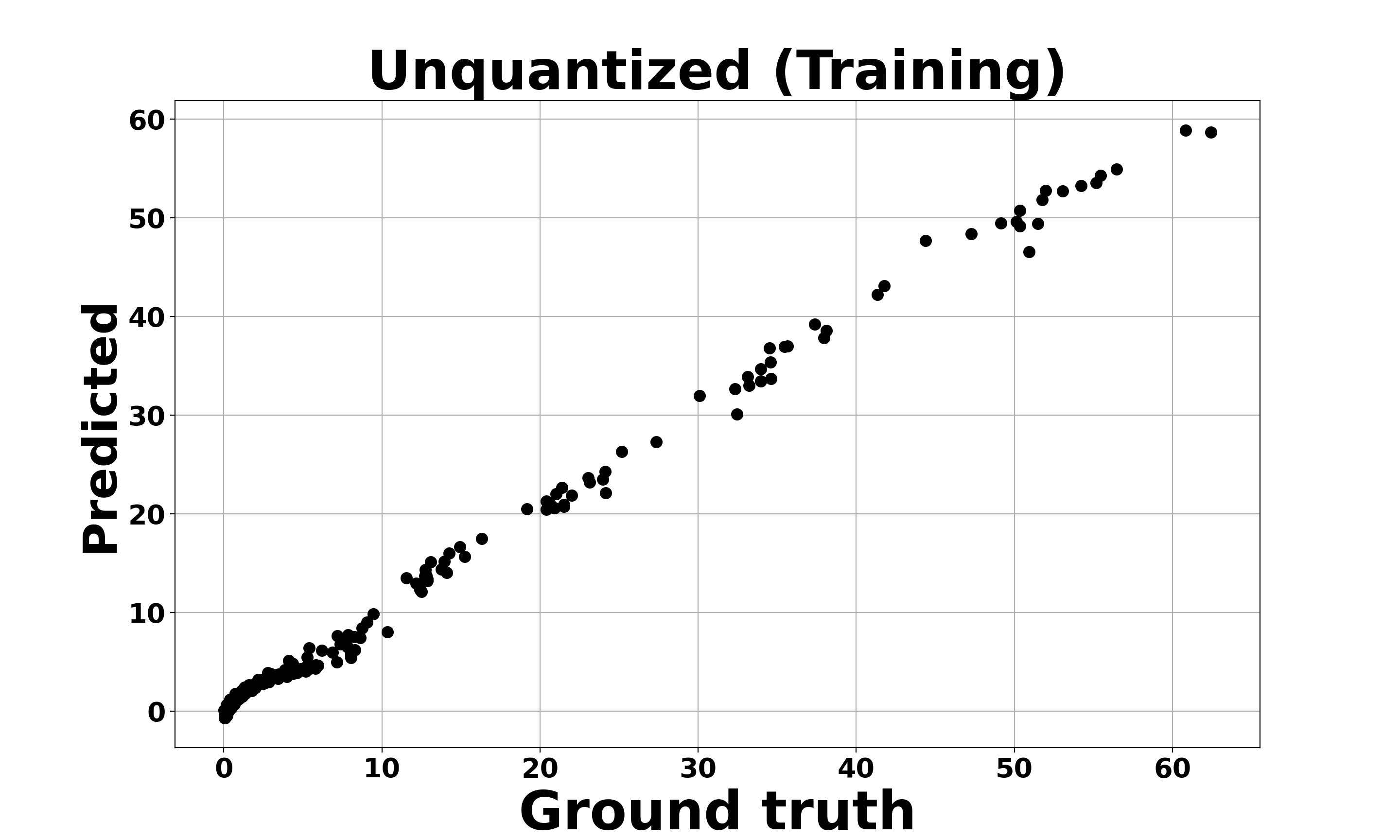} &
    \includegraphics[width=0.315\linewidth]{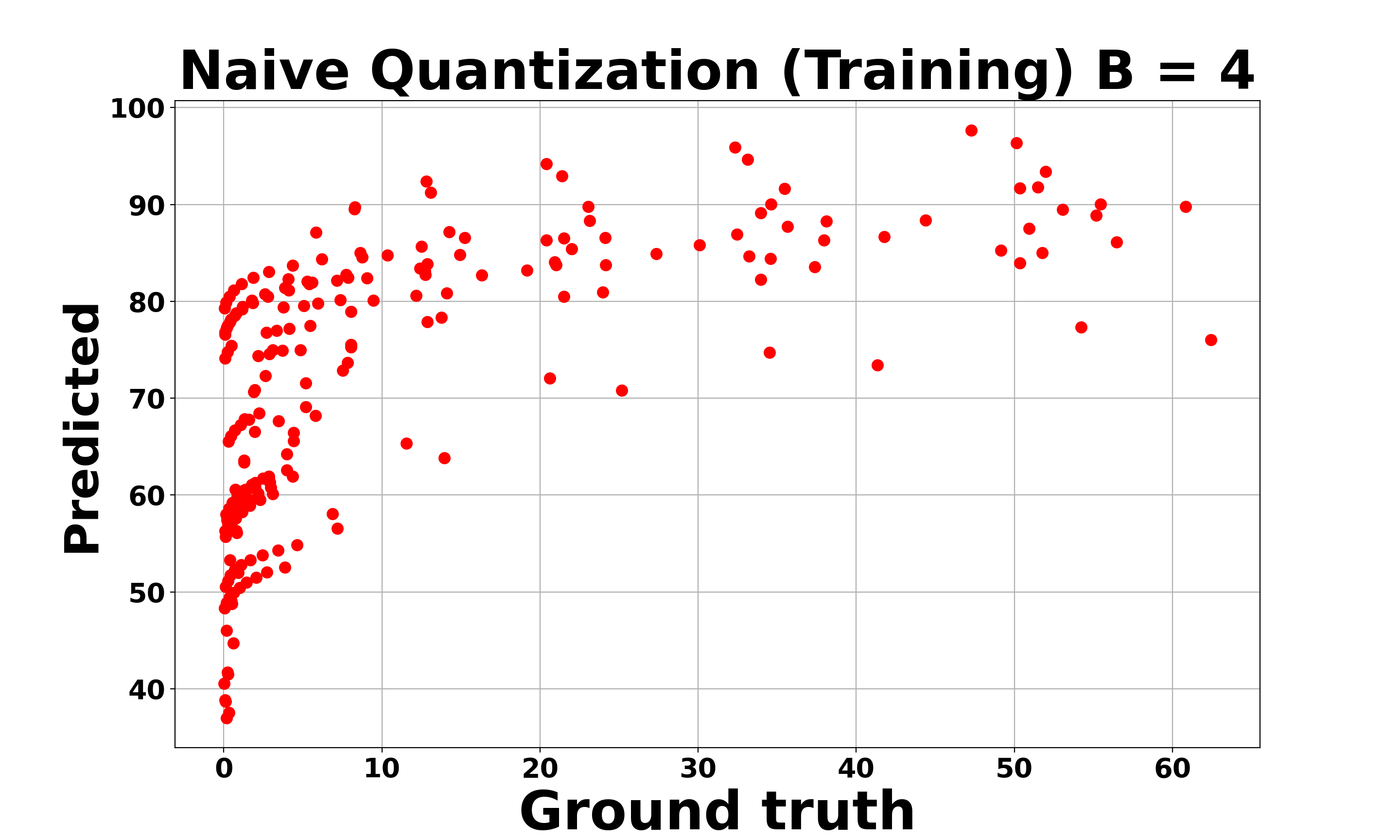} &
    \includegraphics[width=0.315\linewidth]{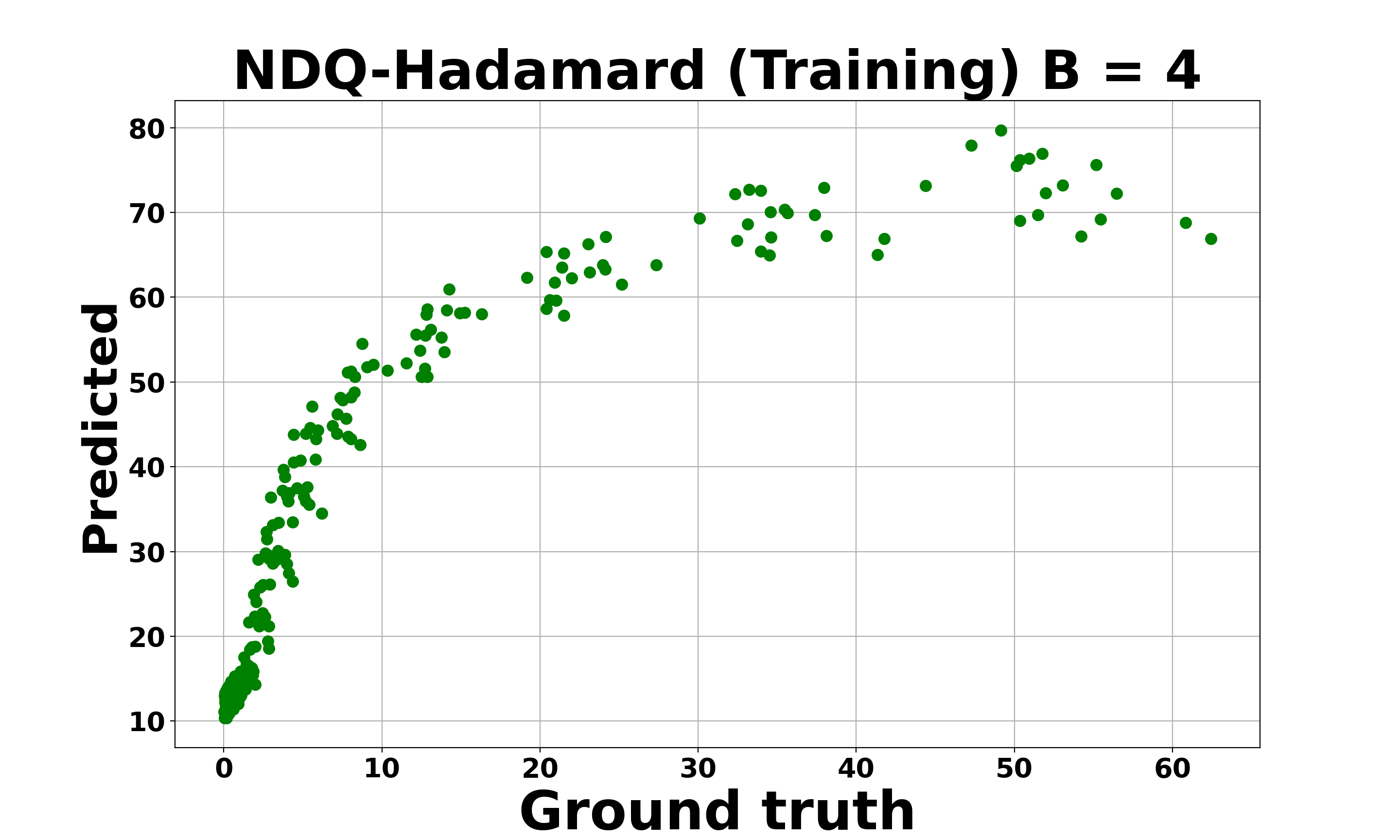}\\
    \includegraphics[width=0.315\linewidth]{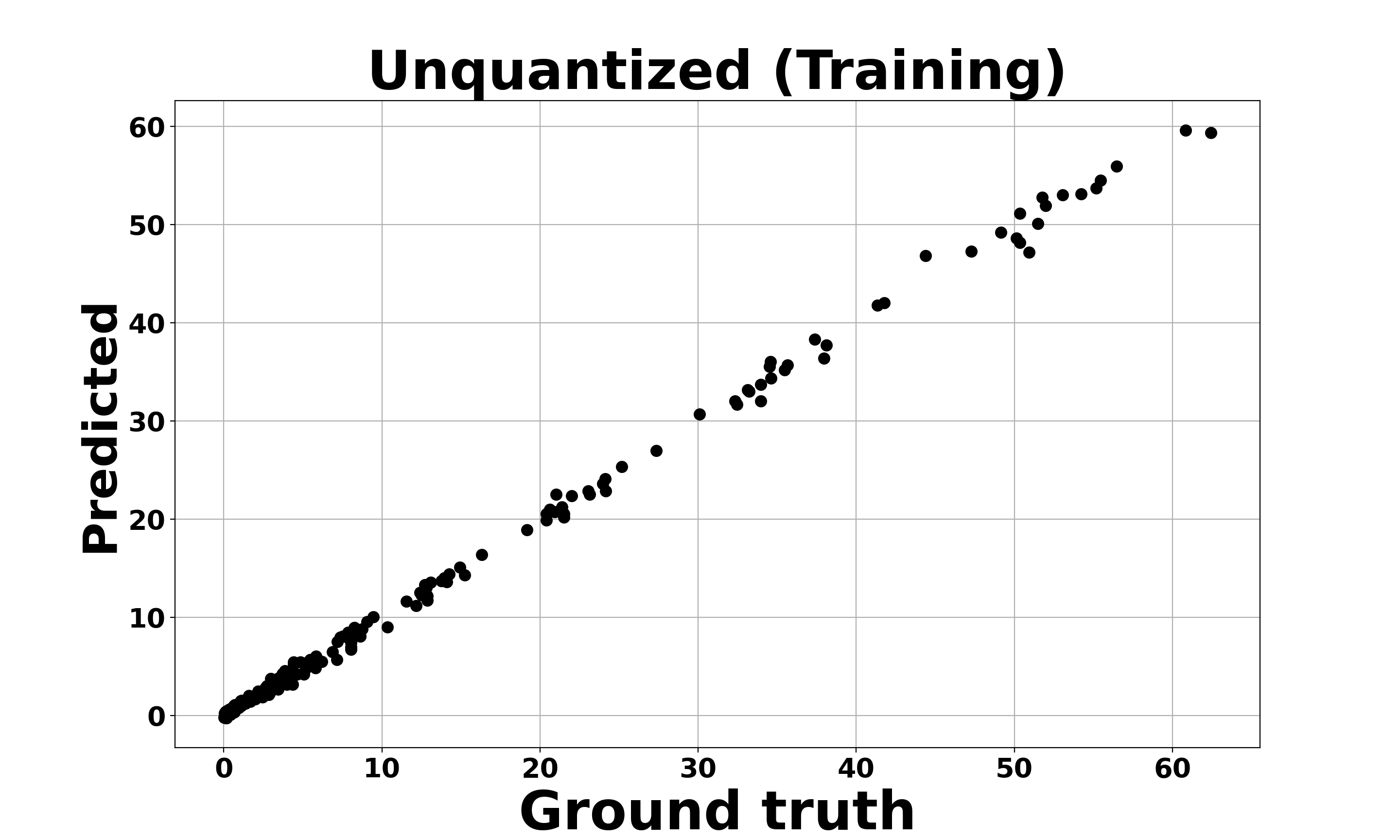} &
    \includegraphics[width=0.315\linewidth]{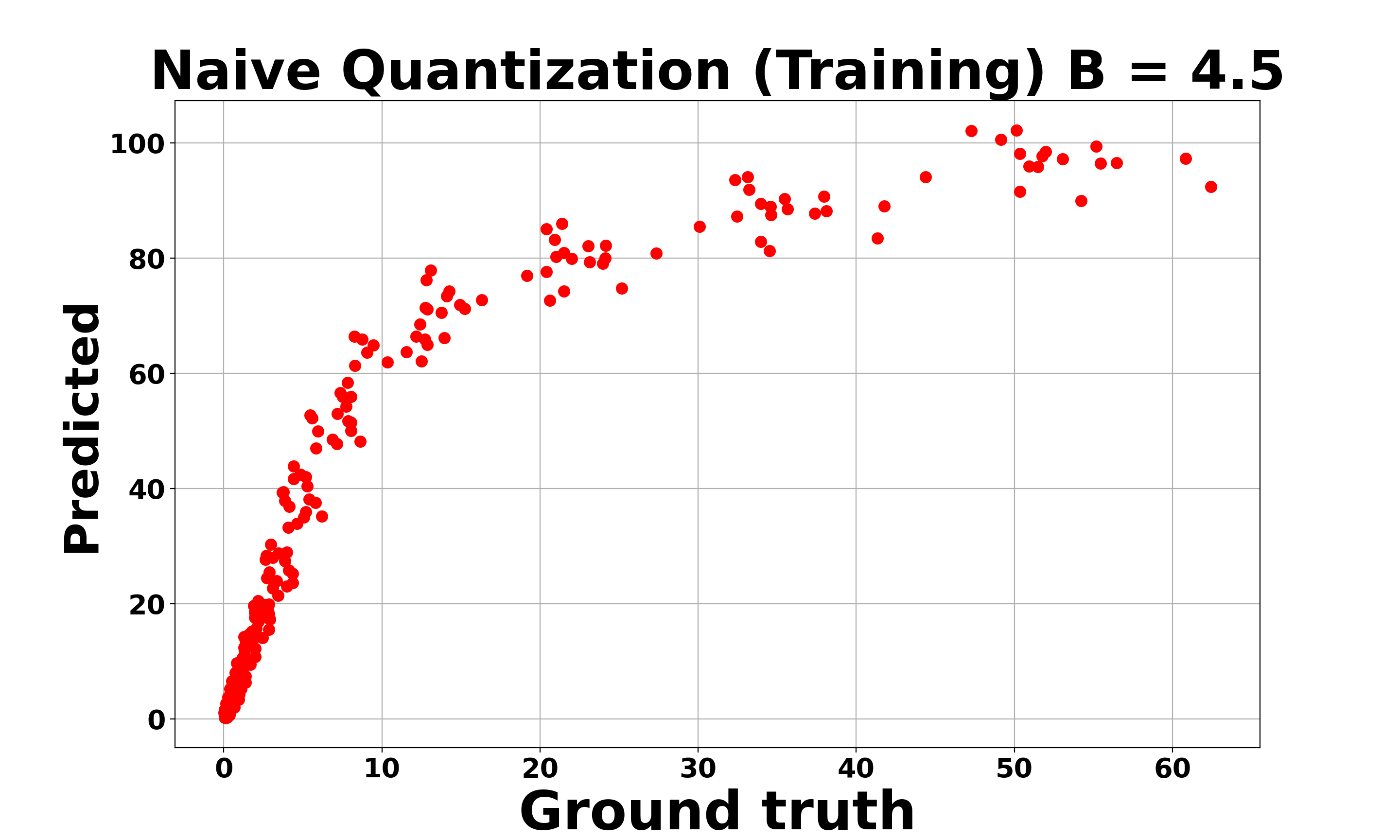} &
    \includegraphics[width=0.315\linewidth]{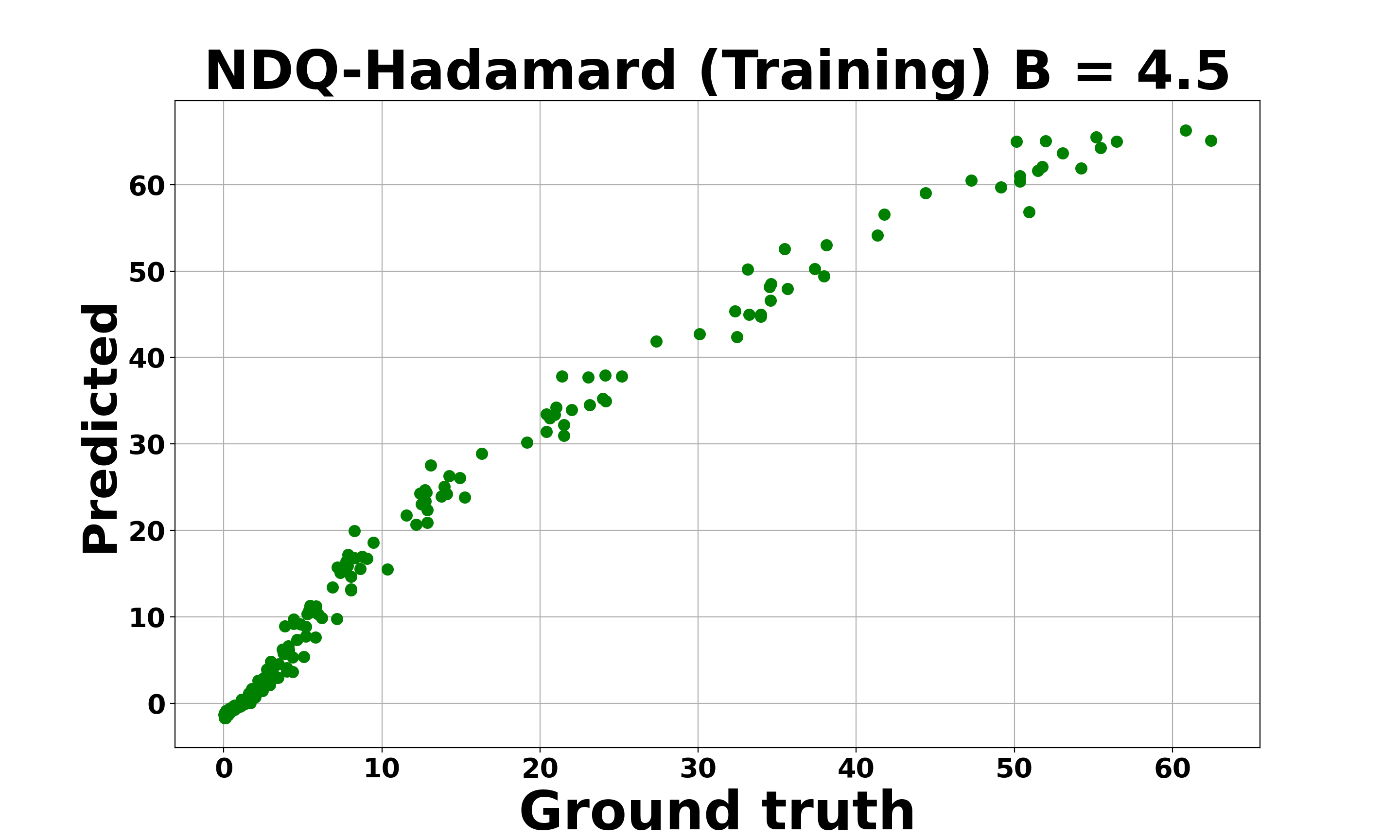} \\
    \includegraphics[width=0.315\linewidth]{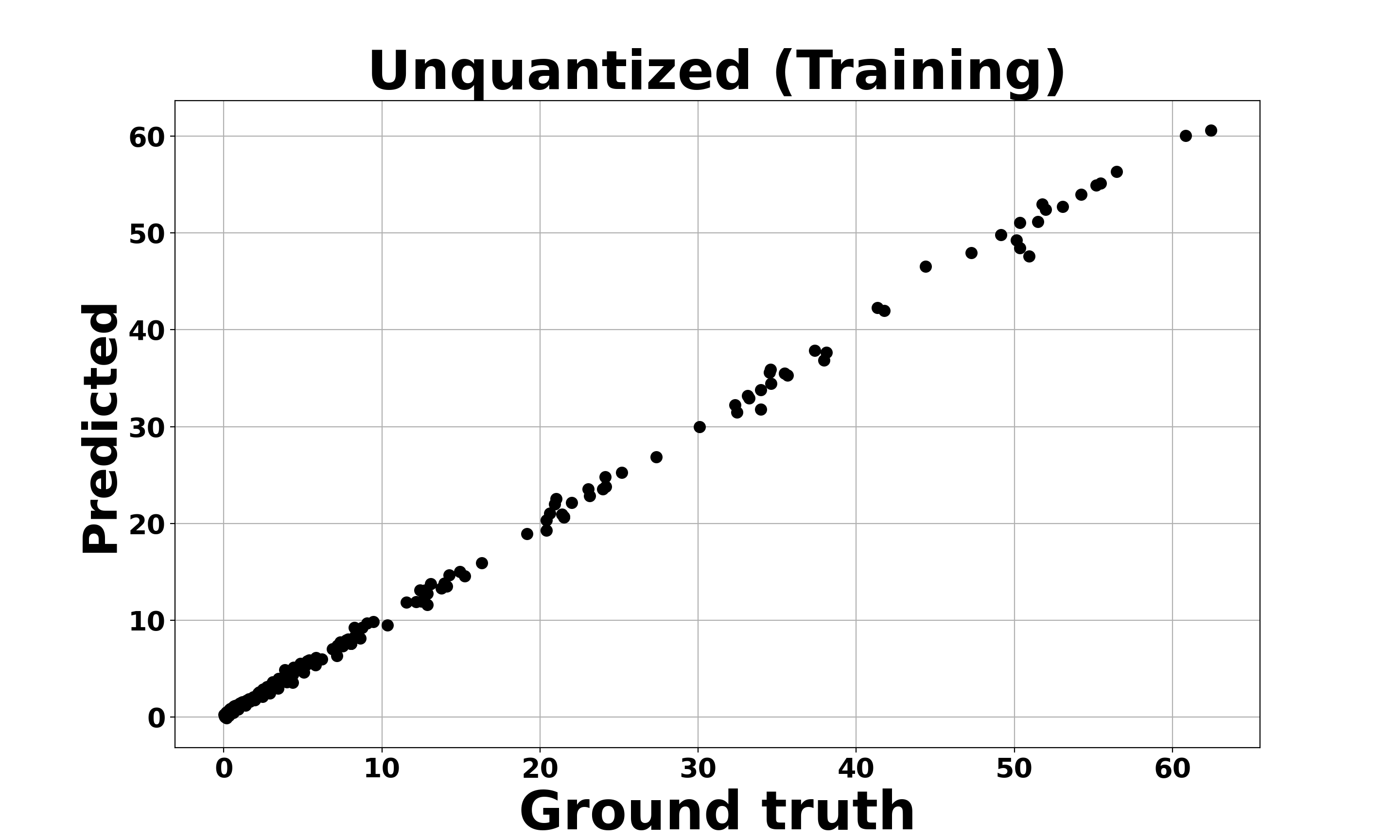} &   \includegraphics[width=0.315\linewidth]{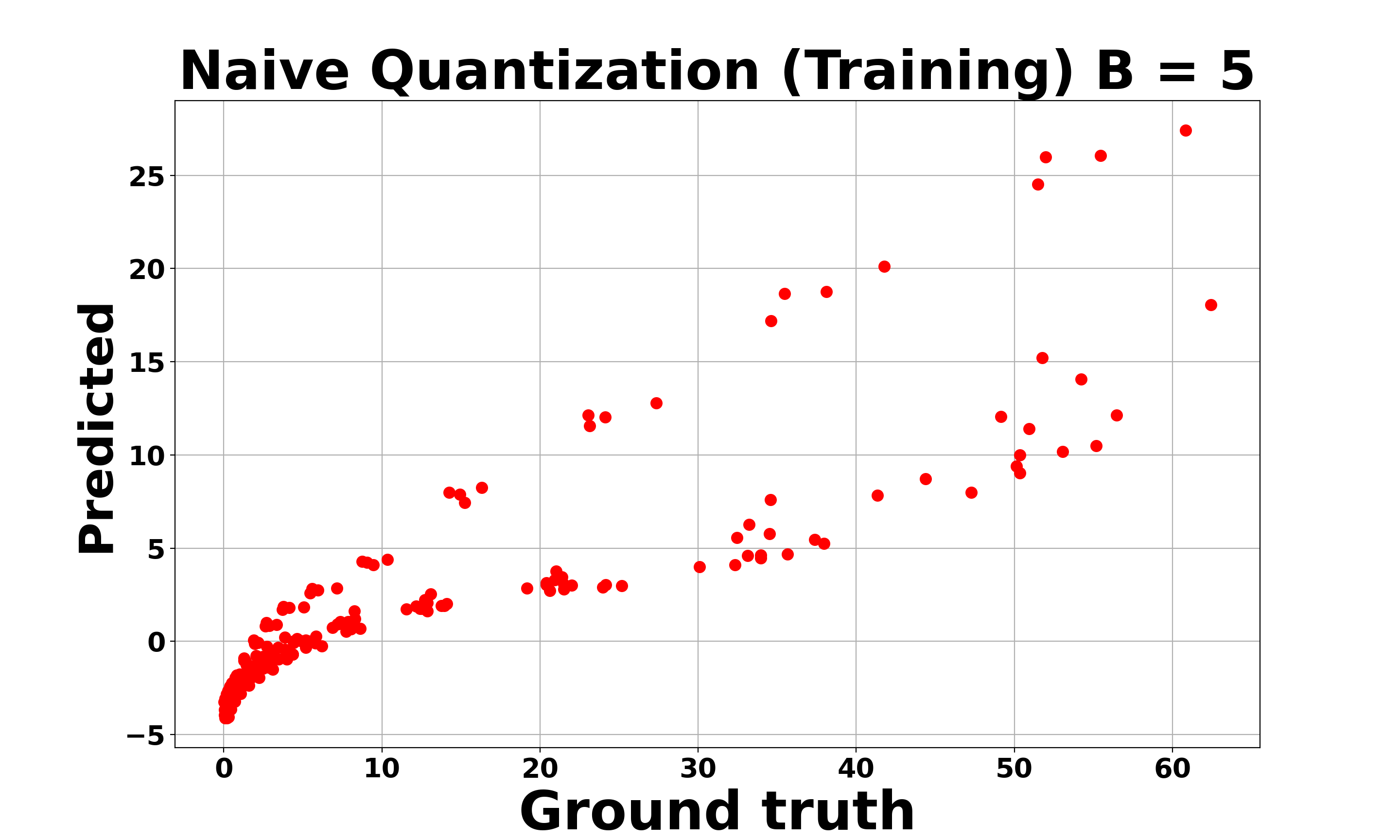} &
    \includegraphics[width=0.315\linewidth]{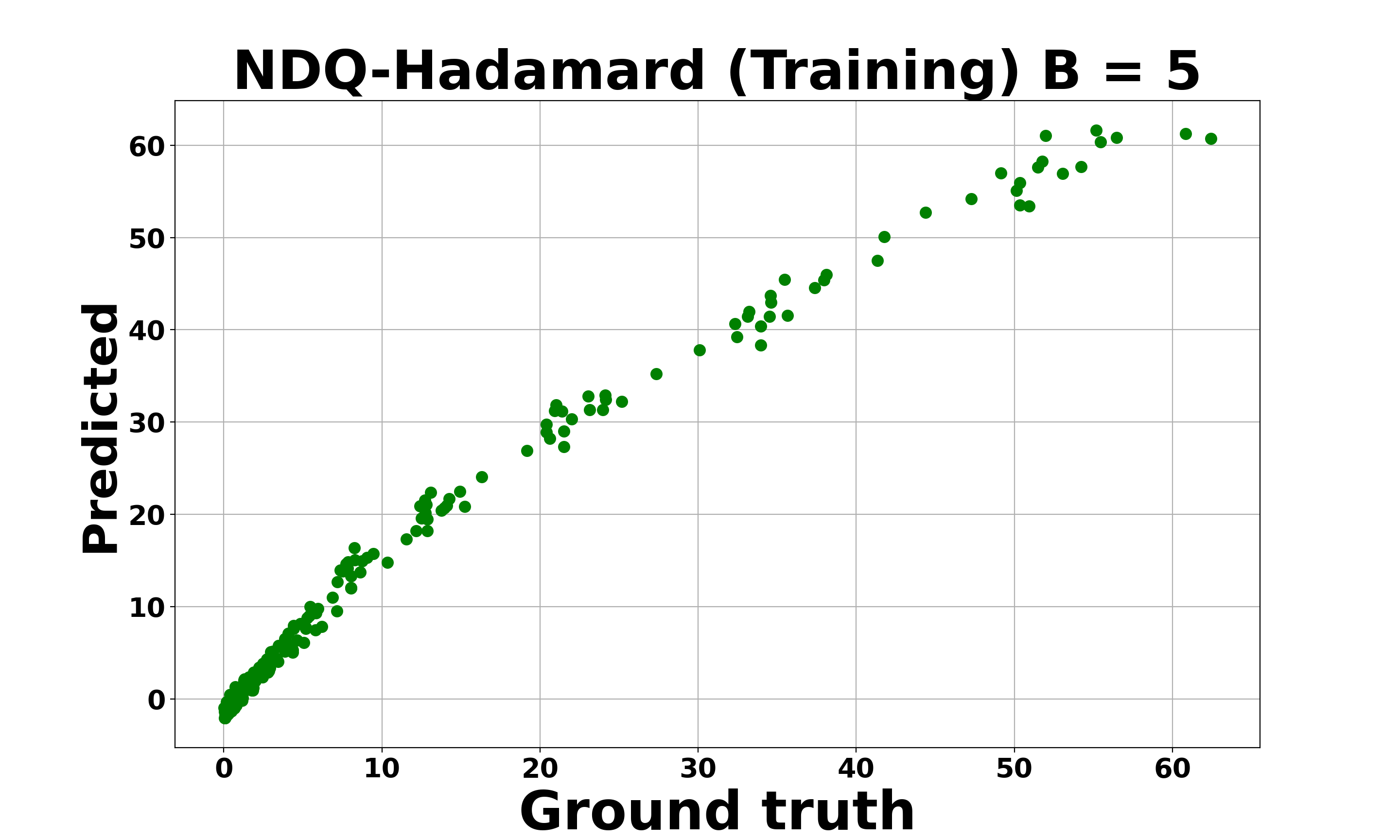}\\
\end{tabular}
\caption{Scatter plots for \textbf{training} data set}
\label{fig:hydrodynamics_scatter_plots_training}
\end{figure}

\begin{figure}[h!]
\vspace{2mm}
\begin{tabular}{ccc}
    \includegraphics[width=0.315\linewidth]{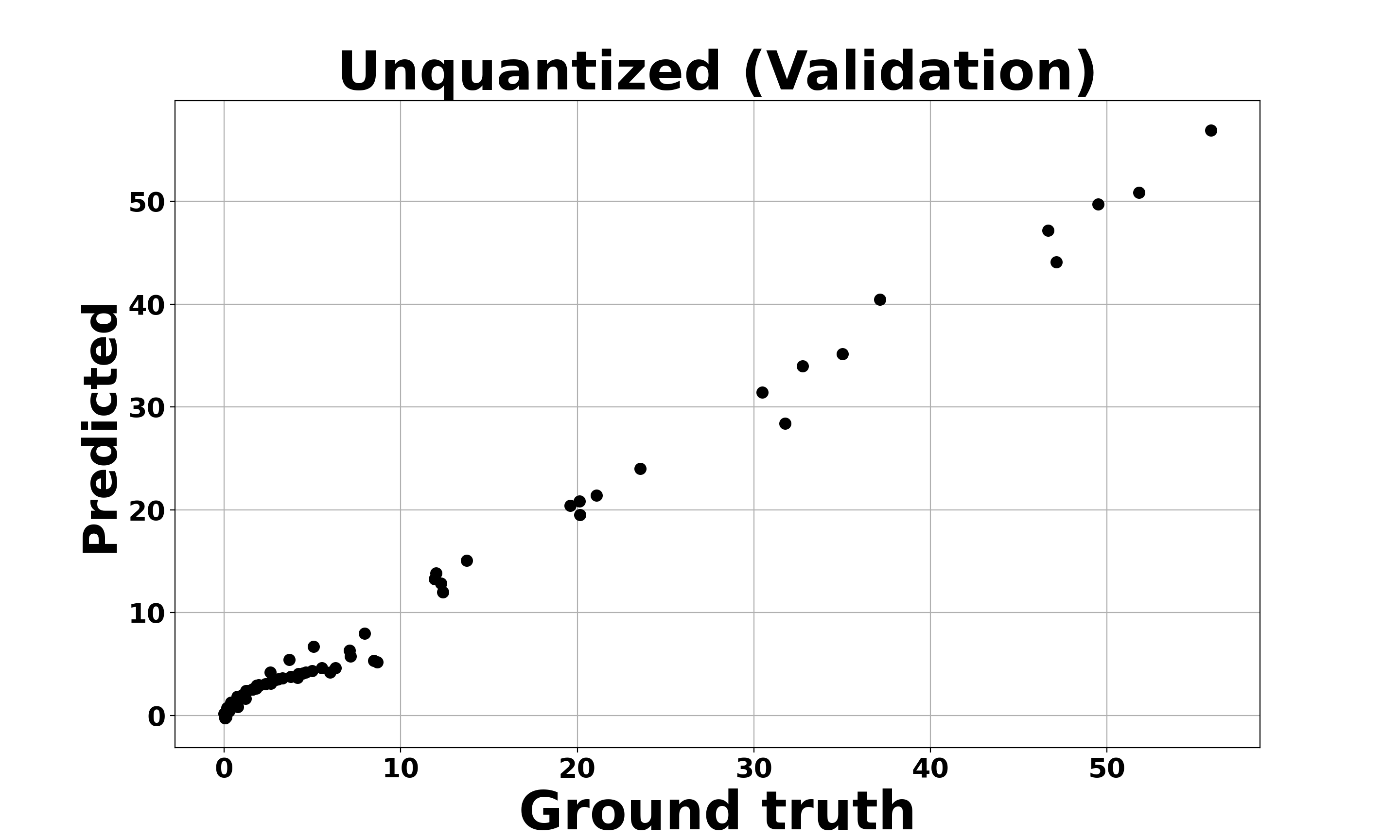} &
    \includegraphics[width=0.315\linewidth]{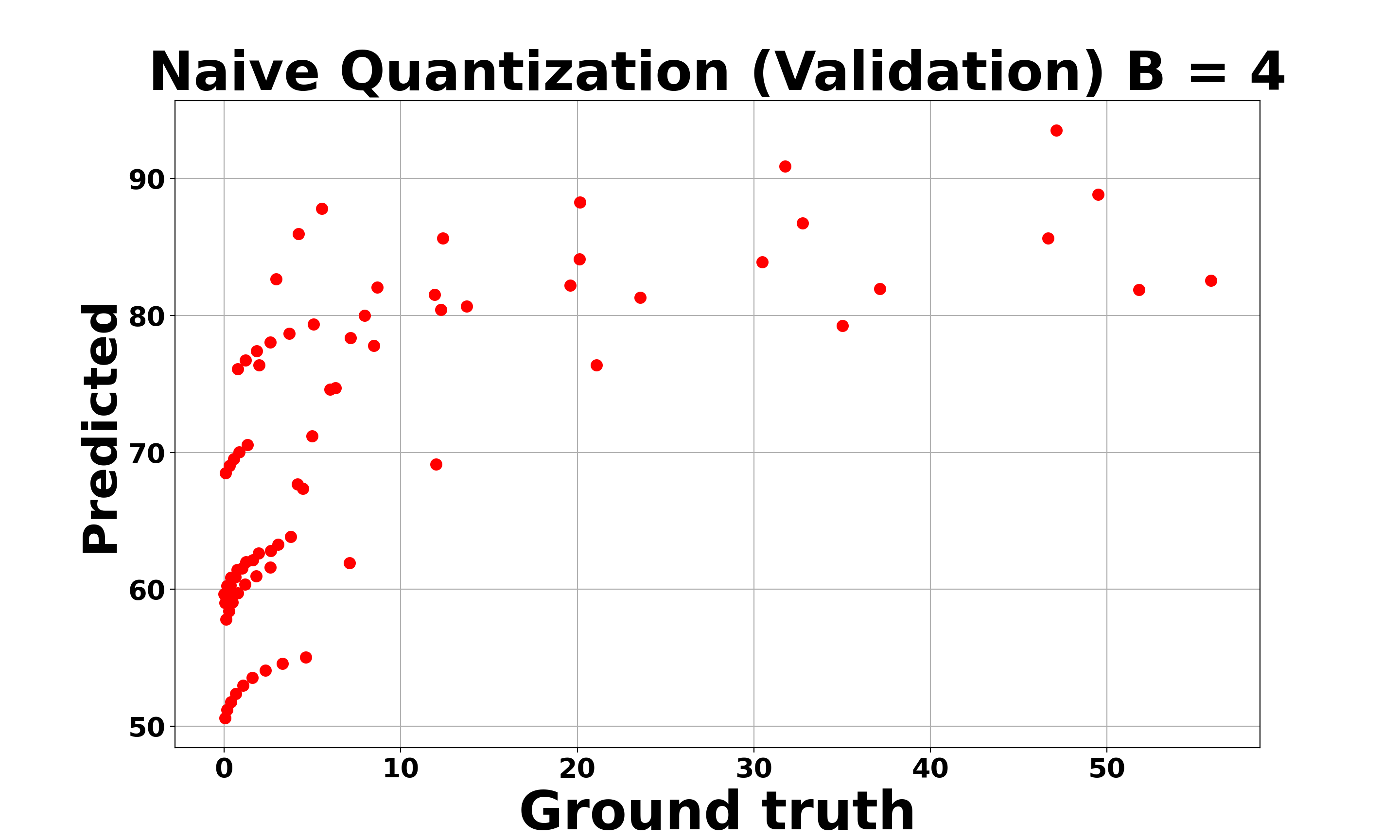} &
    \includegraphics[width=0.315\linewidth]{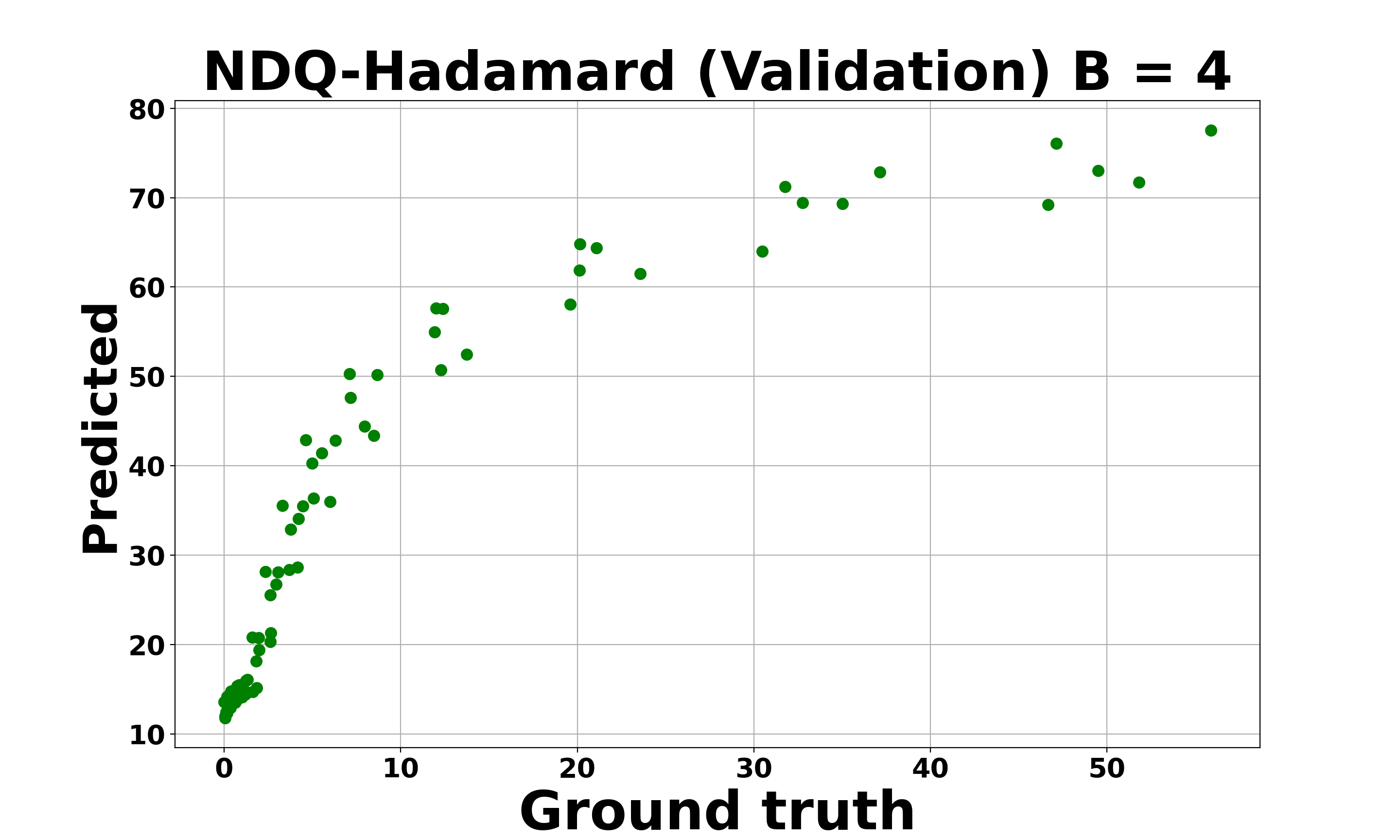}\\
    \includegraphics[width=0.315\linewidth]{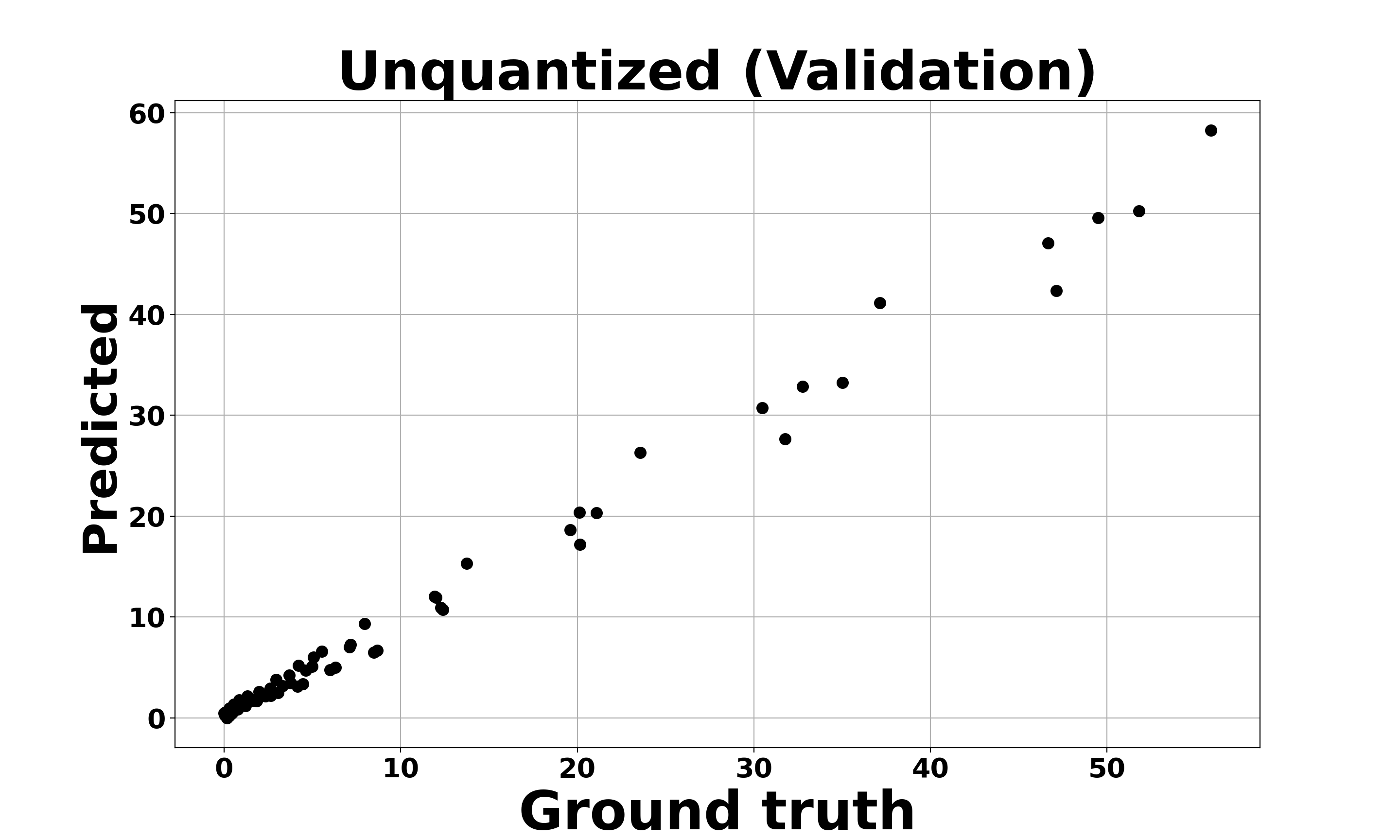} &
    \includegraphics[width=0.315\linewidth]{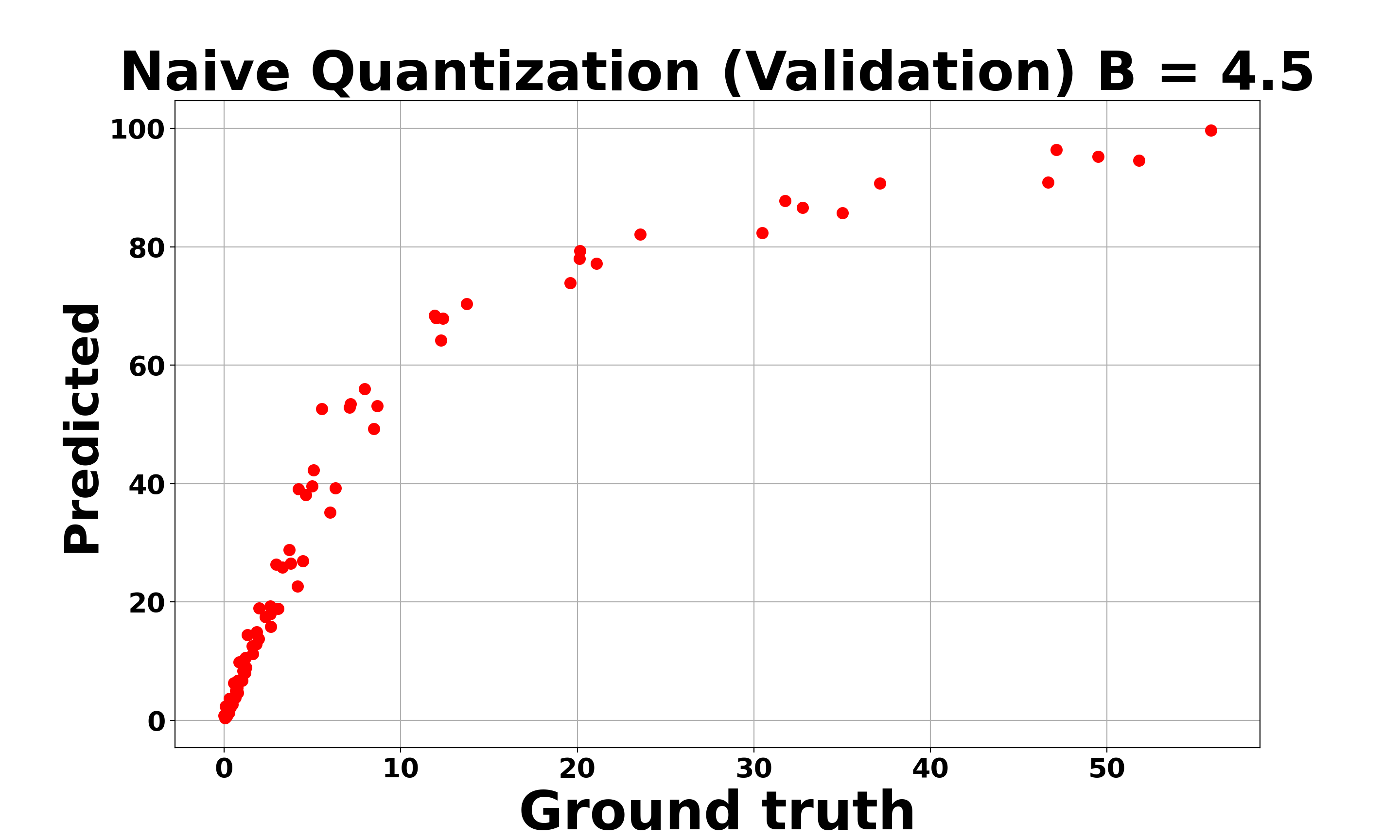} &
    \includegraphics[width=0.315\linewidth]{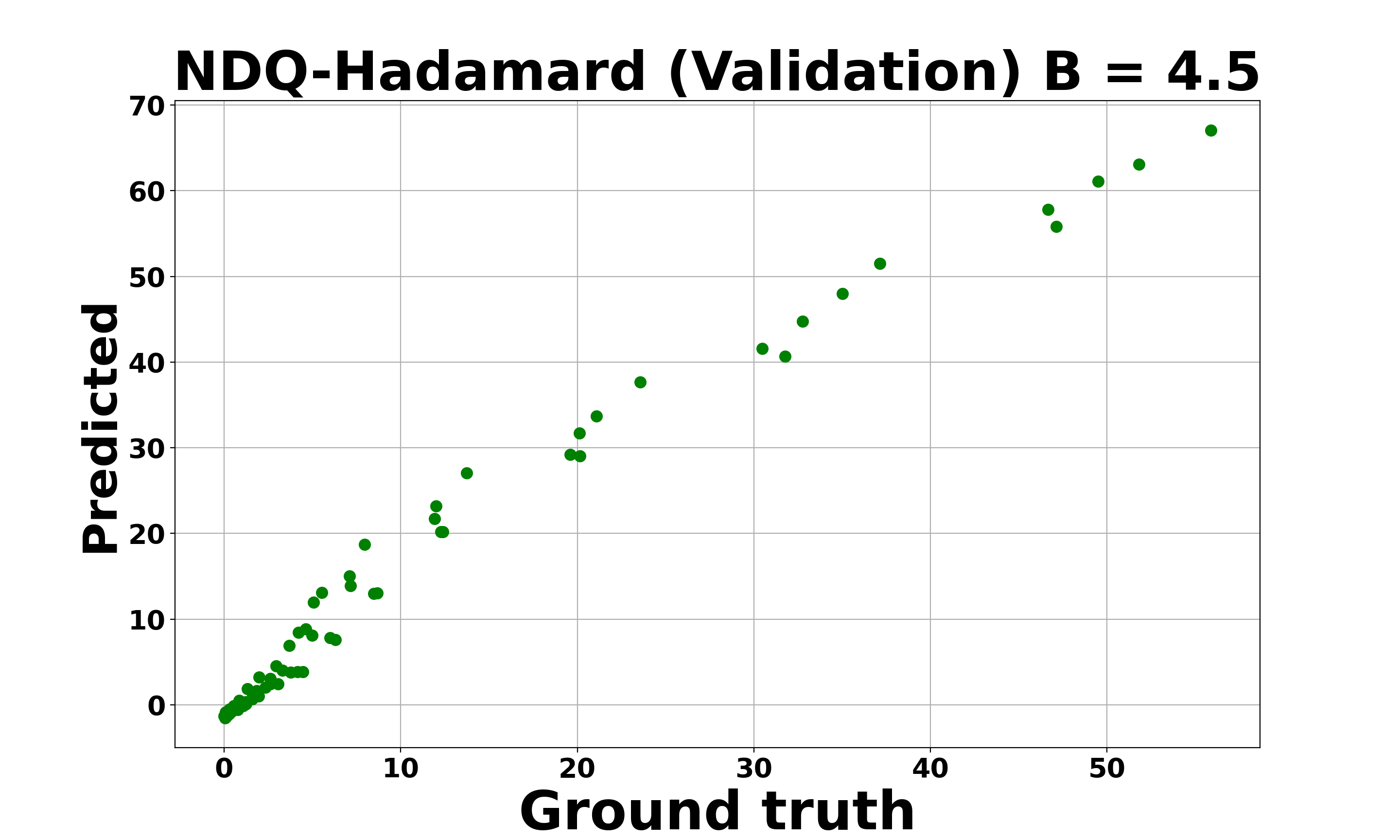} \\
    \includegraphics[width=0.315\linewidth]{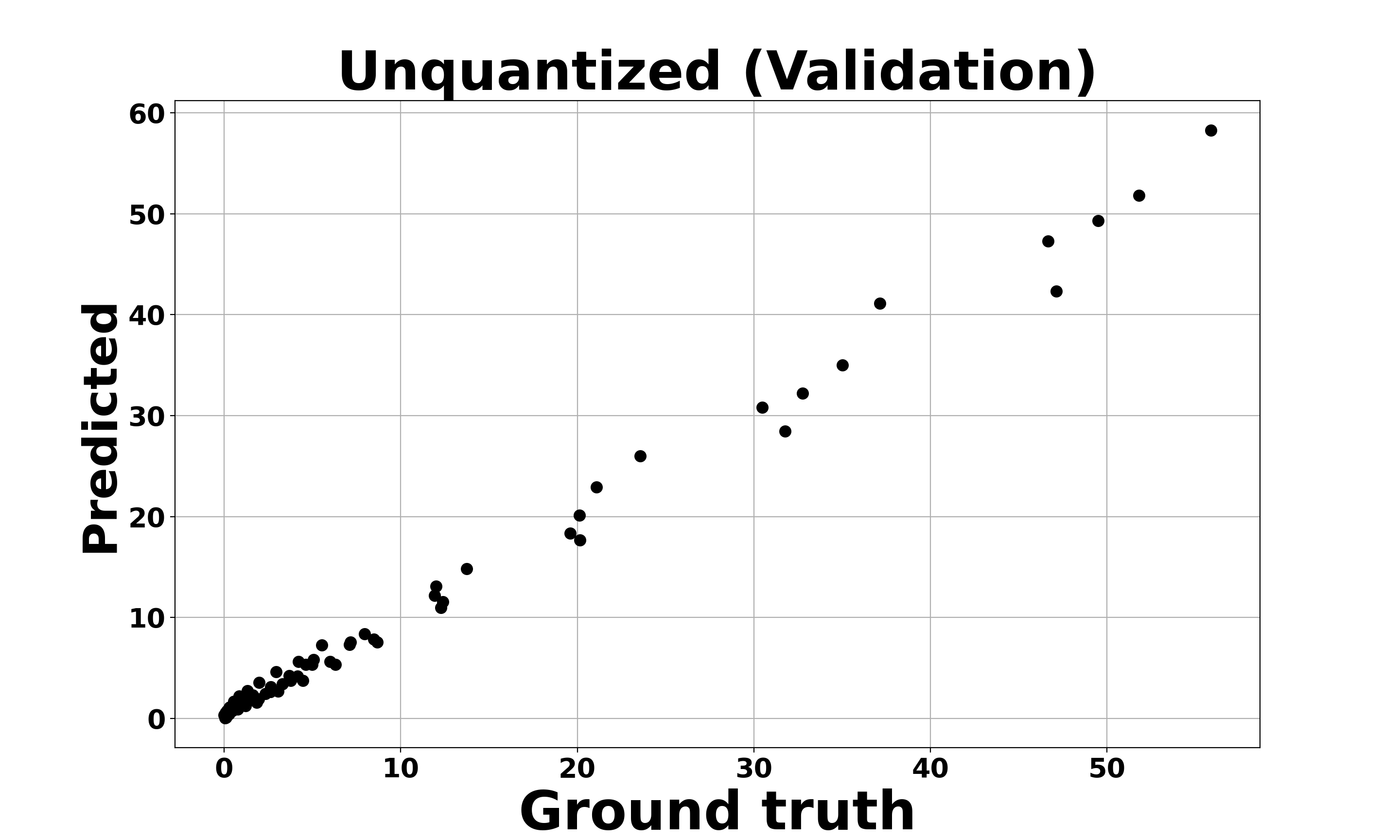} &   \includegraphics[width=0.315\linewidth]{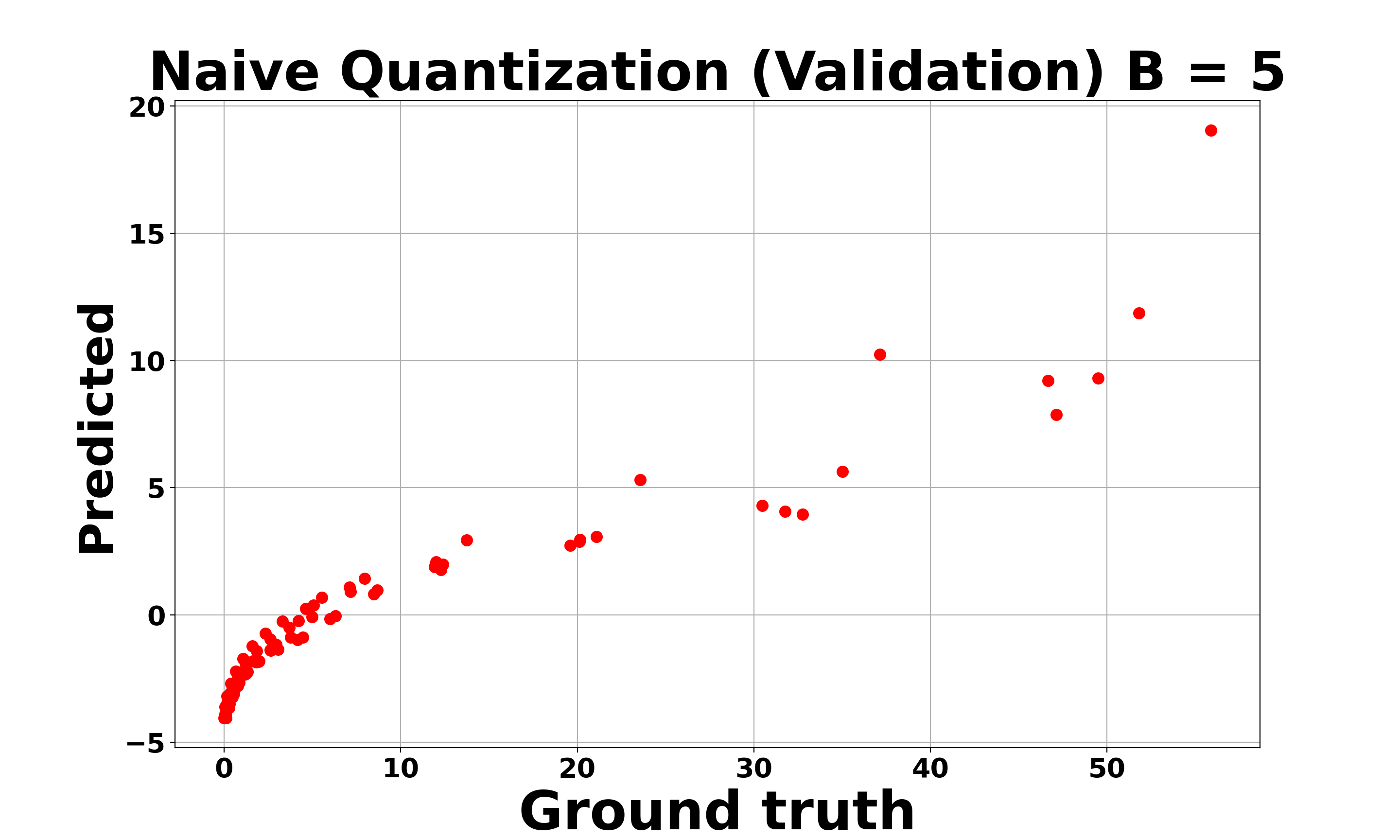} &
    \includegraphics[width=0.315\linewidth]{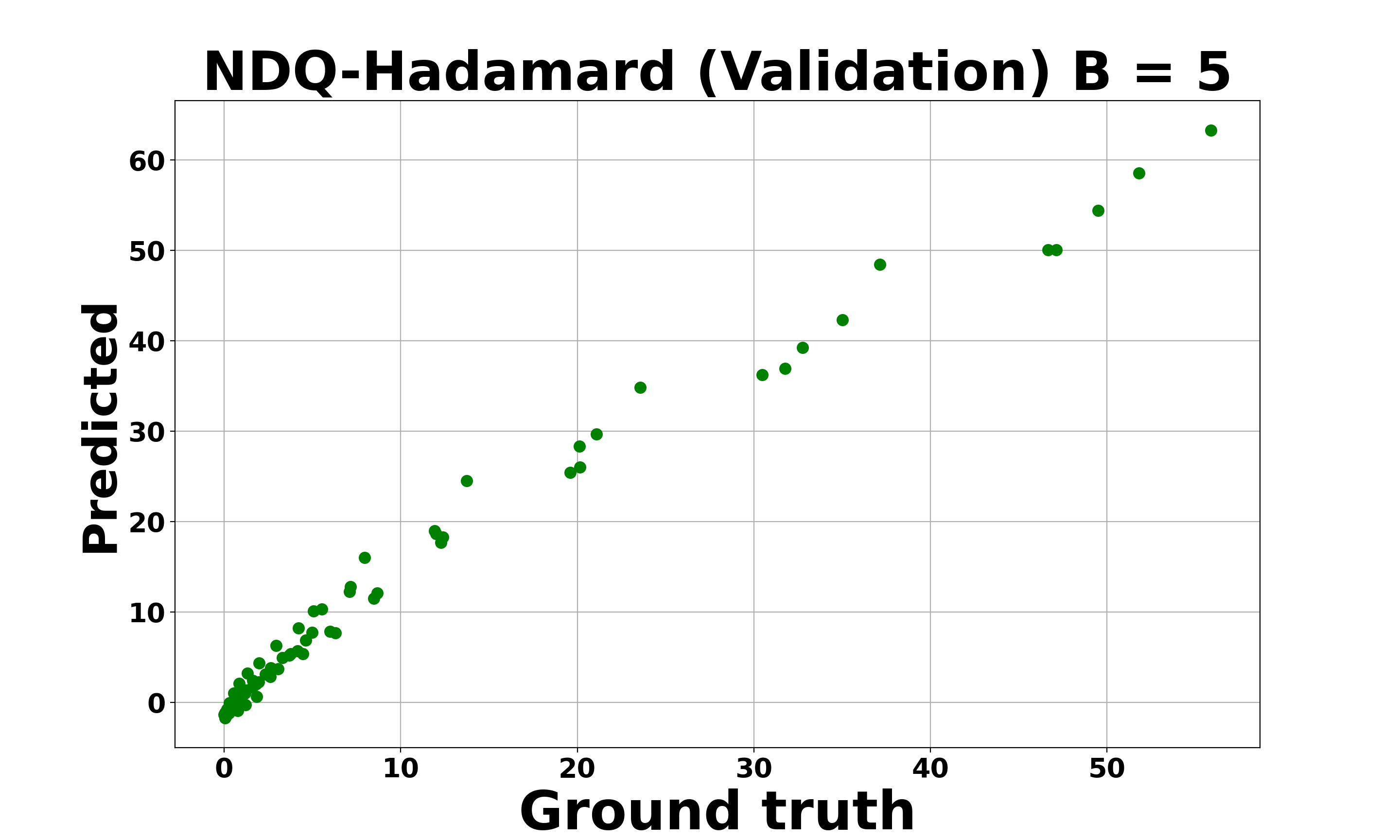}\\
\end{tabular}
\caption{Scatter plots for \textbf{validation} data set}
\label{fig:hydrodynamics_scatter_plots_validation}
\end{figure}

\section{Proof of Lemma \ref{lem:mutual_information_bit_budget_inequality}: An inequality between the mutual information \texorpdfstring{$I(\yv; \Qsf(\Xv, \yv))$}{I()} and the bit-budget \texorpdfstring{$B$}{B}}
\label{app:proof_mutual_information_bit_budget_inequality}

Given training data $(\Xv, \yv)$, $\thetatv \triangleq \Qsf(\Xv, \yv)$ is the output of the learning code $\Qsf \in \Qcal_{n,d,B}$ and serves as an estimate of the latent parameter $\thetav$ that generated the training data according to \eqref{eq:noisy_planted_model}.
For any such learning code $\Qsf$, since $\Qsf(\Xv, \yv)$ takes at most $2^{dB}$ values,
\begin{equation}
    I(\yv;\Qsf(\Xv, \yv)) = H(\Qsf(\yv)) - H(\Qsf(\Xv, \yv) \vert \yv) \leq H(\Qsf(\Xv, \yv)) \leq dB.
\end{equation}

The last inequality holds true since entropy of a discrete random variable is maximized when it is distributed uniformly over the domain.

\section{Proof of Lemma \ref{lem:inequality_supremum_bayes_error}: Worst-case risk over the parameter space \texorpdfstring{$\Thetav$}{Thetav} is asymptotically lower bounded by the Bayes risk over \texorpdfstring{$\Real^d$}{Reald} for Gaussian prior}
\label{app:proof_inequality_supremum_bayes_error}

The proof of this lemma parallels the arguments in the proof of Thm. 3.1 in \cite{zhu_neurips_2014}.
Let $\Thetaov := \Real^d\setminus\Thetav$.
We have,
\begin{align}
\label{eq:splitting_bayes error}
    \int_{\thetav \in \Real^d} R(\Qsf(\Xv, \yv), \thetav) dF(\thetav) &\leq \int_{\thetav \in \Thetav}R(\Qsf(\Xv, \yv), \thetav) dF(\thetav) + \int_{\thetav \in \Thetaov}R(\Qsf(\Xv, \yv), \thetav) dF(\thetav) \nonumber \\
    &\leq \sup_{\thetav \in \Thetav} R(\Qsf(\Xv, \yv), \thetav) + \int_{\thetav \in \Thetaov}R(\Qsf(\Xv, \yv), \thetav) dF(\thetav)
\end{align}

It then suffices to show that $\int_{\thetav \in \Thetaov}R(\Qsf(\Xv,\yv), \thetav) dF(\thetav) \to 0$ as $d \to \infty$.
For a fixed $\delta \in (0,1)$, define the distribution $F_{\delta} \equiv \Ncal(\mathbf{0}, c^2\delta^2\Iv_d)$.
Suppose $\thetav \sim F_{\delta}$.
We then have,

\begin{align}
\label{eq:upper_bound_bayesian_error_over_theta_complement}
    \int_{\thetav \in \Thetaov}R(\Qsf(\Xv,\yv), \thetav)dF_{\delta}(\thetav) &= \frac{1}{d}\int_{\thetav \in \Thetaov}\mathbb{E}_{\yv}\Br{\norm{\Qsf(\Xv,\yv) - \thetav}_2^2}dF_{\delta}(\thetav) \nonumber \\
    &\leq \frac{2}{d}\int_{\thetav \in \Thetaov}\norm{\thetav}_2^2 dF_{\delta}(\thetav) + \frac{2}{d}\int_{\thetav \in \Thetaov}\mathbb{E}_{\yv}\Br{\norm{\Qsf(\Xv,\yv)}_2^2}dF_{\delta}(\thetav) \nonumber \\
    &\leq \underbrace{\frac{2}{d}\sqrt{\int_{\thetav \in \Thetaov}dF_{\delta}(\thetav)}\sqrt{\int_{\thetav \in \Real^d}\norm{\thetav}_2^4dF_{\delta}(\thetav)}}_{T_1} + \underbrace{2c^2\int_{\thetav \in \Thetaov}dF_{\delta}(\thetav)}_{T_2},
\end{align}

where the last inequality follows from Cauchy-Schwarz inequality for probability measures and the fact that $\Qsf:\Real^{n \times d} \times \Real^n \to \Thetav$, implying $\frac{1}{d}\lVert \Qsf(\Xv, \yv)\rVert_2^2 \leq c^2$.
To upper bound $T_2$, note that,

\begin{align*}
    \int_{\thetav \in \Thetaov}dF_{\delta}(\thetav) &= \mathbb{P}_{\thetav \sim F_{\delta}}\Br{\frac{1}{d}\sum_{i=1}^{d}\theta_i^2 > c^2}\\
    &= \mathbb{P}_{\thetav \sim F_{\delta}}\Br{\frac{1}{d}\sum_{i=1}^{d}\Paren{\Paren{\frac{\theta_i}{\delta c}}^2 - 1} > \frac{1 - \delta^2}{\delta^2}} \stackrel{\mathrm{(i)}}{\leq} 2\exp\Paren{-\frac{d(1 - \delta^2)^2}{8\delta^4}}.
\end{align*}

Here, $\mathrm{(i)}$ follows from the concentration inequality: If $Z_1, \ldots, Z_d \sim \Ncal(0,1)$ and $t \in (0,1)$, then $\mathbb{P}\Paren{\left\lvert \frac{1}{d}\sum_{i=1}^{d}(Z_i^2 - 1) \right\rvert > t} \leq 2e^{-dt^2/8}$.
On the other hand, to bound $T_1$, note that,

\begin{align*}
     \int_{\thetav \in \Real^d}\norm{\thetav}_2^4dF_{\delta}(\thetav) = \mathbb{E}_{\thetav \sim F_{\delta}}\Br{\norm{\thetav}_2^4} &= \sum_{i=1}^{d}\mathbb{E}_{F_{\delta}}\Br{\theta_i^4} + \sum_{i \neq j}\mathbb{E}_{F_{\delta}}\Br{\theta_i^2} \cdot \mathbb{E}_{F_{\delta}}\Br{\theta_j^2} \\
     &= \mathbb{E}_{F_{\delta}}\Br{\theta_1^4}\cdot d + \binom{d}{2}c^4\delta^4 = O(d^2). 
\end{align*}

So, the R.H.S. of \eqref{eq:upper_bound_bayesian_error_over_theta_complement} can be upper bounded by,

\begin{equation*}
    \int_{\thetav \in \Thetaov}R(\Qsf(\Xv, \yv), \thetav) dF(\thetav) \leq \frac{2}{d}\sqrt{2}\exp\Paren{-\frac{d(1 - \delta^2)^2}{16\delta^4}}O(d) + 2c^2\exp\Paren{-\frac{d(1- \delta^2)^2}{8\delta^4}}
\end{equation*}

for any $\delta \in (0,1)$.
This implies $\int_{\thetav \in \Thetaov} R(\Qsf(\Xv, \yv), \thetav)dF(\thetav) \to 0$ as $d \to \infty$.

\section{Proof of Lemma \ref{lem:lower_bound_MI_Gaussian_prior}: Lower bound on mutual information \texorpdfstring{$I(\Qsf(\Xv, \yv), \thetav)$}{I()} for Gaussian prior over \texorpdfstring{$\thetav$}{thetav}}
\label{app:proof_lower_bound_MI_Gaussian_prior}

Given the training data $(\Xv, \yv)$, for a prior distribution $\thetav \sim F_{\delta} \equiv \Ncal(\mathbf{0}, c^2\delta^2\Iv_d)$, the Bayes optimal estimator is the maximum likelihood (ML) estimator denoted as $\thetahv$.
Since $\yv = \Xv\thetav + \vv$ where $\vv \sim \Ncal(\mathbf{0}, \sigma^2\Iv_n)$, it can be easily shown \cite{rosenberg} that the posterior distribution of $\thetav$ given $\yv$ is $\Ncal(\muov, \Sigmaov)$ where,
\begin{align}
\label{eq:conditional_mean_and_variance}
    \muov &= \Paren{\Xv^{\top}\Xv + \frac{\sigma^2}{c^2\delta^2}\Iv_d}^{-1}\Xv^{\top}\yv,\\
    \Sigmaov &= \sigma^2\Paren{\Xv^{\top}\Xv + \frac{\sigma^2}{c^2\delta^2}\Iv_d}^{-1}.
\end{align}

The ML estimator is then given by $\thetahv = \mathbb{E}\Br{\thetav \vert \yv} = \muov$.
To begin with, we state the following two lemmas that will be required for our analysis.
Their proofs are deferred to \S \ref{subsec:proof_entropy_ML_estimator} and \S \ref{subsec:proof_squared_error_ML_estimator}.

\begin{lemma_boxed}
\label{lem:entropy_ML_estimator}
For the linear model $\yv = \Xv\thetav + \vv$ and Gaussian prior $\thetav \sim F_{\delta} \equiv \Ncal(\mathbf{0}, c^2\delta^2\Iv_d)$, the differential entropy of the ML estimator $\thetahv = \mathbb{E}\Br{\thetav \vert \yv}$ is given by,
\begin{equation}
    h(\thetahv) = \frac{1}{2}\log\Paren{(2\pi e)^d \prod_{i=1}^{d}\frac{c^4\delta^4\sigma_i^2}{\sigma^2 + c^2\delta^2\sigma_i^2}},
\end{equation}

where $\{\sigma_i\}_{i \in [d]}$ denote the singular values of the feature matrix $\Xv \in \Real^{n \times d}$.
\end{lemma_boxed}

\begin{lemma_boxed}
\label{lem:squared_error_ML_estimator}
Under conditions of Lemma \ref{lem:entropy_ML_estimator}, the $\ell_2$-risk of the ML estimator $\thetahv$ is given by,
\begin{equation}
    \mathbb{E}_{\yv}\Br{\lVert\thetahv - \thetav\rVert_2^2} = \Tr{\Cov{\thetahv - \thetav}} = \sum_{i=1}^{d}\frac{c^2\delta^2\sigma^2}{\sigma^2 + c^2\delta^2\sigma_i^2},
\end{equation}

where $\{\sigma_i\}_{i \in [d]}$ denote the singular values of the feature matrix $\Xv \in \Real^{n \times d}$, $\Tr{\cdot}$ denotes the trace of a matrix, and $\Cov{\cdot}$ is the covariance.
\end{lemma_boxed}
For a prior distribution $\thetav \sim F_{\delta} \equiv \Ncal(\mathbf{0}, c^2\delta^2\Iv_d)$, the Bayes risk for any estimate $\zetav \in \Thetav$ is,
\begin{equation}
\label{eq:expression_Bayes_error_as_trace_of_covariance}
    \int_{\thetav \in \Real^d} R(\zetav, \thetav) dF_{\delta}(\thetav) = \mathbb{E}_{\zetav, \thetav \sim F_{\delta}}\Br{\frac{1}{d}\norm{\zetav - \thetav}_2^2} = \frac{1}{d}\Tr{\Cov{\zetav - \thetav}}
\end{equation}

Consider the decomposition of this error about the Bayes optimal estimator $\thetahv = \mathbb{E}\Br{\thetav \vert \yv} = \muov$.
Writing $\zetav - \thetav = (\zetav - \thetahv) + (\thetahv - \thetav)$, we have,
\begin{equation}
\label{eq:covariance_matrix_decomposition_about_ML_estimator}
    \Cov{\zetav - \thetav} = \Cov{\zetav - \thetahv} + \Cov{\thetahv - \thetav} - \mathbb{E}\Br{(\zetav - \thetahv)(\thetahv - \thetav)^{\top}} - \mathbb{E}\Br{(\thetahv - \thetav)(\zetav - \thetahv)^{\top}}
\end{equation}

Note that the first of the cross terms is
\begin{equation}
    \mathbb{E}\Br{(\zetav - \thetahv)(\thetahv - \thetav)^{\top}} = \mathbb{E}_{\yv}\Br{\mathbb{E}\Br{\zetav - \thetahv \big\vert \yv}\mathbb{E}\Br{\thetahv - \thetav \big\vert \yv}^{\top}} = \mathbf{0}
\end{equation}

This holds true because we require $p(\zetav\vert\yv, \thetav) = p(\zetav \vert \yv)$, i.e. given $\Xv$, $\thetav \to \yv \to \zetav$ forms a Markov chain.
So, $\zetav$ and $\thetav$ are conditionally independent given $\yv$, and since $\thetahv$ is a deterministic function of $(\Xv, \yv)$, consequently, $(\zetav - \thetahv)$ and $(\thetahv - \thetav)$ are also conditionally independent given $\yv$.
Moreover, from our choice of $\thetahv = \mathbb{E}\Br{\thetav \vert \yv}$, we have $\mathbf{E}\Br{\thetahv - \thetav \big\vert \yv} = \mathbf{0}$.
Similarly, the other cross-term also vanishes and we have from eqs. \eqref{eq:expression_Bayes_error_as_trace_of_covariance} and \eqref{eq:covariance_matrix_decomposition_about_ML_estimator},

\begin{equation}
    \label{eq:bayes_error_decomposition_about_ML_estimator}
    \int_{\thetav \in \Real^d}R(\zetav, \thetav)dF_{\delta}(\thetav) = \frac{1}{d}\Tr{\Cov{\zetav - \thetahv}} + \frac{1}{d}\Tr{\Cov{\thetahv - \thetav}}.
\end{equation}

We will use eq. \eqref{eq:bayes_error_decomposition_about_ML_estimator} later in the rest of the proof.
Now, to derive a lower bound on $\frac{1}{d}I(\yv; \zetav)$, note that,
\begin{align}
\label{eq:lower_bound_MI_chain_of_inequalities}
    I(\yv;\zetav) &\stackrel{\mathrm{(i)}}{\geq} I\left(\Paren{\Xv^{\top}\Xv + \frac{\sigma^2}{c^2\delta^2}\Iv_d}^{-1}\Xv^{\top}\yv; \zetav\right) \nonumber \\
    &= h(\thetahv) - h(\thetahv \vert \zetav) = h(\thetahv) - h(\thetahv - \zetav \vert \zetav) \stackrel{\mathrm{(ii)}}{\geq} h(\thetahv) - h(\thetahv - \zetav)
\end{align}

Inequality $\mathrm{(i)}$ follows from Data Processing Inequality (cf. \cite[Ch. 2]{info_theory_book}) and inequality $\mathrm{(ii)}$ holds true as conditioning reduces entropy.
Since for a given covariance matrix, the differential entropy is maximized for a Gaussian distribution, we have,

\begin{align*}
    h(\thetahv - \zetav) &\leq h\left(\Ncal\left(\mathbf{0}, \Cov{\thetahv - \zetav}\right)\right)\\
    &= \frac{1}{2}\log\Paren{(2\pi e)^d \left\vert \Cov{\thetahv - \zetav} \right\vert} \\
    &\stackrel{\mathrm{(i)}}{\leq} \frac{1}{2}\log\Paren{(2 \pi e)^d\Paren{\frac{1}{d}\Tr{\Cov{\thetahv - \zetav}}}^d} \\
    &\leq \frac{d}{2}\log(2\pi e) + \frac{d}{2}\log\Paren{\frac{1}{d}\Tr{\Cov{\thetahv - \zetav}}} \\
    &\stackrel{\mathrm{(ii)}}{=} \frac{d}{2}\log(2\pi e) + \frac{d}{2}\log\Paren{\int_{\thetav \in \Real^d}R(\zetav, \thetav)dF_{\delta}(\thetav) - \frac{1}{d}\Tr{\Cov{\thetahv - \thetav}}}.
\end{align*}

Here, $|\cdot|$ is used to denote the determinant of a matrix.
Inequality $\mathrm{(i)}$ follows from the fact that the determinant of a matrix is the product of eigenvalues and its trace is the sum of eigenvalues, along with a subsequent application of AM $\geq$ GM inequality, while $\mathrm{(ii)}$ follows from eq. \eqref{eq:bayes_error_decomposition_about_ML_estimator}.
So, \eqref{eq:lower_bound_MI_chain_of_inequalities} boils down to,
\begin{equation}
    I(\yv; \zetav) \geq h(\thetahv) - \frac{d}{2}\log(2 \pi e) - \frac{d}{2}\log\Paren{\int_{\thetav \in \Real^d}R(\zetav, \thetav)dF_{\delta}(\thetav) - \frac{1}{d}\Tr{\Cov{\thetahv - \thetav}}}.
\end{equation}
From Lemmas \ref{lem:entropy_ML_estimator} and \ref{lem:squared_error_ML_estimator}, we get,
\begin{align*}
    I(\yv; \zetav) &\geq \frac{1}{2}\log\Paren{(2\pi e)^d \prod_{i=1}^{d}\frac{c^4\delta^4\sigma_i^2}{\sigma^2 + c^2\delta^2\sigma_i^2}} - \frac{d}{2}\log(2 \pi e)\\ &\hspace{35mm}- \frac{d}{2}\log\Paren{\int_{\thetav \in \Real^d}R(\zetav, \thetav)dF_{\delta}(\thetav) - \frac{1}{d}\sum_{i=1}^{d}\frac{c^2\delta^2\sigma^2}{\sigma^2 + c^2\delta^2\sigma_i^2}} \\
    &= \frac{1}{2}\log\Paren{\prod_{i=1}^{d}\frac{c^4\delta^4\sigma_i^2}{\sigma^2 + c^2\delta^2\sigma_i^2}} - \frac{d}{2}\log\Paren{\int_{\thetav \in \Real^d}R(\zetav, \thetav)dF_{\delta}(\thetav) - \frac{1}{d}\sum_{i=1}^{d}\frac{c^2\delta^2\sigma^2}{\sigma^2 + c^2\delta^2\sigma_i^2}} \\
    &\geq \frac{1}{2}\log\Paren{\frac{c^4\delta^4\sigma_{\mathrm{min}}^2}{\sigma^2 + c^2\delta^2\sigma_{\mathrm{min}}^2}}^d - \frac{d}{2}\log\Paren{\int_{\thetav \in \Real^d}R(\zetav, \thetav)dF_{\delta}(\thetav) - \frac{c^2\delta^2\sigma^2}{\sigma^2 + c^2\delta^2\sigma_{\mathrm{min}}^2}} \\
    &= \frac{d}{2}\log\Paren{\frac{c^4\delta^4\sigma_{\mathrm{min}}^2}{\sigma^2 + c^2\delta^2\sigma_{\mathrm{min}}^2}\Paren{\int_{\thetav \in \Real^d}R(\zetav, \thetav)dF_{\delta}(\thetav) - \frac{c^2\delta^2\sigma^2}{\sigma^2 + c^2\delta^2\sigma_{\mathrm{min}}^2}}^{-1}}.
\end{align*}
This completes the proof of the Lemma \ref{lem:lower_bound_MI_Gaussian_prior}.
We complete this section by providing the proofs of Lemmas \ref{lem:entropy_ML_estimator} and \ref{lem:squared_error_ML_estimator}.

\subsection{Proof of Lemma \ref{lem:entropy_ML_estimator}: Differential entropy of ML estimator for Gaussian prior}
\label{subsec:proof_entropy_ML_estimator}

The ML estimator is given by $\thetahv = \Paren{\Xv^{\top}\Xv + \frac{\sigma^2}{c^2\delta^2}\Iv_d}^{-1}\Xv^{\top}\yv$.
Since $\yv = \Xv\thetav + \vv$ where $\vv \sim \Ncal(\mathbf{0}, \sigma^2\Iv_n)$, and the prior is $\thetav \sim \Ncal(\mathbf{0}, c^2\delta^2\Iv_d)$, this implies that $\yv$ is a Gaussian vector with zero mean and covariance given by,
\begin{align}
     \Cov{\yv} = \mathbb{E}\Br{\yv\yv^{\top}} &= \mathbb{E}\Br{(\Xv\thetav + \vv)(\Xv\thetav + \vv)^{\top}} \nonumber \\
     &= \Xv \mathbb{E}\Br{\thetav \thetav^{\top}}\Xv^{\top} + \mathbb{E}\Br{\vv\vv^{\top}} = c^2\delta^2\Xv\Xv^{\top} + \sigma^2\Iv_n
\end{align}

So, for fixed $\Xv$, $\Cov{\thetahv}$ is given by,
\begin{align}
\label{eq:covariance_matrix_ML_estimator_expression}
    \Cov{\thetahv} &= \Paren{\Xv^{\top}\Xv + \frac{\sigma^2}{c^2\delta^2}\Iv_d}^{-1}\Xv^{\top} \mathbb{E}\Br{\yv \yv^{\top}} \Xv \Paren{\Xv^{\top}\Xv + \frac{\sigma^2}{c^2\delta^2}\Iv_d}^{-1} \nonumber \\
    &= \Paren{\Xv^{\top}\Xv + \frac{\sigma^2}{c^2\delta^2}\Iv_d}^{-1}\Xv^{\top} (c^2\delta^2\Xv\Xv^{\top} + \sigma^2\Iv_n) \Xv \Paren{\Xv^{\top}\Xv + \frac{\sigma^2}{c^2\delta^2}\Iv_d}^{-1}.
\end{align}

Consider the singular value decomposition of $\Xv \in \Real^{n \times d} (n \geq d)$ to be $\Xv = \Uv \Sigmav \Vv^{\top}$, where $\Uv \in \Real^{n \times n}, \Sigmav \in \Real^{n \times d}$, and $\Vv \in \Real^{d \times d}$.
Here $\Sigmav$ is a diagonal matrix comprising of singular values $\geq \sigma_1, \ldots, \sigma_d$ such that $\sigma_1 \leq \sigma_{\mathrm{max}}$ and $\sigma_d \geq \sigma_{\mathrm{min}}$.
Then, $\Xv^{\top}\Xv = \Vv\Sigmav^{\top}\Uv^{\top}\Uv\Sigmav \Vv^{\top} = \Vv \Sigmav_1 \Vv^{\top}$, where $\Sigmav_1 = \Sigmav^{\top}\Sigmav \in \Real^{d \times d}$.
Note that without loss of generality, we can assume $\Xv$ is full column rank, i.e. all its singular values are positive.
This implies $\Sigmav_1$ is full rank since $n \geq d$, giving us,

\begin{align}
    \Paren{\Xv^{\top}\Xv + \frac{\sigma^2}{c^2\delta^2}\Iv_d}^{-1} &= \Vv\Paren{\Sigmav_1 + \frac{\sigma^2}{c^2}\Iv_d}^{-1}\Vv^{\top}, \nonumber \\ \text{ and, }\hspace{8mm} c^2\delta^2 \Xv\Xv^{\top} + \sigma^2\Iv_n &= \Uv(c^2\delta^2\Sigmav\Sigmav^{\top} + \sigma^2\Iv_n)\Uv^{\top}.
\end{align}

Substituting these expressions in eq. \eqref{eq:covariance_matrix_ML_estimator_expression} yields,

\begin{equation}
    \Cov{\thetahv} = \Vv\Br{\Paren{\Sigmav_1 + \frac{\sigma^2}{c^2\delta^2}\Iv_d}^{-1}\Paren{c^2\delta^2\Sigmav_1^2 + \sigma^2\Sigmav_1}\Paren{\Sigmav_1 + \frac{\sigma^2}{c^2\delta^2}\Iv_d}^{-1}}\Vv^{\top}
\end{equation}

Since the determinant of a matrix is the product of its eigenvalues, this gives us:

\begin{align}
\label{eq:covariance_ML_estimator_product_expression}
    \left\lvert \Cov{\thetahv} \right\rvert &= \Br{\prod_{i=1}^{d}\Paren{\sigma_i^2 + \frac{\sigma^2}{c^2\delta^2}}}^{-1} \prod_{i=1}^{d}\sigma_i^2\Paren{c^2\delta^2\sigma_i^2 + \sigma^2} \Br{\prod_{i=1}^{d}\Paren{\sigma_i^2 + \frac{\sigma^2}{c^2\delta^2}}}^{-1} \nonumber\\
    &= \prod_{i=1}^{d}\frac{c^4\delta^4\sigma_i^2}{\sigma^2 + c^2\delta^2\sigma_i^2}.
\end{align}

Since the ML estimator $\thetatv$ is a Gaussian random variable, substituting \eqref{eq:covariance_ML_estimator_product_expression} in the expression for the entropy $h(\thetahv) = \frac{1}{2}\log\Paren{(2\pi e)^d\left\lvert \Cov{\thetahv}\right\rvert}$ completes the proof.

\subsection{Proof of Lemma \ref{lem:squared_error_ML_estimator}: \texorpdfstring{$\ell_2$}{l2}-risk of ML estimator under Gaussian prior}
\label{subsec:proof_squared_error_ML_estimator}

To find a closed form expression for $\Tr{\Cov{\thetahv - \thetav}}$, where $\yv = \Xv \thetav + \vv$, $\thetav \sim \Ncal(\mathbf{0}, c^2\delta^2\Iv_d)$, $\vv \sim \Ncal(\mathbf{0}, \sigma^2\Iv_n)$, and $\thetahv = \mathbb{E}\Br{\thetav\vert\yv} = \Paren{\Xv^{\top}\Xv + \frac{\sigma^2}{c^2\delta^2}\Iv_d}^{-1}\Xv^{\top}\yv$, note that:

\begin{equation*}
    \thetahv = \Paren{\Xv^\top \Xv + \frac{\sigma^2}{c^2\delta^2}\Iv_d}^{-1}\Xv^\top\yv = \Paren{\Xv^\top \Xv + \frac{\sigma^2}{c^2\delta^2}\Iv_d}^{-1}\Xv^\top\Paren{\Xv\thetav + \vv}
\end{equation*}

So,
\begin{equation*}
\thetahv - \thetav = \underbrace{\Paren{\Paren{\Xv^\top \Xv + \frac{\sigma^2}{c^2\delta^2}\Iv_d}^{-1}\Xv^\top \Xv - \Iv_d}\thetav}_{=\uv_1} + \underbrace{\Paren{\Xv^\top \Xv + \frac{\sigma^2}{c^2\delta^2}\Iv_d}^{-1}\Xv^\top\vv}_{=\uv_2}   
\end{equation*}

Since $\thetav$ and $\vv$ are independent of each other, so are $\uv_1$ and $\uv_2$, and hence their covariances can be computed separately and added up (i.e. no cross terms).
Considering the SVD as $\Xv = \Uv \Sigmav \Vv^{\top}$, we have,
\begin{align*}
    \Cov{\uv_1} &= \mathbb{E}\Br{\uv_1 \uv_1^\top}\\
    &= c^2\delta^2\Br{\Paren{\Xv^\top \Xv + \frac{\sigma^2}{c^2\delta^2}\Iv_d}^{-1}\Xv^\top \Xv - \Iv_d}\Br{\Xv^\top \Xv \Paren{\Xv^\top \Xv + \frac{\sigma^2}{c^2\delta^2}\Iv_d}^{-1} - \Iv_d} \\
    &= \Vv \Br{c^2\delta^2\Paren{\Paren{\Sigmav_1 + \frac{\sigma^2}{c^2\delta^2}\Iv_d}^{-1}\Sigmav_1 - \Iv_d}\Paren{\Sigmav_1\Paren{\Sigmav_1 + \frac{\sigma^2}{c^2\delta^2}\Iv_d}^{-1} - \Iv_d}}\Vv^\top
\end{align*}

Similarly,
\begin{align*}
    \Cov{\uv_2} = \mathbb{E}\Br{\uv_2 \uv_2^\top}
    &= \sigma^2\Paren{\Xv^\top \Xv + \frac{\sigma^2}{c^2\delta^2}\Iv_d}^{-1}\Xv^\top \Xv \Paren{\Xv^\top \Xv + \frac{\sigma^2}{c^2\delta^2}\Iv_d}^{-1} \\
    &= \Vv\Br{\sigma^2\Paren{\Sigmav_1 + \frac{\sigma^2}{c^2\delta^2}\Iv_d}^{-1}\Sigmav_1 \Paren{\Sigmav_1 + \frac{\sigma^2}{c^2\delta^2}\Iv_d}^{-1}}\Vv^\top
\end{align*}

Since trace of a matrix is sum of its eigenvalues, this gives us,
\begin{align*}
    \Tr{\Cov{\thetav - \thetahv}} &= \sum_{i=1}^{d}\Br{c^2\delta^2\Paren{1 - \sigma_i^2\Paren{\sigma_i^2 + \frac{\sigma^2}{c^2\delta^2}}^{-1}}^2 + \sigma^2\sigma_i^2\Paren{\sigma_i^2 + \frac{\sigma^2}{c^2\delta^2}}^{-2}} \nonumber \\
    &= \sum_{i=1}^{d}\frac{c^2\delta^2\sigma^2}{\sigma^2 + c^2\delta^2\sigma_i^2}.
\end{align*}

This completes the proof of the lemma.

\section{Proof of Theorem \ref{thm:minimax_error_lower_bound}: Minimax error lower bound}
\label{app:proof_minimax_error_lower_bound}

Let $\Qsf$ be any $(n,d,B)$-learning code.
From Lemma \ref{lem:mutual_information_bit_budget_inequality}, $I(\yv;\Qsf(\Xv, \yv)) \leq dB$.
Since for a given $\Xv$, $\Qsf \equiv \Qsf(\Xv, \yv)$ solely depends on $\yv$, we also have $p(\Qsf \vert \yv, \thetav) = p(\Qsf \vert \yv)$.
Rearranging Lemma \ref{lem:lower_bound_MI_Gaussian_prior} yields the lower bound on the Bayes risk with Gaussian prior $F \equiv \Ncal(\mathbf{0}, c^2\delta^2\Iv_d)$ as,

\begin{align}
\label{eq:Bayes_error_lower_bound}
     \int_{\thetav \in \Real^d} R(\Qsf(\Xv), \thetav)dF(\thetav) &\geq \frac{c^2\delta^2\sigma^2}{\sigma^2 + c^2\delta^2\sigma_{\mathrm{min}}^2} + \frac{c^4\delta^4\sigma_{\mathrm{min}}^2}{\sigma^2 + c^2\delta^2\sigma_{\mathrm{min}}^2}\cdot 2^{-\frac{2}{d}I(\yv;\Qsf(\Xv, \yv))} \nonumber\\
    &\geq \frac{c^2\delta^2\sigma^2}{\sigma^2 + c^2\delta^2\sigma_{\mathrm{min}}^2} + \frac{c^4\delta^4\sigma_{\mathrm{min}}^2}{\sigma^2 + c^2\delta^2\sigma_{\mathrm{min}}^2}\cdot 2^{-2B}.
\end{align}

This formalizes the argument in key idea $\mathrm{(2)}$ in \S \ref{sec:lower_bound_to_the_minimax_risk} of the main paper.
We then resort to key idea $\mathrm{(1)}$, i.e. Lemma \ref{lem:inequality_supremum_bayes_error} that lower bounds the asymptotic ($\liminf_{d \to \infty}$) worst-case risk over $\Thetav$ by the Bayes risk over $\Real^d$ in \eqref{eq:Bayes_error_lower_bound}.
Since the R.H.S. is a continuous function of $\delta$, and the inequality holds true for any $\delta \in (0,1)$, letting $\delta \to 1$, and then taking an infimum over all $\Qsf \in \Qcal_{n,d,B}$, completes the proof.

\section{Tail probability bound for the estimated magnitude \texorpdfstring{$\bt^2$}{bt2}}
\label{sec:error_encoded_magnitude}

In \S \ref{subsec:naive_learning_and_quantization} of the main paper, we note that an unquatized estimate of the magnitude of $\thetav$, i.e. $b^2 \triangleq \frac{1}{d}\lVert \thetav\rVert_2^2$ can be obtained as follows.
Given that $\yv = \Xv \thetav + \vv$, we have,

\begin{align*}
    \Xinv\yv = \thetav + \Xinv\vv &\stackrel{\mathrm{(i)}}{\implies} \frac{1}{d}\lVert \thetav \rVert_2^2 =  \frac{1}{d}\lVert \Xinv\yv \rVert_2^2 - \frac{1}{d}\lVert \Xinv\vv \rVert_2^2 - \frac{2}{d}\inprod{\thetav, \Xinv\vv} \\
    &\stackrel{\mathrm{(ii)}}{\implies} \frac{1}{d}\lVert \thetav \rVert_2^2 =  \frac{1}{d}\mathbb{E}_{\vv}\lVert \Xinv\yv \rVert_2^2 - \frac{1}{d}\mathbb{E}_{\vv}\lVert \Xinv\vv \rVert_2^2 \\
    &\implies b^2 = \frac{1}{d}\mathbb{E}_{\vv}\lVert \Xinv\yv \rVert_2^2 - \frac{\sigma^2}{d}\sum_{i=1}^{d}\frac{1}{\sigma_i^2}.
\end{align*}

Since we have a single realization of the noise vector $\vv$, giving us access to just one realization $\yv$, a reasonable estimate of $b^2$ is obtained by simply using $\lVert \Xinv\yv \rVert_2^2$ as an approximation for the expected value, $\mathbb{E}_{\vv}\lVert \Xinv\yv \rVert_2^2$.
In this section, we present Lemma \ref{lem:error_encoded_magnitude} that gives a tail probability bound on the deviation of the estimated magnitude $\bhat^2$ and the true magnitude of the model, $b^2$.
Since $b^2$ is encoded using a scalar quantizer of resolution $\frac{1}{\sqrt{d}}$ to get $\bt^2$, it follows from Lemma \ref{lem:error_encoded_magnitude} that $\bt^2 - b^2 = O_P\Paren{\frac{1}{\sqrt{d}}}$, where $O_P(\frac{1}{\sqrt{d}})$ denotes a quantity which approaches $0$ with high probability as $d \to \infty$.

\begin{lemma_boxed}
\label{lem:error_encoded_magnitude}
Suppose that $\yv \sim \Ncal\Paren{\Xv \thetav, \sigma^2\Iv_n}$ for $\Xv \in \Real^{n \times d}, (n \geq d)$, and $\thetav \in \Real^d$ with $\frac{1}{d}\norm{\thetav}_2^2 = b^2$.
Then for $t \geq 0$,

\begin{align*}
    &\mathbb{P}\left(\left\lvert \frac{1}{d}\lVert\Xinv \yv\rVert_2^2 - \frac{\sigma^2}{d}\sum_{i=1}^{d}\frac{1}{\sigma_i^2} - b^2 \right\rvert \geq t \right)\\
    &\leq 2\exp\Paren{-\frac{dt^2\sigma_{\mathrm{min}}^2}{24\sigma^4}} + \frac{8\sigma b}{\sigma_{\mathrm{min}}t\sqrt{2\pi d}}\exp\Paren{-\frac{dt^2\sigma_{\mathrm{min}}^2}{32\sigma^2b^2}}.
\end{align*}

In other words, $\bhat^2 - b^2 = O_P\Paren{\frac{1}{\sqrt{d}}}$, where $\bhat^2 \triangleq \frac{1}{d}\norm{\Xinv\yv}_2^2 - \frac{\sigma^2}{d}\sum_{i=1}^{d}\frac{1}{\sigma_i^2}$, where $\sigma_{\mathrm{max}}$ and $\sigma_{\mathrm{min}}$ are upper and lower bounds to the maximum and minimum singular values of $\Xv$, respectively.
\end{lemma_boxed}
\begin{proof}
Let $\yv = \Xv\thetav + \epsb$, where $\epsb \in \Ncal(\mathbf{0}, \sigma^2\Iv_n)$.
Then, $\Xinv\yv = \Xinv\Paren{\Xv\thetav + \epsb} = \thetav + \Xinv\epsb$.
To study the concentration of $\frac{1}{d}\norm{\Xinv\yv}_2^2 - \frac{\sigma^2}{d}\sum_{i=1}^{d}\frac{1}{\sigma_i^2}$ around $b^2$, note that:

\begin{align*}
    &\frac{1}{d}\norm{\Xinv\yv}_2^2 = \frac{1}{d}\norm{\thetav + \Xinv\epsb}_2^2 = \frac{1}{d}\norm{\thetav}_2^2 + \frac{1}{d}\norm{\Xinv\epsb}_2^2 + \frac{2}{d}\inprod{\thetav, \Xinv\epsb} \\
    \implies &\frac{1}{d}\norm{\Xinv\yv}_2^2 - \frac{\sigma^2}{d}\sum_{i=1}^{d}\frac{1}{\sigma_i^2} - b^2 = \frac{1}{d}\norm{\Xinv \epsb}_2^2 + \frac{2}{d}\inprod{\thetav, \Xinv\epsb} - \frac{\sigma^2}{d}\sum_{i=1}^{d}\frac{1}{\sigma_i^2}.
\end{align*}

So, it suffices to find an upper bound on the tail probability,

\[\mathbb{P}\Paren{\left\lvert \frac{1}{d}\norm{\Xinv\epsb}_2^2 - \frac{\sigma^2}{d}\sum_{i=1}^{d}\frac{1}{\sigma_i^2} + \frac{2}{d}\inprod{\thetav, \Xinv \epsb} \right\rvert \geq t}.\]

We have,
\begin{align}
\label{eq:tail_probability_1}
    &\mathbb{P}\Paren{\left\lvert \frac{1}{d}\norm{\Xinv\epsb}_2^2 - \frac{\sigma^2}{d}\sum_{i=1}^{d}\frac{1}{\sigma_i^2} + \frac{2}{d}\inprod{\thetav, \Xinv \epsb} \right\rvert \geq t} \nonumber \\ 
    \leq \hspace{2mm}&\mathbb{P}\Paren{\left\lvert \frac{1}{d}\norm{\Xinv\epsb}_2^2 - \frac{\sigma^2}{d}\sum_{i=1}^{d}\frac{1}{\sigma_i^2} \right\rvert \geq \frac{t}{2}} + \mathbb{P}\Paren{\left\lvert \frac{2}{d}\inprod{\thetav, \Xinv \epsb} \right\rvert \geq \frac{t}{2}}
\end{align}

Since $\Xinv\epsb \sim \Ncal\Paren{\mathbf{0}, \sigma^2\Xinv{\Xinv}^\top}$, the first term in \eqref{eq:tail_probability_1} can be upper bounded using Chernoff's method (cf. \cite{tyronCMarticle}).

\fbox{
\begin{minipage}{\textwidth}
\textbf{Fact}: For a multivariate normal distribution $\zv \in \Real^d \sim \Ncal\Paren{\mathbf{0}, \Cv}$ with covariance matrix $\Cv$, the  moment generating function of $\norm{\zv}_2^2$ is given by,
\begin{equation}
    \mathbb{E}\Br{e^{\pm s\norm{\zv}_2^2}} = \frac{1}{\sqrt{\Det{\Iv \pm 2s\Cv}}}
\end{equation}
\end{minipage}
}

Moreover, $\mathbb{E}\norm{\zv}_2^2 = \Tr{\Cov{\yv}} = \Tr{\Cv}$.
Using Chernoff's bound for $\zv \triangleq \Xinv \epsb$ with $\Cv = \sigma^2\Xinv{\Xinv}^\top$ gives us,
\begin{align}
    \label{eq:tail_probability_2}
    &\mathbb{P}\Br{\norm{\zv}_2^2 \geq \Paren{1 + \varepsilon}\mathbb{E}\norm{\zv}_2^2} \leq \min_{s > 0} \frac{\mathbb{E}\Br{e^{s\norm{\zv}_2^2}}}{e^{s(1+\varepsilon)\mathbb{E}\Br{\norm{\zv}_2^2}}} \nonumber \nonumber \\
    \implies &\mathbb{P}\Br{\norm{\zv}_2^2 - \mathbb{E}\norm{\zv}_2^2 \geq \varepsilon \mathbb{E}\norm{\zv}_2^2} \leq \min_{s > 0} \frac{\Det{\Iv - 2s\Cv}^{-\frac{1}{2}}}{e^{s\Paren{1 + \varepsilon}\Tr{\Cv}}}
\end{align}

To evaluate $\Tr{\Cv}$, consider the SVD of $\Xv = \Uv \Sigmav \Vv^{\top}$, where $\Sigmav \in \Real^{n \times d}$ consists of the singular values of $\Xv$, i.e. $\sigma_1 \geq \ldots \geq \sigma_d$.
The SVD of $\Xinv \in \Real^{d \times n}$ is given by $\Xinv = \Vv \Sigmav' \Uv^{\top}$, where the diagonal entries of $\Sigmav' \in \Real^{d \times n}$ are $\sigma_1^{-1}, \sigma_2^{-1}, \ldots, \sigma_d^{-1}$.
This yields the eigenvalue decomposition of $\Xinv {\Xinv}^{\top} \in \Real^{d \times d}$ to be $\Xinv{\Xinv}^\top = \Vv \Sigmav^\top \Uv^\top \Uv \Sigmav \Vv^\top = \Vv \Sigmav_1 \Vv^{\top}$, where $\Sigmav_1 = \mathrm{diag}\{\sigma_1^{-2}, \ldots, \sigma_d^{-2}\}$.
So,

\begin{equation}
    \Tr{\Xinv{\Xinv}^{\top}} = \sum_{i=1}^{d}\frac{1}{\sigma_i^2} \implies \mathbb{E}\norm{\Xinv\epsb}_2^2 = \Tr{\Cv} = \sigma^2 \sum_{i=1}^{d}\frac{1}{\sigma_i^2} \geq \frac{\sigma^2 d}{\sigma_{\mathrm{min}}^2}.
\end{equation}

Moreover, the eigenvalues of $\Iv - 2s\Cv$ are given by $\Paren{1 - 2s\frac{\sigma^2}{\sigma_i^2}}$ for $i = 1, 2, \ldots, d$, meaning,
\begin{equation}
    \Det{\Iv - 2s\Cv} = \prod_{i=1}^{d}\Paren{1 - 2s\frac{\sigma^2}{\sigma_i^2}} \leq \Paren{1 - 2s\frac{\sigma^2}{\sigma_{\mathrm{min}}^2}}^d.
\end{equation}
From eq. \eqref{eq:tail_probability_2}, setting $s = \frac{\varepsilon \sigma_{\mathrm{min}}^2}{2\Paren{1 + \varepsilon}\sigma^2}$, we get,

\begin{align}
    \mathbb{P}\Br{\norm{\zv}_2^2 - \mathbb{E}\norm{\zv}_2^2 \geq \varepsilon \mathbb{E}\norm{\zv}_2^2} \leq \frac{\Paren{1 - 2s\frac{\sigma^2}{\sigma_{\mathrm{min}}^2}}^{-\frac{d}{2}}}{e^{s\Paren{1 + \varepsilon}\frac{\sigma^2 d}{\sigma_{\mathrm{min}}^2}}} &= \Paren{1 - 2s\frac{\sigma^2}{\sigma_{\mathrm{min}}^2}}^{-\frac{d}{2}}e^{-s\Paren{1 + \varepsilon}\frac{\sigma^2 d}{\sigma_{\mathrm{min}}^2}} \nonumber \\
    &= \Paren{\Paren{1 + \varepsilon}e^{-\varepsilon}}^{\frac{d}{2}}.
\end{align}

Utilizing the inequalities $\Paren{1 + \varepsilon}e^{-\varepsilon} = e^{-\varepsilon + \log\Paren{1 + \varepsilon}}$, and $\log\Paren{1 + \varepsilon} \leq \varepsilon - \frac{\varepsilon^2}{2} + \frac{\varepsilon^2}{3}$, and from symmetry, we have,
\begin{align}
    &\mathbb{P}\Br{\norm{\zv}_2^2 - \mathbb{E}\norm{\zv}_2^2 \geq \varepsilon \mathbb{E}\norm{\yv}_2^2} \leq \Paren{e^{-\Paren{\frac{\varepsilon^2}{2} - \frac{\varepsilon^3}{3}}}}^{\frac{d}{2}} \leq e^{-\frac{\varepsilon^2d}{6}} \nonumber \\
    \implies &\mathbb{P}\Br{\left\lvert \norm{\Xinv\epsb}_2^2 - \sigma^2\sum_{i=1}^{d}\frac{1}{\sigma_i^2} \right\rvert \geq \varepsilon \sigma^2\sum_{i=1}^{d}\frac{1}{\sigma_i^2}} \leq 2 e^{-\frac{\varepsilon^2 d}{6}} \nonumber \\
    \implies &\mathbb{P}\Br{\left\lvert \norm{\Xinv\epsb}_2^2 - \sigma^2\sum_{i=1}^{d}\frac{1}{\sigma_i^2} \right\rvert \geq \varepsilon \frac{\sigma^2 d}{\sigma_{\mathrm{min}}^2}} \leq 2 e^{-\frac{\varepsilon^2 d}{6}}.
\end{align}

The last inequality follows from the fact that $\frac{\varepsilon^2}{2} - \frac{\varepsilon^3}{3} \geq \frac{\varepsilon^2}{3}$ for $0 < \varepsilon \leq 1/2$.
Setting $\varepsilon = \frac{t}{2}\cdot \frac{\sigma_{\mathrm{min}}^2}{\sigma^2}$, the first term of \eqref{eq:tail_probability_1} can be bounded as,
\begin{equation}
    \label{eq:tail_probability_3}
    \mathbb{P}\Br{\left\lvert \frac{1}{d}\norm{\Xinv\epsb}_2^2 - \frac{\sigma^2}{d}\sum_{i=1}^{d}\frac{1}{\sigma_i^2} \right\rvert \geq \frac{t}{2}} \leq 2e^{-\frac{dt^2\sigma_{\mathrm{min}}^4}{24\sigma^4}}.
\end{equation}
To upper bound the second term in \eqref{eq:tail_probability_1}, let us consider the tail probability $\mathbb{P}\Paren{\inprod{{\Xinv}^{\top}\thetav, \epsb} \geq \frac{dt}{4}}$.
Note that $Z = \inprod{{\Xinv}^{\top}\thetav, \epsb} \sim \Ncal\Paren{0, \sigmat^2}$, where $\sigmat^2 = \Paren{{\Xinv}^\top\thetav}^\top \mathbb{E}\Br{\epsb\epsb^\top}{\Xinv}^\top\thetav = \sigma^2 \thetav^\top \Xinv {\Xinv}^{\top} \thetav$.
Recalling the SVD of $\Xinv$ to be $\Xinv = \Vv \Sigmav' \Uv^\top$, it can be shown that,
\begin{equation*}
    \sigmat^2 = \sigma^2\sum_{i=1}^{d}\sigma_i^{-2}\Paren{\vv_i^\top \thetav}^2 \leq \frac{\sigma^2}{\sigma_{\mathrm{min}}^2}\norm{\Vv^\top\thetav}_2^2 = \frac{\sigma^2}{\sigma_{\mathrm{min}}^2}\norm{\thetav}_2^2 = \frac{\sigma^2db^2}{\sigma_{\mathrm{min}}^2}.
\end{equation*}
Here, $\{\vv_i\} \in \Real^d, i = 1, \ldots, d$ denote the columns of the matrix $\Vv \in \Real^{d \times d}$.
Using the tail probability expression of a normal random variable yields an upper bound on the second term of \eqref{eq:tail_probability_1} as follows,

\begin{equation}
    \label{eq:tail_probability_4}
    \mathbb{P}\Br{\left\lvert\frac{2}{d}\inprod{\thetav, \Xinv\epsb}\right\rvert \geq \frac{t}{2}} \leq \frac{8\sigma b}{\sigma_{\mathrm{min}}t\sqrt{2 \pi d}}\exp\Paren{-\frac{dt^2\sigma_{\mathrm{min}}^2}{32\sigma^2 b^2}} \sim O_P\Paren{\frac{1}{\sqrt{d}}}.
\end{equation}

Equations \eqref{eq:tail_probability_3} and \eqref{eq:tail_probability_4} together complete the proof of Lemma \ref{lem:error_encoded_magnitude}.
\end{proof}

\section{Proof of Theorem \ref{thm:naive_quantizer_guarantee}: Na\"ive Quantizer guarantee}
\label{app:naive_quantizer_guarantee}

For the purposes of this proof, let us denote the scaling factor used in the naive learning and quantization strategy as $\gammat$, i.e.,
\begin{equation}
\label{eq:gamma_dash_scaling}
    \gammat \triangleq \sqrt{\frac{d\bt^4}{\bt^2 + \frac{\sigma^2}{d}\xi}}.
\end{equation}
Then the quantized model learned by the naive strategy, given by $\thetatv = \gammat\stv$, has an $\ell_2$-risk,
\begin{equation}
\label{eq:risk_decomposition_naive}
    R\Paren{\thetatv, \thetav} = \frac{1}{d}\lVert\thetatv - \thetav\rVert_2^2 \leq \underbrace{\frac{2}{d}\lVert\thetatv - \gammah\Xinv\yv\rVert_2^2}_{A_1} + \underbrace{\frac{2}{d}\norm{\gammah\Xinv\yv - \thetav}_2^2}_{A_2},
\end{equation}
where $\gammah \triangleq \frac{\bhat^2}{\bhat^2 + \frac{\sigma^2}{d}\xi}$.

\textbf{Analyzing the term $A_1$}.
The first term is eq. \eqref{eq:risk_decomposition_naive} can be upper bounded as,
\begin{align}
\label{eq:naive_term_A1_decomposition}
    \frac{1}{d}\norm{\thetatv - \gammah\Xinv\yv}_2^2 &= \frac{1}{d}\norm{\gammat\stv - \gammah\norm{\Xinv\yv}_2\cdot\frac{\Xinv\yv}{\norm{\Xinv\yv}_2}}_2^2 \nonumber \\
    &= \frac{1}{d}\norm{\gammah\norm{\Xinv\yv}_2\cdot\stv - \gammah\norm{\Xinv\yv}_2\cdot\frac{\Xinv\yv}{\norm{\Xinv\yv}_2} + \Paren{\gammat - \gammah\norm{\Xinv\yv}_2}\cdot\stv}_2^2 \nonumber \\
    &\leq \frac{2}{d}\gammah^2\norm{\Xinv\yv}_2^2\cdot\norm{\stv - \frac{\Xinv\yv}{\norm{\Xinv\yv}_2}}_2^2 + \frac{2}{d}\Paren{\gammat - \gammah\norm{\Xinv\yv}_2}^2\norm{\stv}_2^2
\end{align}
The second term in \eqref{eq:naive_term_A1_decomposition} is small.
Since $\stv = \text{Q}_{u,B}\Paren{\frac{\Xinv\yv}{\lVert \Xinv\yv\rVert_2}}$, this implies $\lVert \stv \rVert_2^2 \leq 1$.
Furthermore, using the facts $\gammah = \bhat^2/(\bhat^2 + \sigma^2\xi/d)$, and $\bhat^2 = \frac{1}{d}\lVert \Xinv\yv \rVert_2^2 - \frac{\sigma^2}{d}\xi$, we get,
\begin{equation*}
    \frac{1}{\sqrt{d}}\Paren{\gammat - \gammah\norm{\Xinv\yv}_2} = \sqrt{\frac{\bt^4}{\bt^2 + \frac{\sigma^2}{d}\xi}} - \sqrt{\frac{\bhat^4}{\bhat^2 + \frac{\sigma^2}{d}\xi}} = O_P\Paren{\frac{1}{\sqrt{d}}}.
\end{equation*}
Whereas, the first term in eq. \eqref{eq:naive_term_A1_decomposition} satisfies,
\begin{equation*}
    \frac{2}{d}\gammah^2\lVert \Xinv\yv \rVert_2^2 \cdot\left\lVert \stv - \frac{\Xinv\yv}{\lVert \Xinv\yv \rVert_2} \right\rVert_2^2 \leq \frac{2}{d}\cdot\frac{d\bhat^4}{\Paren{\bhat^2 + \frac{\sigma^2}{d}\xi}}\cdot2^{-2B}d = \frac{2db^4}{b^2 + \frac{\sigma^2}{d}\xi}\cdot 2^{-2B} + O_P\Paren{\frac{1}{\sqrt{d}}}.
\end{equation*}
The first inequality follows from the fact that for a uniform scalar quantizer, the worst-case quantization error is given as,
\begin{equation*}
    \sup_{\zv \in \textbf{B}_{\infty}^d(1)}\norm{\text{Q}_{u,B}(\zv) - \zv} = \frac{\Delta}{2}\sqrt{d} = \frac{\sqrt{d}}{M} = 2^{-B}\sqrt{d}.
\end{equation*}
Consequently,
\begin{equation}
\label{eq:naive_A1_simplification}
    A_1 = \frac{2db^4}{\Paren{b^2 + \frac{\sigma^2}{d}\xi}}\cdot 2^{-2B} + O_P\Paren{\frac{1}{\sqrt{d}}}.
\end{equation}

\textbf{Analyzing the term $A_2$}.
Term $A_2$ has nothing to do with the quantization scheme used for model compression and is essentially the unavoidable loss for the linear estimate $\gammah \Xinv\yv$, which is present even if there is no bit budget constraint.
Suppose $\Av$ is a $d \times d$ orthonormal matrix such that $\Av\thetav = \Paren{\sqrt{d}b, 0, \ldots, 0}^\top \triangleq \tauv$.
Let $\ytv = \Av\Xinv\yv$.
Then, $\ytv \triangleq \Av\thetav + \Av\Xinv\vv \sim \Ncal\Paren{\tauv, \sigma^2\Av\Xinv{\Xinv}^\top\Av^\top}$.
We have,
\begin{align}
    \label{eq:naive_A2_simplification}
        A_2 = \frac{1}{d}\norm{\gammah\Xinv\yv - \thetav}_2^2 = \frac{1}{d}\norm{\gammah\Av\Xinv\yv - \Av\thetav}_2^2 &= \frac{1}{d}\norm{\gammah\ytv - \tauv}_2^2 \nonumber \\
        &= \frac{1}{d}\gammah^2\norm{\ytv}_2^2 + b^2 - 2\gammah \frac{1}{d}\inprod{\ytv,\tauv} \nonumber \\
        &= \frac{b^2\cdot\frac{\sigma^2\xi}{d}}{\Paren{b^2 + \frac{\sigma^2}{d}\xi}} + O_P\Paren{\frac{1}{\sqrt{d}}}
\end{align}
The last equality follows from the fact that $\norm{\Xinv\yv}_2 = \norm{\ytv}_2$, and $\frac{1}{\sqrt{d}}\inprod{\ytv, \tauv} = \frac{\yt_1}{\sqrt{d}}$ concentrates around its mean $b$, as $\frac{\yt_1}{\sqrt{d}} = b + O_P\Paren{\frac{1}{\sqrt{d}}}$.
Here, $\yt_1$ is the first coordinate of $\ytv$.

\textbf{Completing the proof of Thm \ref{thm:naive_quantizer_guarantee}}.
From eqs. \eqref{eq:naive_A1_simplification} and \eqref{eq:naive_A2_simplification}, eq. \eqref{eq:risk_decomposition_naive} simplifies to
\begin{equation}
    R\left(\thetatv, \thetav\right) \leq \frac{2b^2\cdot\frac{\sigma^2}{d}\xi}{b^2 + \frac{\sigma^2}{d}\xi} + \frac{2db^4}{\Paren{b^2 + \frac{\sigma^2}{d}\xi}}\cdot 2^{-2B} + O_P\Paren{\frac{1}{\sqrt{d}}}.
\end{equation}

A uniform high-probability bound for all models in the euclidean ball $\frac{1}{d}\lVert \thetav \rVert_2^2 \leq c^2$ can be obtained as follows:
\begin{equation}
    \mathbb{P}\Paren{R\left(\thetatv, \thetav\right) > \frac{2c^2\sigma^2}{\sigma^2 + c^2\sigma_{\mathrm{min}}^2} + \frac{2dc^4\sigma_{\mathrm{min}}^2}{\sigma^2 + c^2\sigma_{\mathrm{min}}^2}\cdot 2^{-2B} + \frac{C_n}{\sqrt{d}}} \to 0,
\end{equation}
as $d \to \infty$, for some constant $C_n$ independent of dimension $d$.
This completes the proof.

\section{Proof of Theorem \ref{thm:RCM_guarantee}: Random Correlation Maximization (\textbf{RCM}) guarantee}
\label{app:proof_of_thm_RCM_guarantee}

Random Correlation Maximization (\textbf{RCM}) learns and encodes the magnitude and direction of the model separately and then multiplies them with appropriate scaling.
Recall Lemma \ref{lem:error_encoded_magnitude} that gives a tail bound on $\bhat^2 - b^2$.
We use this throughout the analysis of Thm. \ref{thm:RCM_guarantee}.
We also state the following two lemmas from \cite{tonycai} and \cite{zhu_neurips_2014} without proofs.
The proofs can be found in the respective references.

\begin{lemma_boxed}[\cite{tonycai}]
\label{lem:distribution_of_inner_product}
Suppose that $\zv$ is uniformly distributed on the $d$-dimensional unit sphere $\Sbb^{d-1}$.
For a fixed $\xv \in \Real^d$ such that $\norm{\xv}_2 = 1$, the inner product $\rho = \inprod{\xv, \zv}$ between $\xv$ and $\zv$ has the probability density function
\[f\Paren{\rho} = \frac{1}{\sqrt{\pi}}\frac{\Gamma\Paren{\frac{d}{2}}}{\Gamma\Paren{\frac{d-1}{2}}}\Paren{1 - \rho^2}^{\frac{d-3}{2}}I\Paren{|\rho| < 1}.\]
Here, $I(\cdot)$ denotes the indicator function.
\end{lemma_boxed}
\begin{lemma_boxed}[\cite{zhu_neurips_2014}]
\label{lem:concentration_of_inner_products}
Suppose that $p = e^{d\beta}$ and $\zv_1, \ldots, \zv_p$ are independent and identically distributed with a uniform distribution on the $d$-dimensional sphere $\Sbb^{d-1}$.
For a fixed unit vector $\xv \in \Real^d$, let $\rho_i = \inprod{\xv, \zv_i}$ and $L_d = \max_{1 \leq i \leq p}\rho_i$.
Then $L_d \to \sqrt{1 - e^{-2\beta}}$ in probability as $d \to \infty$.
Furthermore, $L_d - \sqrt{1 - e^{-2\beta}} = O_P\Paren{\frac{\log d}{d}}$.
\end{lemma_boxed}
The proof of Lemma \ref{lem:concentration_of_inner_products} utilizes Lemma \ref{lem:distribution_of_inner_product}.

To prove Thm. \ref{thm:RCM_guarantee}, we alternatively prove the following Lemma \ref{lem:RCM_fixed_model_magnitude} that provides a high probability bound for \textbf{RCM} when the magnitude of the true model $\thetav$ is known.

\begin{lemma_boxed}
\label{lem:RCM_fixed_model_magnitude}
For a true parameter $\thetav \in \Real^d$ satisfying $\frac{1}{d}\norm{\thetav}_2^2 = b^2 \leq c^2$, as $d \to \infty$, the output of $\textbf{RCM}$ satisfies,
\begin{equation}
\label{eq:RCM_fixed_model_magnitude_eqn}
    \mathbb{P}\Paren{R\Paren{\thetatv, \thetav} > \frac{b^2\cdot\frac{\sigma^2}{\sigma_{\mathrm{min}}^2}}{\Paren{b^2 + \frac{\sigma^2}{\sigma_{\mathrm{min}}^2}}} + \frac{b^4\cdot 2^{-2B}}{\Paren{b^2 + \frac{\sigma^2}{\sigma_{\mathrm{min}}^2}}} + C\sqrt{\frac{\log d}{d}}} \to 0
\end{equation}
for some constant $C$ that does not depend on $d$ (but could possibly depend on $\Xv, b, \sigma$ and $B$).
\end{lemma_boxed}

The probability measure in Lemma \ref{lem:RCM_fixed_model_magnitude} is with respect to both the randomness in the source coding scheme used to obtain the estimate $\thetatv$ as well as the randomness in $\yv \in \Real^n$ due to the additive model $\vv \in \Real^n$ as in eq. \eqref{eq:noisy_planted_model}.
Note that as a direct consequence of Lemma \ref{lem:RCM_fixed_model_magnitude}, we can obtain a uniform high probability bound on all vectors in the $\ell_2$ ball $\frac{1}{d}\lVert\thetav\rVert_2^2 \leq c^2$.
In other words, we can replace $b^2$ by $c^2$ in eq. \eqref{eq:RCM_fixed_model_magnitude_eqn} of the statement of Lemma \ref{lem:RCM_fixed_model_magnitude}, and that completes the proof of Thm. \ref{thm:RCM_guarantee}.
Hence, it suffices to prove Lemma \ref{lem:RCM_fixed_model_magnitude}, which we do next.

\subsection{Proof of Lemma \ref{lem:RCM_fixed_model_magnitude}: \textbf{RCM} guarantee when the latent parameter \texorpdfstring{$\thetav$}{thetav} lies on the boundary of Euclidean ball}
\label{app:proof_lemma_RCM_fixed_model_magnitude}

As before, let us denote $\gammah \triangleq \frac{\bhat^2}{\bhat^2 + \frac{\sigma^2}{d}\xi}$.
The $\ell_2$-risk of the output of \textbf{RCM}, $\thetatv$ can be decomposed as,
\begin{align}
    \label{eq:risk_decomposition}
    R\Paren{\thetatv, \thetav} &= \frac{1}{d}\norm{\thetatv - \thetav}_2^2 \nonumber\\
    &= \frac{1}{d}\norm{\thetatv - \gammah\Xinv\yv + \gammah\Xinv\yv - \thetav}_2^2 \nonumber \\
    &= \underbrace{\frac{1}{d}\norm{\thetatv - \gammah\Xinv\yv}_2^2}_{A_1} + \underbrace{\frac{1}{d}\norm{\gammah\Xinv\yv - \thetav}_2^2}_{A_2} + \underbrace{\frac{2}{d}\inprod{\thetatv - \gammah\Xinv\yv, \gammah\Xinv\yv - \thetav}}_{A_3}.
\end{align}
The first term in eq. \eqref{eq:risk_decomposition}, labelled $A_1$, corresponds to the model quantization error and will change accordingly depending on which source coding scheme is used.
The second term labelled $A_2$ corresponds to the unavoidable error that will be present even if we have an infinite bit budget for model quantization.

\textbf{Analyzing the term $A_1$}.
Note that $A_1$ can be further decomposed as
\begin{equation}
\label{eq:decomposition_term_A1}
    \frac{1}{d}\lVert\thetatv - \gammah\Xinv\yv\rVert_2^2 = \frac{1}{d}\lVert\thetatv\rVert_2^2 + \frac{1}{d}\gammah^2\norm{\Xinv\yv}_2^2 - \frac{2}{d}\inprod{\thetatv, \gammah\Xinv\yv}
\end{equation}

The first term in \eqref{eq:decomposition_term_A1} is:
\begin{equation*}
    \frac{1}{d}\lVert\thetatv\rVert_2^2 = \frac{\bt^4\Paren{1 - 2^{-2B}}}{\bt^2 + \frac{\sigma^2}{d}\xi}\lVert\stv\rVert_2^2 = \frac{b^4\Paren{1 - 2^{-2B}}}{b^2 + \frac{\sigma^2}{d}\xi} + O_P\Paren{\frac{1}{\sqrt{d}}}.
\end{equation*}
Here, we utilize the fact that $\lVert\stv\rVert_2 = 1$ since $\xtv \in \Sbb^{d-1}$, and the second equality follows from Lemma \ref{lem:error_encoded_magnitude} along with the fact that $\bt^2$ is obtained from $\bhat^2$ by doing nearest neighbor quantization with respect to the codebook $\Bcal$ whose elements are separated with a resolution of $\frac{1}{\sqrt{d}}$.
The second term in \eqref{eq:decomposition_term_A1} is,
\begin{equation*}
    \frac{1}{d}\gammah^2\norm{\Xinv\yv}_2^2 = \frac{\bhat^4}{\Paren{\bhat^2 + \frac{\sigma^2}{d}\xi}^2}\cdot\Paren{\bhat^2 + \frac{\sigma^2}{d}\xi} = \frac{b^4}{b^2 + \frac{\sigma^2}{d}\xi} + O_P\Paren{\frac{1}{\sqrt{d}}}.
\end{equation*}

Whereas, the third (cross) term can be seen to be,
\begin{align*}
    \frac{2}{d}\inprod{\thetatv, \gammah\Xinv\yv} &= \frac{2}{d}\frac{\bhat^2}{\Paren{\bhat^2 + \frac{\sigma^2}{d}\xi}}\sqrt{\frac{d\bt^4\Paren{1 - 2^{-2B}}}{\Paren{\bt^2 + \frac{\sigma^2}{d}\xi}}}\cdot \inprod{\Xinv\yv, \stv} \\
    &=2\frac{\bhat^2\bt^2\sqrt{1 - 2^{-2B}}}{\Paren{\bhat^2 + \frac{\sigma^2}{d}\xi}\sqrt{\bt^2 + \frac{\sigma^2}{d}\xi}}\frac{1}{\sqrt{d}}\norm{\Xinv\yv}\cdot\frac{\inprod{\Xinv\yv, \stv}}{\norm{\Xinv\yv}} \\
    &= 2\frac{\bhat^2\bt^2\sqrt{1 - 2^{-2B}}}{\sqrt{\Paren{\bhat^2 + \frac{\sigma^2}{d}\xi}\Paren{\bt^2 + \frac{\sigma^2}{d}\xi}}}\cdot\frac{\inprod{\Xinv\yv, \stv}}{\norm{\Xinv\yv}}
\end{align*}
From Lemma \ref{lem:concentration_of_inner_products}, $\frac{\inprod{\Xinv\yv,\stv}}{\norm{\Xinv\yv}} - \sqrt{1 - 2^{-2B}} = O_P\Paren{\frac{\log d}{d}}$.

So, $\frac{2}{d}\inprod{\thetatv, \gammah \Xinv\yv} = \frac{2b^4}{b^2 + \frac{\sigma^2}{d}\xi}\Paren{1 - 2^{-2B}} + O_P\Paren{\frac{1}{\sqrt{d}}}$.
Combining all the three terms together yields,
\begin{equation}
\label{eq:A1_simplification}
    A_1 = \frac{b^4\cdot 2^{-2B}}{\Paren{b^2 + \frac{\sigma^2}{d}\xi}} + O_P\Paren{\frac{1}{\sqrt{d}}}.
\end{equation}

\textbf{Analyzing the term $A_2$}.
Term $A_2$ has nothing to do with the quantization scheme used for model compression and can be analyzed similar to eq. \eqref{eq:naive_A2_simplification} in App. \ref{app:naive_quantizer_guarantee}, i.e.
\begin{equation*}
    A_2 = \frac{b^2\cdot\frac{\sigma^2\xi}{d}}{\Paren{b^2 + \frac{\sigma^2}{d}\xi}} + O_P\Paren{\frac{1}{\sqrt{d}}}.
\end{equation*}

\textbf{Analyzing the term $A_3$}.
Let us define,
\begin{equation}
\label{eq:auxiliary_variable_1_RCM}
    \thetaov \triangleq \sqrt{\frac{d\bhat^4\Paren{1 - 2^{-2B}}}{\bhat^2 + \frac{\sigma^2}{d}\xi}}\cdot\stv.
\end{equation}
The term $A_3$ can be written as,

\begin{align}
\label{eq:term_A3_decomposition}
    \frac{2}{d}\inprod{\thetatv - \gammah\Xinv\yv, \gammah\Xinv\yv - \thetav} = \frac{2}{d}\inprod{\thetatv - \thetaov, \gammah\Xinv\yv - \thetav} &+ \frac{2}{d}\inprod{\thetaov, \gammah\Xinv\yv - \thetav}\\
    &- \frac{2}{d}\inprod{\gammah\Xinv\yv, \gammah\Xinv\yv - \thetav}.
\end{align}

The first term in eq. \eqref{eq:term_A3_decomposition} is $O_P\Paren{\frac{1}{\sqrt{d}}}$, since $\frac{1}{\sqrt{d}}\lVert\thetatv - \thetaov\rVert_2 = O_P\Paren{\frac{1}{\sqrt{d}}}$, and $\frac{1}{\sqrt{d}}\norm{\gammah\Xinv\yv - \thetav}_2$ has bounded second moment.
The third term in eq. \eqref{eq:term_A3_decomposition} is,

\begin{align*}
    \frac{2}{d}\inprod{\gammah\Xinv\yv, \gammah\Xinv\yv - \thetav} = \frac{2}{d}\inprod{\gammah\ytv, \gammah\ytv - \tauv} = \frac{2}{d}\gammah^2\norm{\ytv}_2^2 - 2\gammah b\frac{\yt_1}{\sqrt{d}} = O_P\Paren{\frac{1}{\sqrt{d}}}.
\end{align*}

We now consider the second term,

\begin{equation}
\label{eq:projection_error_1}
    \frac{2}{d}\inprod{\thetaov, \gammah\Xinv\yv - \thetav} = \frac{2}{d}\inprod{\thetaov, \gamma'\Xinv\yv - \thetav} + \frac{2}{d}\Paren{\gammah - \gamma'}\inprod{\thetaov, \Xinv\yv},
\end{equation}

where $\gamma' \triangleq \frac{\inprod{\thetav, \Xinv\yv}}{\norm{\Xinv\yv}_2^2}$.
Note that this choice of $\gamma'$ ensures that $\gamma'\Xinv\yv$ is the projection of the true model $\thetav$ on to $\Xinv\yv$, since the error is orthogonal, i.e. $\inprod{\Xinv\yv, \gamma'\Xinv\yv - \thetav} = 0$.
Furthermore, since $\gammah\Xinv\yv$ can be interpreted as the linear estimator of $\thetav$ from $\yv$ given $\Xv$, which uses the unquantized magnitude $\gammah$ for scaling, the second term in eq. \eqref{eq:projection_error_1} quantifies the distance between this vector and the projection of $\thetav$ on $\Xinv\yv$.
We now show that $\gammah\Xinv\yv$ is not too far from this projection because,

\begin{align*}
    \gamma' = \frac{\sum_{i=1}^{d}\theta_i\Paren{\Xinv\yv}_i}{\norm{\Xinv\yv}_2^2} = \frac{\norm{\thetav}_2^2 + \sum_{i=1}^{d}\theta_i\Paren{\Xinv\vv}_i}{d\Paren{\bhat^2 + \frac{\sigma^2}{d}\xi}} = \frac{b^2}{\bhat^2 + \frac{\sigma^2}{d}\xi} + \frac{1}{\bhat^2 + \frac{\sigma^2}{d}\xi}\cdot\frac{1}{d}\sum_{i=1}^{d}\theta_i\Paren{\Xinv\vv}_i
\end{align*}

Moreover, since we considered the SVD of $\Xv = \Uv\Sigmav\Vv^{\top}$, the SVD of $\Xinv$ is, $\Xinv\vv = \Vv\Sigmav'\Uv^{\top}\vv$, where $\Sigmav'$ consists of singular values $\sigma_1^{-1}, \ldots, \sigma_d^{-1}$.
Since Gaussian distribution is rotation invariant, and $\Vv^{\top}$ is a unitary transform, $\vv \sim \Ncal(\mathbf{0}, \sigma^2\Iv_n)$ implies that $\Vv^{\top}\vv \sim \Ncal(\mathbf{0}, \sigma^2\Iv_n)$.
Furthermore, since this is followed by multiplication by a diagonal matrix, which  simply scales the variance of each coordinate differently, and another unitary transform $\Uv$, it holds that, $\Paren{\Xinv\vv}_i \sim \Ncal\Paren{0, \frac{\sigma^2}{\sigma_i^2}}$.
Consequently, $\sum_{i=1}^{d}\theta_i\Paren{\Xinv\vv}_i \sim \Ncal\Paren{0, \sum_{i=1}^{d}\theta_i^2\frac{\sigma^2}{\sigma_i^2}} \implies$ The variance of $\frac{1}{d}\sum_{i=1}^{d}\theta_i\Paren{\Xinv\vv}_i$ can be upper bounded by $\frac{\sigma^2b^2}{d\sigma_{\mathrm{min}}^2}$.
So, $\frac{1}{d}\sum_{i=1}^{d}\theta_i\Paren{\Xinv\vv}_i$ concentrates around its mean $0$ as $O_P\Paren{\frac{1}{\sqrt{d}}}$, meaning, $\gammah - \gamma' = \frac{\bhat^2 - b^2}{\bhat^2 + \frac{\sigma^2}{d}\xi} = O_P\Paren{\frac{1}{\sqrt{d}}}$.
Moreover, the coefficient of $(\gammah - \gamma')$,
\begin{equation*}
    \frac{2}{d}\inprod{\thetaov, \Xinv\yv} \leq \frac{2}{d}\lVert\thetaov\rVert_2\norm{\Xinv\yv}_2 = \frac{2}{d} \sqrt{\frac{d\bhat^4\Paren{1 - 2^{-2B}}}{\bhat^2 + \frac{\sigma^2}{d}\xi}}\sqrt{d\Paren{\bhat^2 + \frac{\sigma^2}{d}\xi}} = 2\bhat^2\Paren{1 - 2^{-2B}}
\end{equation*}
is finite.
Therefore, the second term in eq. \eqref{eq:projection_error_1} is $O_P\Paren{\frac{1}{\sqrt{d}}}$.
The final term left to analyze is the first term in eq. \eqref{eq:projection_error_1},
\begin{equation*}
    \frac{2}{d}\inprod{\thetaov, \gamma'\Xinv\yv - \thetav} = \sqrt{\frac{4\bhat^4\Paren{1 - 2^{-2B}}}{\bhat^2 + \frac{\sigma^2}{d}\xi}}\inprod{\stv, \frac{1}{\sqrt{d}}\Paren{\gamma'\Xinv\yv - \thetav}}
\end{equation*}
Note that the scaling factor $\sqrt{\frac{4\bhat^4\Paren{1 - 2^{-2B}}}{\bhat^2 + \frac{\sigma^2}{d}\xi}} = \sqrt{\frac{4b^4\Paren{1 - 2^{-2B}}}{b^2 + \frac{\sigma^2}{d}\xi}} + O_P\Paren{\frac{1}{\sqrt{d}}}$.
To analyze the term $\inprod{\stv, \frac{1}{\sqrt{d}}\Paren{\gamma'\Xinv\yv - \thetav}}$, we make use of the following result from \cite{cai_fan_jiang_JMLR_2013}.

\begin{lemma_boxed}[]Near-orthogonality of vectors in high dimensions; \cite[Prop. 1]{cai_fan_jiang_JMLR_2013}, \cite[Lemma A.4]{zhu_neurips_2014}]
\label{lem:angle_distribution_unit_sphere}
Let $\uv$ be uniformly distributed on the unit sphere $\Sbb^{d-1}$.
Let $\xv \in \Real^d$ be a fixed unit vector.
Then,
\begin{equation*}
    \mathbb{P}\Paren{\left\lvert\inprod{\uv, \xv}\right\rvert > \epsilon} \leq K\sqrt{d}\Paren{1 - \epsilon^2}^{\frac{d-2}{2}}
\end{equation*}
for all $d \geq 2$ and $0 < \epsilon < 1$, where $K$ is a universal constant.
Therefore,
\begin{equation*}
    \inprod{\uv,\xv} = O_P\Paren{\frac{\log d}{d}}.
\end{equation*}
\end{lemma_boxed}

From the construction of our codebook $\Ycal$, the vector $\stv$ is uniformly distributed on the unit sphere $\Sbb^{d-1}$.
Consequently, for any vector $\yv \in \Real^d$, from symmetry, the projection of $\stv$ on to subspace orthogonal $\Xinv\yv$, i.e. $\mathrm{Proj}_{\Xinv\yv^\perp}\Paren{\stv}$ has a spherical distribution in $\Real^{d-1}$.
Furthermore, its length can be calculated as follows,
\begin{align*}
    \mathrm{Proj}_{\Xinv\yv^\perp}\Paren{\stv} &= \stv - \frac{\inprod{\stv,\Xinv\yv}}{\norm{\Xinv\yv}_2^2}\Xinv\yv \\
    \implies \norm{\mathrm{Proj}_{\Xinv\yv^\perp}\Paren{\stv}}_2^2 &= \norm{\stv}_2^2 + \frac{\inprod{\stv, \Xinv\yv}^2}{\norm{\Xinv\yv}_2^2} - 2\cdot\frac{\inprod{\stv, \Xinv\yv}^2}{\norm{\Xinv\yv}_2^2} \\
    &= 1 - \frac{\inprod{\stv, \Xinv\yv}^2}{\norm{\Xinv\yv}_2^2} \\
    &= 1 - \inprod{\stv, \frac{\Xinv\yv}{\norm{\Xinv\yv}_2}}^2 = 1 - \sqrt{1 - 2^{-2B}} + O_P\Paren{\frac{\log d}{d}}.
\end{align*}
The final equality is due to Lemma \ref{lem:concentration_of_inner_products}.
So, we can write, with high probability, $ \mathrm{Proj}_{\Xinv\yv^\perp}\Paren{\stv} = L\cdot \uv$ where, $\uv$ follows the uniform distribution on sphere $\Sbb^{d-1}$ and $L = 1 - \sqrt{1 - 2^{-2B}} + O_P\Paren{\frac{\log d}{d}}$.
Note that the projection of $\stv$ onto the orthogonal space of $\Xinv\yv$, i.e. $\mathrm{Proj}_{\Xinv\yv^\perp}\Paren{\stv}$, is independent of $\Xinv\yv$.
So, conditioning on a particular value of $\Xinv\yv = \wv \in \Real^d$, and recalling that $\inprod{\Xinv\yv, \gamma'\Xinv\yv - \thetav} = 0$, we have,

\begin{align*}
    &\mathbb{P}\Paren{\inprod{\stv, \frac{1}{\sqrt{d}}\Paren{\gamma'\Xinv\yv - \thetav}} > t \hspace{1mm} \Bigg\vert \hspace{1mm} \Xinv\yv = \wv} \\
    &= \mathbb{P}\Paren{\inprod{\mathrm{Proj}_{\Xinv\yv^\perp}\Paren{\stv}, \frac{1}{\sqrt{d}}\mathrm{Proj}_{\Xinv\yv^\perp}\Paren{\gamma'\Xinv\yv - \thetav}} > t \hspace{1mm} \Bigg\vert \hspace{1mm} \Xinv\yv = \wv} \\
    &= \mathbb{P}\Paren{L\norm{\frac{1}{\sqrt{d}}\Paren{\gamma'\wv - \thetav}}_2\inprod{\uv, \ev} > t} \\
    &\leq K_1\sqrt{d}\Paren{1 - \frac{t^2}{K_2\norm{\frac{1}{\sqrt{d}}\Paren{\gamma'\wv- \thetav}}_2^2}}^{\frac{d-2}{2}},
\end{align*}

where, $K_1$ and $K_2$ are positive constants.
Here, $\ev$ is the unit vector in the direction of $(\gamma'\wv -\thetav)$, and the last inequality follows from Lemma \ref{lem:angle_distribution_unit_sphere}.
We then have,

\begin{align*}
    &\mathbb{P}\Paren{\inprod{\stv, \frac{1}{\sqrt{d}}\Paren{\gammah\Xinv\yv - \thetav}} > t}\\
    &= \bigintsss\mathbb{P}\Paren{\inprod{\stv, \frac{1}{\sqrt{d}}\Paren{\gamma'\Xinv\yv - \thetav}} > t \hspace{1mm} \Bigg\vert \hspace{1mm} \Xinv\yv = \wv}p_{\Xinv\yv}\Paren{\Xinv\yv = \wv}d\wv \\
    &\leq K_1\sqrt{d}\bigintsss\Paren{1-\frac{t^2}{K_2\norm{\frac{1}{\sqrt{d}}\Paren{\gamma'\Xinv\yv - \thetav}}_2^2}}^{\frac{d-2}{2}}p_{\Xinv\yv}\Paren{\Xinv\yv = \wv}d\wv \\
    &\leq K_1\sqrt{d}\Paren{1 - \frac{t^2}{K_2'}}^{\frac{d-2}{2}} + \mathbb{P}\Paren{\frac{1}{d}\norm{\Xinv\yv}_2^2 > b^2 + \frac{\sigma^2}{d}\xi + K_3},
\end{align*}

for positive constants $K_1, K_2$ and $K_3$.
This implies $\inprod{\stv, \frac{1}{\sqrt{d}}\Paren{\gamma'\Xinv\yv - \thetav}} = O_P\Paren{\sqrt{\frac{\log d}{d}}}$ and thus $A_3 = O_P\Paren{\sqrt{\frac{\log d}{d}}}$.

\textbf{Completing the proof of Lemma \ref{lem:RCM_fixed_model_magnitude}}.
From eq. \eqref{eq:risk_decomposition}, we have that $R\Paren{\thetatv, \thetav} = A_1 + A_2 + A_3$.
From eq. \eqref{eq:A1_simplification}, $A_1 = \frac{b^4\cdot 2^{-2B}}{\Paren{b^2 + \frac{\sigma^2}{d}\xi}} + O_P\Paren{\frac{1}{\sqrt{d}}}$.
From eq. \eqref{eq:naive_A2_simplification}, $A_2 = \frac{b^2\cdot\frac{\sigma^2\xi}{d}}{\Paren{b^2 + \frac{\sigma^2}{d}\xi}} + O_P\Paren{\frac{1}{\sqrt{d}}}$.
Finally, as just seen above, $A_3 = O_P\Paren{\sqrt{\frac{\log d}{d}}}$.
Therefore, for Random Correlation Maximization (\textbf{RCM}), the risk is given by,
\begin{equation}
    R\Paren{\thetatv, \thetav} = \frac{b^2\cdot\frac{\sigma^2\xi}{d}}{\Paren{b^2 + \frac{\sigma^2}{d}\xi}} + \frac{b^4\cdot 2^{-2B}}{\Paren{b^2 + \frac{\sigma^2}{d}\xi}} + O_P\Paren{\sqrt{\frac{\log d}{d}}}
\end{equation}
In other words,

\begin{equation}
    \mathbb{P}\Paren{R\Paren{\thetatv, \thetav} > \frac{b^2\cdot\frac{\sigma^2\xi}{d}}{\Paren{b^2 + \frac{\sigma^2}{d}\xi}} + \frac{b^4\cdot 2^{-2B}}{\Paren{b^2 + \frac{\sigma^2}{d}\xi}} + C\sqrt{\frac{\log d}{d}}} \to 0
\end{equation}

as $d \to \infty$, for some constant $C$ independent of $d$.
This also implies,

\begin{equation}
    \mathbb{P}\Paren{{R\Paren{\thetatv, \thetav} > \frac{b^2\cdot\frac{\sigma^2}{\sigma_{\mathrm{min}}^2}}{\Paren{b^2 + \frac{\sigma^2}{\sigma_{\mathrm{min}}^2}}} + \frac{b^4\cdot 2^{-2B}}{\Paren{b^2 + \frac{\sigma^2}{\sigma_{\mathrm{max}}^2}}} + C\sqrt{\frac{\log d}{d}}}} \to 0
\end{equation}
which completes the proof.

\section{Democratic Embeddings and Quantization}
\label{app:democratic_embeddings}

For a given wide matrix $\Sv \in \Real^{d \times D}$ where $d \leq D$, and for a given vector $\zv \in \Real^n$, the system of equations $\zv = \Sv \xv$ is under-determined for $\xv \in \Real^D$, with the set $\Xcal = \{\xv \in \Real^D \hspace{1mm} \vert \hspace{1mm} \zv = \Sv\xv\}$ as the solution space.
The \textbf{democratic embedding} of $\zv$ with respect to $\Sv$ is defined to be the solution of the following linear program,
\begin{equation}
\label{eq:democratic_embedding_definition}
    \xv_d = \argminimize_{\xv \in \Real^D}\lVert \xv \rVert_{\infty} \text{ subject to } \zv = \Sv \xv.
\end{equation}
In order to characterize the solution of the optimization problem \eqref{eq:democratic_embedding_definition} (done in Lemma \ref{lem:democratic_embedding_property}), we review certain definitions from \cite{lyubarskii, studer2012}.

\begin{definition_boxed} \textbf{(Frame)}
\label{def:frame}
A matrix $\Sv \in \Real^{d \times D}$ with $d \leq D$ is called a \textbf{frame} if there exist $A,B$ with $0 < A \leq B < \infty$, such that for any vector $\zv \in \Real^d$, we have $A\norm{\zv}_2^2 \leq \norm{\Sv^\top\zv}_2^2 \leq B\norm{\zv}_2^2$, where $A$ and $B$ are called \textbf{lower} and \textbf{upper frame bounds} respectively.
\end{definition_boxed}

Note that the definition of a \textit{frame} includes all full-rank matrices \cite{frames_book, wikipedia_frame}.
However, for obtaining democratic embeddings with desirable properties (as specified in Lemma \ref{lem:democratic_embedding_property} below), we are interested in frames that have a small value of the upper frame bound $B$.
Explicit expressions for $B$ that hold true with high probability, can be obtained for special classes of frames that satisfy the uncertainty principle defined below in Def. \ref{def:uncertainty_principle}.

\begin{definition_boxed}\textbf{(Uncertainty principle (UP))}
\label{def:uncertainty_principle}
A frame $\Sv \in \Real^{d \times D}$ satisfies the Uncertainty Principle with parameters $\eta, \delta$, with $\eta > 0, \delta \in (0,1)$ if $\norm{\Sv\xv}_2 \leq \eta\norm{\xv}_2$ holds for all (sparse) vectors $\xv \in \Real^D$ satisfying $\norm{\xv}_0 \leq \delta D$, where $\norm{\xv}_0$ denotes the number of non-zero elements in $\xv$.
\end{definition_boxed}

\begin{lemma_boxed}[\textbf{Democratic Embeddings}; \cite{studer2012}]
\label{lem:democratic_embedding_property}
Let $\Sv \in \Real^{d \times D}$ be a frame with bounds $A,B$ (cf. Def. \ref{def:frame}) that satisfies  the uncertainty principle (UP) (cf. Def. \ref{def:uncertainty_principle}) with parameters $\eta, \delta$ such that $A > \eta\sqrt{B}$.
Then for any vector $\zv \in \Real^d$, the solution $\xv_d$ of \eqref{eq:l_inf_minimization_problem} satisfies
\begin{equation}
    \frac{K_l}{\sqrt{D}}\norm{\zv}_2 \leq \norm{\xv_d}_{\infty} \leq \frac{K_u}{\sqrt{D}}\norm{\zv}_2,
\end{equation}
where $ K_l = \frac{1}{\sqrt{B}}$ and $K_u = \frac{\eta}{\left(A-\eta\sqrt{B}\right)\sqrt{\delta}}$ are called \textbf{lower} and \textbf{upper Kashin constants} respectively.
\end{lemma_boxed}

\textbf{Remark}. The linear program \eqref{eq:democratic_embedding_definition} can be solved with $O(d^3)$ multiplications using simplex or Newton's method.
Alternatively, \cite{lyubarskii} propose an iterative projected descent type algorithm with $O(d^2)$ complexity that makes explicit use of the values of $\eta$ and $\delta$.

In this work, we consider Parseval frames $(A = B = 1)$, i.e. they satisfy $\Sv\Sv^{\top} = \Iv_n$, implying $K_l = 1$ and $K_u = \eta(1-\eta)^{-1}\delta^{-1/2}$.
Lemma \ref{lem:democratic_embedding_property} shows that none of the coordinates of the democratic embedding is too large, and the information content of $\zv$ is distributed evenly across the coordinates of its embedding.
The value of upper Kashin constant $K_u$ depends on the choice of frame construction $\Sv$, as well as its aspect ratio $\lambda = D/d$.
\cite{lyubarskii, studer2012} show that if $\Sv$ is a \textit{random orthonormal matrix}, then $K = K\left(\lambda\right)$, where $\lambda > 1$ can be arbitrarily close to $1$.
This is precisely characterized below.

\begin{theorem_boxed}[\textbf{UP for Random Orthonormal Matrices}; \cite{lyubarskii}(Thm. 4.1)]
\label{thm:uncertainty_principle_random_orthonormal}
Let $\mu > 0$ and $D = (1 + \mu)d$.
Then with probability at least $1 - 2\exp\left(-c\mu^2d\right)$, a random orthonormal $d \times D$ matrix $\Sv$ satisfies the uncertainty principle with parameters
\begin{equation}
    \eta = 1 - \frac{\mu}{4} \hspace{2mm} \text{and}, \hspace{2mm} \delta = \frac{c\mu^2}{\log(1/\mu)}
\end{equation}
where $c > 0$ is an absolute constant.
\end{theorem_boxed}

Since choosing $\lambda$ is up to us, Lemma \ref{lem:democratic_embedding_property} implies that for random orthonormal frames, democratic embeddings satisfy $\norm{\xv_d}_{\infty} = \Theta(1/\sqrt{D})$.
Lemma \ref{lem:democratic_quantizer_worst_case_error} utilizes this fact to give the worst-case quantization error of \textbf{Democratic Quantization} as described in \eqref{eq:democratic_quantization_shape}.

\clearpage
\begin{lemma_boxed}[\textbf{Worst-case $\ell_2$ error of democratic quantization}]
\label{lem:democratic_quantizer_worst_case_error}
Given a frame $\Sv \in \Real^{d \times D}$ with upper Kashin constant $K_u$ and a uniform scalar quantizer $\text{Q}_{u,B}$, for any $\zv \in \Real^d$ such that $\norm{\zv}_2 \leq 1$, let $\ztv = \text{Q}_{d,B}(\zv) = \Sv\cdot\text{Q}_{u,B}(\xv_d)$.
Then, for $\lambda = D/d$,
\begin{equation}
    \sup_{\lVert \zv \rVert_2 \leq 1} \lVert \ztv - \zv \rVert_2 \leq 2^{\Paren{1 - \frac{B}{\lambda}}}K_u.
\end{equation}
\end{lemma_boxed}
\begin{proof}
From Lemma \ref{lem:democratic_embedding_property}, we know that the democratic representation of $\zv$ with respect to $\Sv$, i.e. $\xv_d$, satisfies, $\norm{\xv_d}_{\infty} \leq \frac{K_u}{\sqrt{D}}$.
Let us denote $\xtv = \text{Q}_{u,B}(\xv_d)$.
Then, since $\xv_d, \xtv \in \Real^D$, we have $\norm{\xtv - \xv_d}_2 \leq \frac{2K_u/\sqrt{D}}{2^{dB/D}}\sqrt{D} = 2^{\Paren{1 - \frac{dB}{D}}}K_u$.
This inequality follows since we can exploit Lemma \ref{lem:democratic_embedding_property} and choose the dynamic range of the uniform quantizer $\text{Q}_{u,B}$ to be $\left[-\frac{K_u}{\sqrt{D}}, +\frac{K_u}{\sqrt{D}}\right]$.
Moreover, since we originally had a bit-budget of $dB$-bits but we are now using it to quantize a vector in $\Real^D$, the effective average number of bits per dimension is $dB/D$, implying $2^{dB/D}$ quantization points per dimension.

Moreover, $\norm{\ztv - \zv}_2 \leq \norm{\Sv\Paren{\xtv - \xv_d}}_2 \leq \norm{\Sv}_2\norm{\xtv - \xv_d}_2 \leq \norm{\xtv - \xv_d}_2 \leq 2^{\Paren{1 - \frac{dB}{D}}}K_u$.
The third inequality holds true because we consider Parseval frames and $\Sv\Sv^{\top} = 1 \implies \norm{\Sv}_2 \leq 1$.
This completes the proof.
\end{proof}

\section{Proof of Theorem \ref{thm:DQ_guarantee}: Democratic Quantization (DQ) guarantee}
\label{app:DQ_guarantee_proof}

The proof of this is similar to the proof of Thm. \ref{thm:naive_quantizer_guarantee}.
Recall that we defined the scaling factor as $\gammat \triangleq \sqrt{\frac{d\bt^4}{\bt^2 + \frac{\sigma^2}{d}\xi}}$ in eq. \eqref{eq:gamma_dash_scaling}.
Then, the output of \text{Democratic Model Quantization} is given by $\thetatv = \gammat\stv$, where $\stv$ is the the democratically quantized direction $Q_{d,B}\Paren{\frac{\Xinv\yv}{\lVert \Xinv\yv \rVert_2}}$.
We refer to Lemma \ref{lem:democratic_quantizer_worst_case_error} from App. \ref{app:democratic_embeddings} that gives the worst-case quantization error for democratic quantizer whose input is a vector on the unit sphere $\Sbb^{d-1}$.
Since we compute the democratic embedding of $\sv = \Xinv\yv/\lVert \Xinv\yv \rVert_2$ that satisfies $\lVert \sv \rVert_2 = 1$, Lemma \ref{lem:democratic_quantizer_worst_case_error} and Thm. \ref{thm:uncertainty_principle_random_orthonormal} together imply that when $\Sv \in \Real^{d \times D}$ is taken to be a \textit{random orthonormal frame}, with probability at least $1 - 2\exp\left(-c\mu^2d\right)$, where $\mu = \frac{D}{d} - 1$, we have,
\begin{equation}
    \left\lVert \stv - \text{Q}_{d,B}\Paren{\frac{\Xinv\yv}{\lVert \Xinv\yv\rVert_2}} \right\rVert_2 \leq 2^{\Paren{1 - \frac{B}{\lambda}}}K_u,
\end{equation}
where $\lambda = D/d$ is the aspect ratio of $\Sv$.
Similar to App. \ref{app:DQ_guarantee_proof}, the $\ell_2$-risk can be decomposed as,
\begin{equation}
\label{eq:risk_decomposition_DQ}
    R\Paren{\thetatv, \thetav} = \frac{1}{d}\lVert\thetatv - \thetav\rVert_2^2 \leq \underbrace{\frac{2}{d}\lVert\thetatv - \gammah\Xinv\yv\rVert_2^2}_{A_1} + \underbrace{\frac{2}{d}\norm{\gammah\Xinv\yv - \thetav}_2^2}_{A_2},
\end{equation}
where $\gammah = \frac{\bhat^2}{\bhat^2 + \frac{\sigma^2}{d}\xi}$.
The term $A_2$ is the inherent learning risk and remains unchanged.

To analyze term $A_1$, we decompose it in the same way as before in eq. \eqref{eq:naive_term_A1_decomposition}.
Similar as before, the term $\frac{2}{d}\gammah^2\norm{\Xinv\yv}_2^2\cdot\norm{\stv - \frac{\Xinv\yv}{\norm{\Xinv\yv}_2}}_2^2 + \frac{2}{d}\Paren{\gammat - \gammah\norm{\Xinv\yv}_2}^2\norm{\stv}_2^2$ can be shown to be $O_P\Paren{\frac{1}{\sqrt{d}}}$.
However, we analyze the first term in eq. \eqref{eq:naive_term_A1_decomposition} as follows,
\begin{align*}
    \frac{2}{d}\gammah^2\lVert \Xinv\yv \rVert_2^2 \cdot\left\lVert \stv - \frac{\Xinv\yv}{\lVert \Xinv\yv \rVert_2} \right\rVert_2^2 &\leq \frac{2}{d}\cdot \frac{d\bhat^4}{\Paren{\bhat^2 + \frac{\sigma^2}{d}\xi}}\cdot 2^{2\Paren{1 - \frac{B}{\lambda}}}K_u^2 + O_P\Paren{e^{-d}} \\
    &\leq \frac{8K_u^2b^4}{b^2 + \frac{\sigma^2}{d}\xi}\cdot 2^{-\frac{2B}{\lambda}} + O_P\Paren{\frac{1}{\sqrt{d}}}.
\end{align*}
Proceeding similarly as before, we can get a uniform upper bound over all models in $\frac{1}{d}\lVert \thetav \rVert_2^2 \leq c^2$ as,

\begin{equation}
    \mathbb{P}\Paren{R(\thetatv, \thetav) > \frac{2c^2\sigma^2}{\sigma^2 + c^2\sigma_{\mathrm{min}}^2} + \frac{16K_u^2c^4\sigma_{\mathrm{min}}^2}{\sigma^2 + c^2\sigma_{\mathrm{min}}^2}\cdot 2^{-\frac{2B}{\lambda}} + \frac{C_{dq}}{\sqrt{d}}} \to 0,
\end{equation}

as $d \to \infty$, for some constant $C_{dq}$ independent of the dimension $d$.

\section{Near-Democratic Embeddings and Quantization}
\label{app:near_democratic_embeddings}

To reduce the computational complexity of obtaining the democratic embeddings, an alternative would be to solve a relaxation of \eqref{eq:l_inf_minimization_problem}.
The \textbf{near-democratic embedding} of $\zv \in \Real^d$ with respect to $\Sv \in \Real^{d \times D}$ is defined to be the solution of the following convex program,

\begin{equation}
    \label{eq:near_democratic_embedding_definition}
    \xv_{nd} = \argminimize_{\xv \in \Real^D}\lVert \xv \rVert_2 \text{ subject to } \zv = \Sv \xv.
\end{equation}

To solve this in closed form, the Lagrangian is $L(\xv, \nuv) = \xv^{\top}\xv + \nuv^{\top}\left(\Sv \xv - \yv\right)$ for $\nuv \in \Real^d$.
Computing the derivative and setting it to zero, $\nabla_{\xv}L(\xv, \nuv) = 2\xv + \Sv^{\top}\nuv = 0 \implies \xv_{nd} = -\frac{1}{2}\Sv^{\top}\nuv$.
So, $\nuv = -2\left(\Sv\Sv^{\top}\right)^{-1}\zv$ and $\xv_{nd} = -\frac{1}{2}\Sv^{\top}\left(-2\left(\Sv\Sv^{\top}\right)^{-1}\zv\right) = \Sv^{\top}\left(\Sv\Sv^{\top}\right)^{-1}\zv = \Sv^{\top}\zv$.
The last equality follows from the fact that we choose our frames $\Sv \in \Real^{d \times D}$ to be Parseval, i.e. they satisfy $\Sv\Sv^{\top} = \Iv_d$.

\begin{remark}
The frame $\Sv$ can be chosen to be either a random Haar orthonormal matrix as in Thm. \ref{thm:uncertainty_principle_random_orthonormal}, or the Randomized Hadamard frame $\Sv = \Pv \Dv \Hv$ as described in \S \ref{subsec:near_democratic_quantization}.
Choosing $\Sv$ to be random orthonormal still requires $O(d^2)$ multiplications to find the near-democratic embedding but does not require the explicit knowledge of the UP parameters $\eta$ and $\delta$.
On the other hand, when $\Sv$ is taken to be randomized Hadamard, $\xv_{nd}$ can be computed using only $O(d \log d)$ additions and sign flips using the fast Walsh-Hadamard transform.
\end{remark}

The following lemmas \ref{lemma:near_democratic_dynamic_range_random_orthonormal} and \ref{lemma:near_democratic_dynamic_range_randomized_Hadamard} provide an upper bound to the $\ell_{\infty}$-norm of the solution of \eqref{eq:near_democratic_embedding_definition} for orthonormal and Hadamard frames respectively.
Their proofs are reproduced here from \cite{DSC_early_access} for completeness.

\begin{lemma_boxed}
\label{lemma:near_democratic_dynamic_range_random_orthonormal}
For a random orthonormal frame $\Sv \in \Real^{d \times D}$, with probability as least $1 - \frac{1}{2D}$, the near-democratic embedding of $\zv$ with respect to $\Sv$ satisfies,
\begin{equation}
    \norm{\xv_{nd}}_{\infty} \leq 2\sqrt{\frac{\lambda \log(2D)}{D}}\norm{\zv}_2,
\end{equation}
where $\lambda = D/d$.
\end{lemma_boxed}
\begin{proof}
A random orthonormal matrix is obtained by first generating a random Gaussian matrix $\Gv$, each of whose entries are drawn i.i.d. from $\Ncal(0,1)$, computing its singular value decomposition $\Gv = \Uvu \Sigmauv \Vvu^{\top}$, letting $\Stv = \Uvu \Vvu^{\top}$, and randomly selecting $d$ rows of $\Stv$, i.e. $\Sv = \Pv\Stv$.
Let $\{\sv_i\}_{i=1}^{D} \in \Real^d$ and $\{\stv_i\}_{i=1}^{D} \in \Real^D$ denote the columns of $\Sv$ and $\Stv$ respectively.
We have, $\norm{\sv_i}_2 = \norm{\Pv\stv_i}_2 \leq \norm{\stv_i}_2 = 1$.
For any $\zv \in \Real^d$ and $\zhv = \zv / \norm{\zv}_2$,

\[\norm{\Sv^{\top}\zv}_{\infty} = \max_{i \in [D]}|\sv_i^{\top}\zv| = \norm{\zv}_2\max_{i \in [D]}\norm{\sv_i}_2|\shv_i^{\top}\zhv| \leq \norm{\zv}_2 \max_{i \in [D]}|\shv_i^{\top}\zhv|,\]
where $\shv_i = \sv_i/\norm{\sv_i}_2$.
Note that $\shv_i \in \Real^d$ is uniformly random on the unit sphere in $\Real^d$, i.e. $\shv_i$ has identical distribution as $\gv/\norm{\gv}_2$ where $\gv \sim \Ncal(\mathbf{0}, \Iv_d)$.
Due to rotational invariance of Gaussian distribution, for any fixed $\zhv \in \Real^d$ such that $\norm{\zhv}_2 = 1$, $\shv_i^{\top}\zhv$ has identical distribution as $\shv_i^{\top}\ev_1$, where $\ev_1$ is $[1, 0, \ldots, 0]^{\top} \in \Real^d$.
Concentration of measure for uniform distribution on the unit sphere implies,

\[\mathbb{P}\left[|\shv_i^{\top}\zhv| \geq t\right] = \mathbb{P}\left[|s_1| \geq t\right] \leq 2e^{-dt^2/2}.\]

A subsequent union bound gives us,

\[\mathbb{P}\left[\max_{i \in [D]}|\shv_i^{\top}\zhv| \geq t\right]\leq 2De^{-dt^2/2}.\]
Choosing $t = 2\sqrt{\frac{\log(2D)}{d}}$ yields,

\[\mathbb{P}\left[\norm{\Sv^{\top}\zv}_{\infty} \geq 2\sqrt{\frac{\lambda \log(2D)}{D}}\norm{\zv}_2\right] \leq \frac{1}{2D}.\]
This completes the proof.
\end{proof}

\begin{lemma_boxed}
\label{lemma:near_democratic_dynamic_range_randomized_Hadamard}
For $\Sv = \Pv \Dv \Hv \in \Real^{d \times D}$, with probability at least $1 - \frac{1}{2D}$, the near-democratic embedding of $\zv$ with respect to $\Sv$ satisfies,
\begin{equation}
    \norm{\xv_{nd}}_{\infty} \leq 2\sqrt{\frac{\log\left(2D\right)}{D}}\norm{\zv}_2.
\end{equation}
\end{lemma_boxed}
\begin{proof}
For the purposes of this proof, let $\wv = \Pv^{\top}\zv = [w_1, \ldots, w_D]^{\top} \in \Real^D$ and $\uv = \Hv\Dv\wv = [u_1, \ldots,u_N]^{\top}$.
Note that $u_j$ is of the form $\sum_{i=1}^{D}a_iw_i$ with each $a_i = \pm\frac{1}{\sqrt{D}}$ chosen i.i.d.
A Chernoff-type argument gives,
\[\mathbb{P}[u_j > t] = \mathbb{P}\left[e^{\lambda u_j} > e^{\lambda t}\right] \leq e^{-\lambda t}\prod_{i=1}^{D}\mathbb{E}\left[e^{\lambda a_i w_i}\right].\]
Moreover,
\[\mathbb{E}\left[e^{\lambda a_i w_i}\right] = \frac{1}{2}e^{\frac{\lambda}{\sqrt{D}}w_i} + \frac{1}{2}e^{-\frac{\lambda}{\sqrt{D}}w_i} = \cosh\left(\frac{\lambda w_i}{\sqrt{D}}\right) \leq e^{\lambda^2w_i^2/(2D)},\] 
where the last inequality follows from a bound on hyperbolic cosine.
This gives us,
\[\mathbb{P}\left[u_j > t\right] \leq e^{\frac{\lambda^2}{2D}\norm{\wv}_2^2 - \lambda t}.\]
Letting $\lambda = tD/\norm{\wv}_2$ yields $\mathbb{P}\left[u_j > t\right] \leq e^{-t^2D/(2\norm{\wv}_2^2)}$.
Similarly, $\mathbb{P}\left[u_j < -t\right] \leq e^{-t^2D/(2\norm{\wv}_2^2)}$.
Since $\norm{\uv}_{\infty} = \max_{j \in [D], s \in \{\pm 1\}}s u_j$, union bound yields,
\[\mathbb{P}\left[\norm{\Hv\Dv\wv}_{\infty} > t\right] \leq e^{-\frac{t^2D}{2\norm{\wv}_2^2} + \log(2D)}.\]
Setting $t = 2\norm{\wv}_2\sqrt{\frac{\log(2D)}{D}}$ gives us,
\[\mathbb{P}\left[\norm{\Hv\Dv\wv}_{\infty} \leq 2\norm{\wv}_2\sqrt{\frac{\log(2D)}{D}}\right] \geq 1 - \frac{1}{2D}.\]
As $\wv = \Pv^{\top}\zv \implies \norm{\wv}_2 = \norm{\zv}_2$ and this completes the proof.
\end{proof}

The above lemmas imply that with high probability, near-democratic embeddings satisfy $\norm{\xv_{nd}}_{\infty} \leq O\left(\frac{\log D}{D}\right)$.
Lemma \ref{lem:near_democratic_quantizer_worst_case_error} utilizes this fact to give the worst-case quantization error of \textbf{Near-Democratic Quantization} as described in eq. \eqref{eq:near_democratic_quantization_shape} in the main paper.

\begin{lemma_boxed}[\textbf{Worst-case $\ell_2$-error of near-democratic quantization}]
\label{lem:near_democratic_quantizer_worst_case_error}
Given the frame $\Sv = \Pv \Dv \Hv \in \Real^{d \times D}$ and a uniform scalar quantizer $\text{Q}_{u,B}$, for any $\zv \in \Real^d$ such that $\norm{\zv}_2 \leq 1$, let $\ztv = \text{Q}_{nd,D}(\zv) = \Sv \cdot \text{Q}_{u,B}(\xv_{nd})$.
Then, for $\lambda = D/d$,
\begin{equation}
    \sup_{\norm{\zv}_2 \leq 1} \norm{\ztv - \zv}_2 \leq 2^{\Paren{2 - \frac{B}{\lambda}}}\sqrt{\log(2D)}.
\end{equation}
\end{lemma_boxed}
\begin{proof}
The proof is similar to that of Lemma \ref{lem:democratic_quantizer_worst_case_error} and makes use of Lemma \ref{lemma:near_democratic_dynamic_range_randomized_Hadamard}.
\end{proof}

\section{Proof of Theorem \ref{thm:NDQ_guarantee}: Near-Democratic Quantization (NDQ) guarantee}
\label{app:NDQ_guarantee_proof}

We refer to Lemma \ref{lem:near_democratic_quantizer_worst_case_error} from App. \ref{app:near_democratic_embeddings} that gives the worst-case quantization error for near-democratic quantizer whose input is a vector on the unit sphere $\Sbb^{d-1}$.
We consider $\Sv$ to be the randomized Hadamard construction $\Pv \Dv \Hv$.
Since we compute the near-democratic embedding of $\sv_{nd}$ that satisfies $\lVert \sv_{nd} \rVert_2 = 1$, Lemma \ref{lem:near_democratic_quantizer_worst_case_error} states that with probability at least $1 - \frac{1}{2D}$,
\begin{equation}
    \left\lVert \stv - \text{Q}_{nd,B}\Paren{\frac{\Xinv \yv}{\lVert \Xinv\yv \rVert_2}} \right\rVert_2 \leq 2^{\Paren{2 - \frac{B}{\lambda}}}\sqrt{\log (2D)},
\end{equation}
where $\lambda = D/d$ is the aspect ratio of $\Sv$, usually taken to be as close to $1$ as possible so that a Hadamard matrix of size $D \times D$ can be constructed.
The proof of Thm. \ref{thm:NDQ_guarantee} is similar to the proof of Thm. \ref{thm:DQ_guarantee} provided in App. \ref{app:DQ_guarantee_proof}, except for the analysis of the quantization error term $A_1$ in eq. \eqref{eq:risk_decomposition_DQ}.
Term $A_1$ is again decomposed as in eq. \eqref{eq:naive_term_A1_decomposition} for naive quantizers to get the term $\frac{2}{d}\gammah^2\lVert \Xinv\yv \rVert_2^2 \cdot \left\lVert \stv - \frac{\Xinv\yv}{\lVert \Xinv \yv \rVert_2} \right\rVert_2^2$, which is subsequently upper bounded as,
\begin{align*}
    \frac{2}{d}\gammah^2\lVert \Xinv\yv \rVert_2^2 \cdot \left\lVert \stv - \frac{\Xinv\yv}{\lVert \Xinv \yv \rVert_2} \right\rVert_2^2 &\leq \frac{2}{d}\cdot\frac{d\bhat^4}{\Paren{\bhat^2 + \frac{\sigma^2}{d}\xi}}\cdot 2^{2\Paren{2 - \frac{B}{\lambda}}}\log(2D) + O_P\Paren{\frac{1}{D}} \\
    &\leq \frac{32b^4\log (2\lambda d)}{b^2 + \frac{\sigma^2}{d}\xi}\cdot 2^{-\frac{2B}{\lambda}} + O_P\Paren{\frac{1}{\sqrt{d}}}
\end{align*}
Proceeding as before, a uniform upper bound over all models in $\frac{1}{d}\lVert \thetav \rVert_2^2 \leq c^2$ can be obtained as,

\begin{equation}
    \mathbb{P}\Paren{R(\thetatv, \thetav) > \frac{2c^2\sigma^2}{\sigma^2 + c^2\sigma_{\mathrm{min}}^2} + \frac{64c^4\sigma_{\mathrm{min}}^2\log(2\lambda d)}{\sigma^2 + c^2\sigma_{\mathrm{min}}^2}\cdot 2^{-\frac{2B}{\lambda}} + \frac{C_{nd}}{\sqrt{d}}} \to 0
\end{equation}

as $d \to \infty$ for some constant $C_{nd}$ independent of $d$.

\section{Non-asymptotic lower bound for finite dimension \texorpdfstring{$d$}{d}}
\label{app:finite_dimension_lower_bound}

To remove the asymptoticity in our lower bound result, we note that the only step that makes use of the fact $d \to \infty$ is the proof of Lemma \ref{lem:inequality_supremum_bayes_error} in App. \ref{app:proof_inequality_supremum_bayes_error}.
More specifically, we consider \eqref{eq:splitting_bayes error} that splits the Bayes' error over the whole of $\Real^d$ as the sum of integrals over the parameter space $\Thetav$ and $\Real^d\setminus\Thetav$ as,
\begin{equation}
    \int_{\thetav \in \Real^d} R(\Qsf(\Xv,\yv), \thetav) dF(\thetav) \leq \sup_{\thetav \in \Thetav} R(\Qsf(\Xv,\yv), \thetav) + \int_{\thetav \in \Thetaov}R(\Qsf(\Xv,\yv), \thetav) dF(\thetav)
\end{equation}
Recalling eq. \eqref{eq:upper_bound_bayesian_error_over_theta_complement} from App. \ref{app:proof_inequality_supremum_bayes_error}, when $d \centernot\rightarrow \infty$, this term can be upper bounded as before, i.e. 
\begin{align*}
    \int_{\thetav \in \Thetaov}R(\Qsf(\Xv,\yv), \thetav) dF_{\delta}(\thetav) &\leq \frac{2}{d}\sqrt{2}\exp\Paren{-\frac{d(1 - \delta^2)^2}{16\delta^4}}Cd + 2c^2\exp\Paren{-\frac{d(1- \delta^2)^2}{8\delta^4}},
\end{align*}
for some constant, independent of $d$ that can be evaluated.
This gives us, for a fixed $\delta \in (0,1)$,
\begin{align}
\label{eq:lower_bound_worst_case_error_finite_d}
   \sup_{\thetav \in \Thetav} R(\Qsf(\Xv,\yv), \thetav) \geq \int_{\thetav \in \Real^d} R(\Qsf(\Xv,\yv), \thetav) dF_{\delta}(\thetav) &- \frac{2}{d}\sqrt{2}\exp\Paren{-\frac{d(1 - \delta^2)^2}{16\delta^4}}Cd \nonumber \\
   &- 2c^2\exp\Paren{-\frac{d(1- \delta^2)^2}{8\delta^4}}
\end{align}
Moreover, for the choice of prior $\thetav \sim \Ncal\Paren{\mathbf{0}, c^2\delta^2\Iv_d}$, Lemma \ref{lem:lower_bound_MI_Gaussian_prior} can be modified to lower bound the first term in eq. \eqref{eq:lower_bound_worst_case_error_finite_d} as,
\begin{align}
\label{eq:lower_bound_worst_case_error_finite_d_2}
    \int_{\thetav \in \Real^d} R(\Qsf(\Xv,\yv), \thetav) dF_{\delta}(\thetav) \geq &\frac{c^2\delta^2\sigma^2}{\sigma^2 + c^2\delta^2\sigma_{\mathrm{min}}^2} + \frac{c^4\sigma_{\mathrm{min}}^2}{\sigma^2 + c^2\delta^2\sigma_{\mathrm{min}}^2}\cdot 2^{-2B}\nonumber \\ 
    &- \frac{2}{d}\sqrt{2}\exp\Paren{-\frac{d(1 - \delta^2)^2}{16\delta^4}}Cd - 2c^2\exp\Paren{-\frac{d(1- \delta^2)^2}{8\delta^4}}
\end{align}
Note that inequality \eqref{eq:lower_bound_worst_case_error_finite_d_2} holds true for any $\delta \in (0,1)$.
The particular choice of $\delta \in (0,1)$ can be optimized to maximize the right hand side expression so as to ensure that the lower bound is tight.
It is worthwhile to note that since the last two terms are exponentially decaying in $d$ while the first two terms are independent of $d$, for large enough (but finite) $d$, the lower bound is non-negative.
Moreover, for the optimal choice of $\delta$, it is difficult to obtain a neat closed form expression for the R.H.S. of eq. \eqref{eq:lower_bound_worst_case_error_finite_d_2}.

\section{Extension to general noise models}
\label{app:general_noise_models}

Although we assume model \ref{eq:noisy_planted_model} for deriving tight lower and upper bounds, our proposed model quantization algorithms are not limited to Gaussian noise.
For any arbitrary distribution of the noise $\vv$, suppose our quantized model is given by $Q(\Xinv)\yv$, where $Q$ can be $Q_{u,B}$, or $Q_{d,B}$, or $Q_{nd,B}$.
The $\ell_2$-risk can be upper bounded by,
\begin{equation}
    \frac{1}{d}\mathbb{E}\lVert Q(\Xinv\yv) - \thetav \rVert_2^2 \leq \frac{2}{d}\mathbb{E}\lVert \Xinv\yv - \thetav \rVert_2^2 + \frac{2}{d}\mathbb{E}\lVert Q(\Xinv\yv) - \Xinv\yv \rVert_2^2.
\end{equation}
The first term, $\mathbb{E}\lVert \Xinv\yv - \thetav \rVert_2^2 = \mathbb{E}\lVert \Xinv\vv \rVert_2^2 = \Tr{{\Xinv}^{\top}\Xinv \Cov{\vv}}$ contributes to the inherent learning risk, and is independent of the quantization scheme used.
For \textit{na\"ive quantization}, in the worst-case the second term is,
\begin{equation}
    \sup_{\Xv, \yv}\mathbb{E}\lVert Q_{u,B}(\Xinv\yv) - \Xinv\yv \rVert_2^2 = 2^{-2B}d.
\end{equation}
For \textit{democratic quantization}, from Lemma \ref{lem:democratic_quantizer_worst_case_error}, in the worst case, we have,
\begin{equation}
    \sup_{\Xv, \yv}\mathbb{E}\lVert Q_{u,B}(\Xinv\yv) - \Xinv\yv \rVert_2^2 \leq 2^{2\Paren{1 - \frac{B}{\lambda}}}K_u,
\end{equation}
and for \textit{near-democratic quantization}, from Lemma \ref{lem:near_democratic_quantizer_worst_case_error},
\begin{equation}
\sup_{\Xv, \yv}\mathbb{E}\lVert Q_{u,B}(\Xinv\yv) - \Xinv\yv \rVert_2^2 \leq 2^{2\Paren{2 - \frac{B}{\lambda}}}\log(2\lambda d).
\end{equation}
These results do not require the Gaussianity of noise $\vv$, and clearly, the scaling of the $\ell_2$-risk with dimension for \textbf{DQ} and \textbf{NDQ} are better than that of the na\"ive strategy.

\section{Extension to \texorpdfstring{$2$}{2}-Layer scalar output ReLU Neural Networks}
\label{app:NN_analysis}

A scalar output $2$-layer neural network can be represented as:
\begin{equation}
\label{eq:scalar_output_NN}
    f(\xv) = \wv^{\top}\Paren{\Wv\xv}_+ = \sum_{j=1}^{m}\Paren{\wv_j^\top\xv}_+w_j \in \Real.
\end{equation}
Here, $f(\xv)$ denotes the scalar output of the neural network, $(\cdot)_+ = \max\{\cdot, 0\}$ is the ReLU activation, $\Wv \in \Real^{m \times d}$ denotes the weight matrix for the first layer, $\xv \in \Real^d$ is the input, $m$ is the number of neurons in the first (hidden) layer.
$\{\wv_1^\top, \ldots, \wv_m^\top\} \in \Real^d$ denote the rows of the weight matrix $\Wv$, whereas $\wv \triangleq [w_1, \ldots, w_m]^{\top} \in \Real^m$ denote the weights of the second dense layer.
Note that for every $j$, we have $w_j = \lvert w_j \rvert \cdot \text{Sign}(w_j)$.
Consequently, eq. \eqref{eq:scalar_output_NN} can be re-parametrized by absorbing $\lvert w_j \rvert$ within the $(\cdot)_+$ function.
Setting $\whv_j = \wv_j\lvert w_j \rvert$, we get,
\begin{align}
\label{eq:scalar_output_NN_reparametrized}
    f(\xv) = \sum_{j=1}^{m}\Paren{\whv_j^\top\xv}_+\cdot \text{Sign}(w_j) &= \sum_{j : w_j \geq 0}\Paren{\whv_j^\top\xv}_+ - \sum_{j : w_j < 0}\Paren{\whv_j^\top\xv}_+ 
\end{align}
Let us denote the set $\{j : w_j \geq 0\}$ as $\mathrm{M}_+$ and $\{j : w_j < 0\}$ as $\mathrm{M}_-$.
Suppose we quantize the weights of each layer to get $\Wtv$ and $\wtv$ instead of the unquantized weights, $\Whv$ and $\whv$.
Here, $\Whv$ is the matrix whose rows are $\{\whv_1^{\top}, \ldots, \whv_m^{\top}\}$.
The output of the quantized neural network can be written as $\ft(\xv) = \sum_{j=1}^{m}\Paren{\wtv_j^\top\xv}_+\cdot \text{Sign}(\wt_j)$, where $\wtv_1^{\top}, \ldots, \wtv_m^{\top}$ are the rows of $\Wtv$, and $\wtv \triangleq [\wt_1, \ldots, \wt_m]$.
We then have the error in the scalar output of the quantized neural network as,
\begin{align*}
    \left\lvert \ft(\xv) - f(\xv)\right\rvert &= \left\lvert \sum_{j \in M_+}\left[\Paren{\whv_j^\top\xv}_+ - \Paren{\wtv_j^{\top}\xv}_+ \right] - \sum_{j \in M_-}\left[\Paren{\whv_j^\top\xv}_+ - \Paren{\wtv_j^{\top}\xv}_+ \right] \right\rvert \\
    &\leq \sum_{j \in M_+}\left\lvert \Paren{\whv_j^\top\xv}_+ - \Paren{\wtv_j^{\top}\xv}_+ \right\rvert + \sum_{j \in M_-}\left\lvert \Paren{\whv_j^\top\xv}_+ - \Paren{\wtv_j^{\top}\xv}_+ \right\rvert \\
    &= \sum_{j=1}^{m}\left\lvert \Paren{\whv_j^\top\xv}_+ - \Paren{\wtv_j^{\top}\xv}_+ \right\rvert \\
    &\leq \sum_{j=1}^{m}\left\lvert \Paren{\Paren{\whv_j - \wtv_j}^\top\xv}_+ \right\rvert \leq \sum_{j=1}^{m}\norm{\whv_j - \wtv_j}_2\cdot \norm{\xv}_2 \\
    &\leq \sqrt{m}\sqrt{\sum_{j=1}^{m}\norm{\whv_j - \wtv_j}_2^2}\cdot\norm{\xv}_2 = \sqrt{m}\norm{\Whv - \Wtv}_{\rm F}\norm{\xv}_2.
\end{align*}

We essentially vectorize $\Whv \in \Real^{m \times d}$ and apply \textbf{DQ} or \textbf{NDQ} for quantizing the resultant vector.
Since $\lVert \Whv - \Wtv \rVert_{\rm F} = \lVert \mathrm{Vec}(\Whv) - \mathrm{Vec}(\Wtv) \rVert_2$, this suggests us that vectorizing the weight matrices and subsequently using \textit{democratic} or \textit{near-democratic} quantization strategies should prove beneficial over the na\"ive quantizaition strategy.
For simplicity of analysis, we assume $\norm{\xv}_2 \leq 1$ and $\lVert\mathrm{Vec}(\Whv)\rVert_2 \leq 1$.
This assumption can be relaxed by taking in account appropriate scaling that depends on the Frobenius norm of the weight matrix.

From Lemma \ref{lem:near_democratic_quantizer_worst_case_error}, which is a guarantee on the quantization error of \textbf{NDQ} for any general vector input, if we use \textbf{NDQ} with a randomized Hadamard frame to quantize $\mathrm{Vec}(\Whv) \in \Real^{md}$ with a bit-budget of $B$-bits per weight (which we do in our numerical simulations), we get,
\begin{align*}
    \lVert \Wtv - \Whv \rVert_{\rm F} &= \lVert \mathrm{Vec}(\Wtv) - \mathrm{Vec}(\Whv) \rVert_2 \\
    &= \lVert Q_{nd,B}(\mathrm{Vec}(\Whv)) - \mathrm{Vec}(\Whv) \rVert_2 \leq 2^{(2- B)}\sqrt{\log(2md)}.
\end{align*}
This implies, to ensure $\left\lvert \fhat(\xv) - f(\xv) \right\rvert \leq \epsilon$ for some error tolerance $\epsilon > 0$, it suffices to ensure that
\begin{align}
\label{eq:threshold_bit_budget_NN}
     &4\sqrt{m}\cdot2^{-B}\sqrt{\log\Paren{2md}} \leq \epsilon \nonumber \\
     \implies &B \geq \log_2\Paren{\frac{4}{\epsilon}} + \frac{1}{2}\log_2 m + \frac{1}{2}\log_2\log_2 md
\end{align}
Let us define $B^*_{nd}(\epsilon)$ to be the minimum bit budget required to achieve a modeling error of $\epsilon$, i.e. to ensure $\lvert \ft(\xv) - f(\xv) \rvert \leq \epsilon$, where $\ft(\cdot)$ and $f(\cdot)$ denote the neural network functions with quantized and unquantized weights, respectively. 
Then, from inequality \eqref{eq:threshold_bit_budget_NN}, it follows that,
\begin{equation}
    B^*_{nd}(\epsilon) \leq \log_2\Paren{\frac{4}{\epsilon}} + \frac{1}{2}\log_2 m + \frac{1}{2}\log_2\log_2 md
\end{equation}

Inequality \eqref{eq:threshold_bit_budget_NN} gives us an upper bound on the minimum number of bits required to achieve a given error $\epsilon$.
A similar upper bound could be obtained for na\"ive quantization by noting that the worst-case na\"ive quantizer error is upper bounded by
\[\lVert Q_{n,B}(\mathrm{Vec}(\Whv)) - \mathrm{Vec}(\Whv) \rVert_2 \leq 2^{-B}\sqrt{md}.\]
This gives us,
\begin{equation}
    \sqrt{m}\cdot 2^{-B}\sqrt{d} \leq \epsilon \implies B^*_{n}(\epsilon) \leq \log_2\Paren{\frac{1}{\epsilon}} + \frac{1}{2}\log_2(md)  
\end{equation}
Similarly, for democratic quantizer, Lemma \ref{lem:democratic_quantizer_worst_case_error} tells us,
\begin{equation*}
    \lVert Q_{d,B}(\mathrm{Vec}(\Whv)) - \mathrm{Vec}(\Whv) \rVert_2 \leq 2^{\Paren{1 - B}}K_u,
\end{equation*}
giving us,
\begin{equation*}
    2\sqrt{m}\cdot 2^{-B}K_u \leq \epsilon \implies B_d^*(\epsilon) \leq \log_2\Paren{\frac{2K_u}{\epsilon}} + \frac{1}{2}\log_2m. 
\end{equation*}
The following Table \ref{tab:NN_compression_learning_code_comparison} summarizes the results of this section.
Note that the additive $\log_2(md)$ term for naive quantizer can be significantly larger than \textbf{DQ} or \textbf{NDQ}.

\begin{table*}[h!]
\begin{center}
\begin{small}
\begin{sc}
\renewcommand{\arraystretch}{1.8}
\begin{tabular}{lcccr}
\toprule

\textsc{\textbf{Quantization algorithm}} & \textsc{\textbf{Bit budget for $\epsilon$-modeling error}} \\
\midrule
Na\"ive & $B^*_{n}(\epsilon) \leq \log_2\Paren{\frac{1}{\epsilon}} + \frac{1}{2}\log_2(md)$ \\
Democratic & $B_d^*(\epsilon) \leq \log_2\Paren{\frac{2K_u}{\epsilon}} + \frac{1}{2}\log_2m$ \\
Near-Democratic & $B^*_{nd}(\epsilon) \leq \log_2\Paren{\frac{4}{\epsilon}} + \frac{1}{2}\log_2 m + \frac{1}{2}\log_2\log_2 (md)$\\
\bottomrule
\end{tabular}
\end{sc}
\end{small}
\end{center}
\vspace{-2mm}
\caption{\textsc{Comparison of various learning codes for quantizing neural network weights}}
\label{tab:NN_compression_learning_code_comparison}
\vspace{-4mm}
\end{table*}

\end{document}